\tikzstyle{rec} = [rectangle, text width=1cm, minimum   height=1cm,text centered, draw=black, fill=gray!25]
\tikzstyle{smrec} = [rectangle, text width=.5cm, minimum   height=.5cm,text centered, draw=black, fill=gray!10]
\tikzstyle{circ} = [circle, text width=1cm, minimum   height=1cm,text centered, draw=black, fill=gray!0, dashed]
\tikzstyle{arrow} = [thick,->, draw=blue,>=stealth]
\tikzstyle{cross} = [cross/.style={path picture={ 
\newtheorem{theorem}{Theorem}
\newtheorem{proposition}{Proposition}
\newtheorem{lemma}{Lemma}
\newtheorem{corollary}{Corollary}
\newtheorem{example}{Example}%
\newcommand{\cmark}{\ding{51}}%
\newcommand{\xmark}{\ding{55}}
\patchcmd{\algorithmic}{\addtolength{\ALC@tlm}{\leftmargin} }{\addtolength{\ALC@tlm}{\leftmargin}}{}{}
\algnewcommand{\algorithmicforeach}{\textbf{for each}}
\theoremstyle{thmstyleone}%
\newtheorem{definition}{Definition}%
\newcommand{\Cross}{\mathbin{\tikz [x=1.5ex,y=1.5ex,line width=.3ex] \draw (0,0) -- (1,1) (0,1) -- (1,0);}}%
\DeclareMathOperator*{\argmax}{arg\,max}
\DeclareMathOperator\supp{supp}
\newcommand{\tuple}[1]{\ensuremath{\langle #1 \rangle}}
\newcommand{\mGdefo}{\ensuremath{\tuple{G^{0}, G^{1}, \ldots, G^{\vert N \vert}}}}
\newcommand{\mGt}[1]{\ensuremath{mG^{(#1)}}}
\newcommand{\AD}[2]{\ensuremath{{\mathcal AD}^{#1} \boldsymbol{\left(\right.}#2\boldsymbol{\left.\right)}}}
\newcommand{\LAD}[1]{\ensuremath{{{\mathfrak L}_{\mathcal AD}} \left(#1\right)}}
\newcommand{\stirlingii}{\genfrac{\{}{\}}{0pt}{}}
\newcommand{\infl}      {\blacktriangleleft}    
\newcommand{\comp}      {\bowtie}               
\definecolor{pastelred}{rgb}{1.0, 0.41, 0.38}
\definecolor{pastelgreen}{rgb}{0.47, 0.87, 0.47}
\definecolor{pastelyellow}{rgb}{0.99, 0.99, 0.59}
\definecolor{pastelblue}{rgb}{0.68, 0.78, 0.81}
\definecolor{forestgreen(web)}{rgb}{0.13, 0.55, 0.13}
\definecolor{upsdellred}{rgb}{0.68, 0.09, 0.13}
\definecolor{yaleblue}{rgb}{0.06, 0.3, 0.57}
\definecolor{cambridgeblue}{rgb}{0.64, 0.76, 0.68}
\definecolor{beige}{rgb}{0.96, 0.96, 0.96}
\newlength{\subcolumnwidth}
\newcommand{\nextsubcolumn}[1][]{%
  \cr\noalign{\hfill}
  \if\relax\detokenize{#1}\relax\else\hsize=#1\setlength{\subcolumnwidth}{\hsize}\fi
}
\newcommand{\problem}[4]{
	\begin{table*}[h!]
	\centering
	\begin{tabular}{ | m{0.12\textwidth} | m{0.8\textwidth} | }
 		\hline
 		\multicolumn{2}{|l|}{\textsc{#1} \label{#2}}\\
 		\hline
 		\textbf{Input:}  & #3\\\hline
 		\textbf{Output:} & #4\\
		\hline
	\end{tabular}
	\end{table*}
}
\begin{document}
\title{Adaptation Procedure in Misinformation Games}

\author{
    Konstantinos Varsos${}^{\dagger}$ \\
    \texttt{varsosk@ics.forth.gr}
    \And
    Merkouris Papamichail${}^{\dagger,\S}$\\
    \texttt{mercoyris@ics.forth.gr}
    \And
    Giorgos Flouris${}^{\dagger}$ \\
    \texttt{fgeo@ics.forth.gr}
    \And
    Marina Bitsaki${}^{\S}$\\
    \texttt{ecbitsaki@gmail.com}
    \And
    {}\\
    ${}^{\dagger}$Institute of Computer Science, Foundation for Research and Technology-Hellas (FORTH)\\
    ${}^{\S}$Computer Science Department, University of Crete\\
}

\date{}

\maketitle

\begin{abstract}
We study interactions between agents in multi-agent systems, in which the agents are misinformed with regards to the game that they play, essentially having a subjective and incorrect understanding of the setting, without being aware of it. For that, we introduce a new game-theoretic concept, called misinformation games, that provides the necessary toolkit to study this situation. Subsequently, we enhance this framework by developing a time-discrete procedure (called the Adaptation Procedure) that captures iterative interactions in the above context. During the Adaptation Procedure, the agents update their information and reassess their behaviour in each step. We demonstrate our ideas through an implementation, which is used to study the efficiency and characteristics of the Adaptation Procedure.
\end{abstract}

\keywords{Misinformation Games, Adaptation Procedure, natural misinformed equilibrium, stable misinformed equilibrium}

\section{Introduction}\label{sec:introduction}

The importance of multi-agent systems has been heavily acknowledged by the research community and industry. A multi-agent system is a system composed of multiple interacting autonomous, self-interested, and intelligent agents\footnote{Note that we use the terms ``agent'' and ``player'' interchangeably throughout this paper.}, and their environment. 
In such settings, agents need to be incentivized to choose a plan of action, and game theory~\cite{Algorithmic_Game_Theory_book} provides a suitable framework for analyzing these interactions.

A usual assumption of game theory is that the game specifications (i.e., the rules of interaction, which include the number of players, their strategies and the expected payoff for each strategic choice) are common knowledge among the players. In other words, players have correct (although not necessarily complete) information regarding the game. 

However, in realistic situations, a reasoning agent is often faced with erroneous, misleading and unverifiable information regarding the state of the world or the possible outcomes of her actions (see games with misperception \cite{RaiffanLuce,BENNETT1980489,Feinberg2020}). 
This \emph{misinformation} may affect the interaction's outcomes.

Misinformation in an interaction can occur due to various reasons. For example, the designer may choose to deceive (some of) the agents in order to obtain improved social outcomes or for other reasons (this falls under the general area of mechanism design, see \cite{VFFB}). 
Or an agent may communicate deceptive information in order to lead the other agents to suboptimal choices and obtain improved outcomes (e.g., fake financial reports misleading investors \cite{accounting-frauds}). 
Furthermore, when the agents operate in a remote and/or hostile environment, noise and other random effects may distort communication, causing agents to receive a game specification that is different from the intended one (e.g., in the case of autonomous vehicles operating in Mars \cite{Brown2017AreMS}). 
Another scenario that could lead to misinformation refers to cases where the environment changes without the agents' knowledge, causing them to have outdated information regarding the rules of interaction (e.g., an accident causing a major and unexpected disruption in the flow of different roads in a city).
Last but not least, endogenous reasons (e.g., limited awareness, bounded computational capacity, cognitive restrictions, biases \cite{DBLP:conf/aaai/MeirP15} etc.) may cause agents to misinterpret the situation and assign their own (mistaken) payoffs to different actions.

Thereupon, some aspects of the situation, or the modeling and reasoning regarding the situation, leads players to incorporate a possibly incorrect viewpoint of the real aspects. Thus, they may miss crucial specifications, and interact relying on a restricted and incorrect perception of the situation. 
The key characteristic of the described scenario is that the players do not question the rules of interaction given to them; this differentiates this scenario from standard settings of games with other forms of uncertainty (such as Bayesian games \cite{Harsanyi:1967:GII:3218759.3218761,Zamir2009}, uncertainty theory \cite{Gao2013} etc.), in which players are well-aware of the fact that the information given to them is incomplete, uncertain or flawed in various ways, and this knowledge is incorporated in their reasoning (see also Figure \ref{fig: mG vs Bayesian}).
In other words, in misinformation games \emph{agents don't know that they don't know}, as opposed to incomplete or imperfect games where the \emph{agents know that they don't know}.

The study of settings where any participant has (possibly) wrong knowledge with regards to the real situation, has been recently considered in defining the concept of \emph{misinformation games} \cite{VFBF,VFFB}; this work falls under the hood of games with misperception \cite{RaiffanLuce,BENNETT1980489,Feinberg2020}.
In misinformation games, each player has a subjective view of the abstract game's specifications, that may not coincide with the specifications of the real interaction, modelling the fact that agents may operate under an erroneous specification.
For that, the proposed method agglomerates both the real situation (\emph{actual game}), and the subjective (misinformed) views of the players.
A key characteristic of misinformation games is that the equilibrium (i.e., the set of strategic choices where no player wants to deviate from) is determined by the subjective views of the players, rather than the actual game \cite{VFBF}. This is called the \emph{natural misinformed equilibrium} in \cite{VFBF}. On the contrary, the actual payoffs received by the players is determined by the actual game, and may be different than the expected ones.

Given the discrepancy between the actual and the perceived payoffs, a natural question is how the players will react upon their realisation that the received payoffs are different than expected.
However, the approach presented in \cite{VFBF,VFFB} focuses on one-shot interactions, and thus misses this crucial and interesting part of the problem.
To address this limitation, we \emph{incorporate in misinformation games an iterative time-discrete methodology, called the Adaptation Procedure, which models the evolution of the strategic behaviour of rational players in a misinformation game, as they obtain new information and update their (erroneous) game specifications}. The first steps towards this direction were held in \cite{PVF}; here we extend this approach.

More specifically, at each time point, players choose a strategic action (using the methodology of \cite{VFBF,VFFB}), receive the corresponding payoffs, and engulf them into their game specifications. The procedure is then repeated. 
Importantly, the information received in each time point may lead players to a different choice in the next time point (because they now operate under a different payoff matrix), so the procedure is iterative and stabilises when the players have no incentives to deviate from their current choices, based on what they know so far (which may or may not coincide with the actual game). 
When the procedure stabilizes we reach a refinement of the natural misinformed equilibrium, that we call the \emph{stable misinformed equilibrium}.

\begin{figure*}[t!]
\centering
  \begin{subfigure}[t]{0.475\textwidth}
  \centering
  \includegraphics[height=0.42\textwidth]{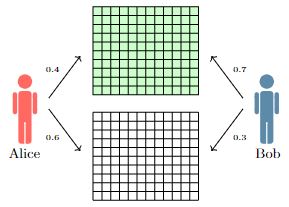}
    \caption{Bayesian game.}\label{fig:Bayesian games}
        \end{subfigure}
        \medskip
        \begin{subfigure}[t]{0.475\textwidth}  
\centering
\includegraphics[height=0.42\textwidth]{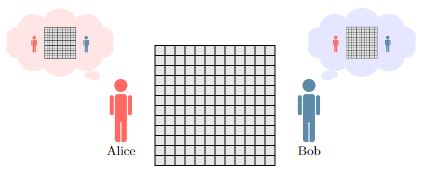}
    \caption{misinformation game.}\label{fig:mG graph}
        \end{subfigure}
        \caption[ Example of games without common and correct information. ]
        {Example of games without common and correct information.} 
        \label{fig: mG vs Bayesian}
\end{figure*}

The structure of this paper follows. In Section~\ref{sec:nfg} we present some preliminaries, including the concept of misinformation games. Then, in Section~\ref{sec:adaptation procedure} we introduce the Adaptation Procedure process, and in Section~\ref{sec:computing-adaptation-procedure} we present an algorithmic analysis and tools to compute the Adaptation Procedure and provide some experimental evaluation.
In Section~\ref{sec:related work} we review the literature related to this paper and conclude in Section~\ref{sec:conclusions}. 
Finally, we provide Appendix~\ref{appendix:omitted proofs} where we present all the proofs of our results.

This paper is an extended and revised version, combining and extending previous works of ours \cite{PVF,VFBF,VFFB}. More specifically, the main new contributions of the paper, compared to the previous ones, are the following:
\begin{itemize}
    \item we elaborate on and describe in detail the inflation process described superficially in \cite{VFBF} (see Propositions~\ref{prop:inflate-cons-hi}-\ref{prop:infl-algo-full}, and Algorithms \ref{algo:inflate_game}, \ref{algo:addplayer_inflate_game}, \ref{algo:addstrategy_inflate_game} in Subsection~\ref{subsec:inflation process-nfg}, as well as Algorithms \ref{algo:addgame}, \ref{algo:inflate_mG} in Subsection~\ref{sec:misinformation games});

    \item we provide new theoretical and computational results regarding the adaptation procedure (see Subsection~\ref{sec:adaptation-procedure: properties});

    \item we provide novel implementations for the Adaptation Procedure (including a parallel one), and new experimental results (see Section \ref{sec:computing-adaptation-procedure}).

\end{itemize}

\section{Normal-form misinformation games}\label{sec:nfg}

In this section we describe misinformation games; we first present a quick review of normal-form games.

\subsection{Normal-form games}\label{subsec:preliminaries-nfg}

In this paper, we consider \emph{normal-form games} $G$. A normal-form game includes a set of players $N$. Each player $i \in N$ has a set of pure strategies that is denoted by $S_{i}$, and the Cartesian product of all $S_{i}$s is the set of joint decisions of the players, which is denoted by $S$. Finally, each player $i$ has a payoff matrix $P_i \in \mathbb{R}^{ \vert S_1 \vert \times \ldots \times \vert S_{\vert N \vert} \vert }$, assigning a payoff to that player depending on the joint strategy decisions of all players. The payoff matrix of the game is $P = (P_1, \dots, P_{\vert N \vert})$. We follow the usual notation and we denote $G$ as a tuple $\langle N, S, P\rangle$.

If player $i$ randomly selects a pure strategy, then she plays a mixed strategy $\sigma_{i} = (\sigma_{i,1}, \ldots, \sigma_{i,\vert S_i \vert})$ which is a discrete probability distribution over $S_{ i}$. 
Let $\Sigma_{ i}$ be the set of all possible mixed strategies $\sigma_{ i}$. 
A strategy profile $\sigma = (\sigma_{ 1}, \ldots, \sigma_{ \vert N \vert})$ is an $\vert N \vert$-tuple in $\Sigma = \Sigma_{ 1} \times \ldots \times \Sigma_{ \vert N \vert}$. We denote by $\sigma_{ -i}$ the joint decision of all other players except for player $i$ in $\sigma$. The payoff function of player $i$ is defined as: $h_{ i}: \Sigma \to \mathbb{R}$, such that: 
\begin{equation}\label{eq:payoff_function}
h_{i}(\sigma) = \sum\limits_{k \in S_1}\dots \sum\limits_{j\in S_{\vert N \vert}}P_i(k, \dots, j) \cdot \sigma_{1, k} \cdot \ldots \cdot \sigma_{\vert N \vert,j},
\end{equation}
where $P_i(k,\dots,j)$ is the payoff of player $i$ in the pure strategy profile $(k, \dots, j)$. Further,
$h_{i}(\sigma_i,\sigma_{-i})$ represents the expected payoff of player $i$ as a function of $\sigma$. The social welfare function is derived from $h$ as $SW(\sigma) := \sum\nolimits_i h_i(\sigma)$.

In the following, we consider the concept of Nash equilibrium, that is a strategy profile in which no player
possesses an improving deviation. Formally, a strategy profile $\sigma^{ *} = (\sigma^*_{ 1}, \ldots, \sigma^{ *}_{ \vert N \vert } )$ is a Nash equilibrium, iff, for any $i$ and for any $\hat{\sigma}_i \in \Sigma_i$,
$h_{ i}(\sigma_{ i}^{ *}, \sigma_{ -i}^{ *}) \geq h_{ i}(\hat{\sigma}_{ i}, \sigma_{ -i}^{ *})$. Further, we denote by $NE(G)$ the set of Nash equilibria of $G$ (or simply $NE$, when $G$ is clear from the context). 

A useful notion in equilibrium computation is the support of a strategy $\sigma_i$, denoted by $\supp(\sigma_i)$, which is the set of pure strategies that are played with positive probability under $\sigma_i$. Formally, $\supp(\sigma_i) := \{j \in [\vert S_i \vert] : \sigma_{i,j} > 0\}$. 

Having at hand the definition of Nash equilibrium, authors in \cite{koutsoupias-papadimitriou:kts-pap} quantify the efficiency of Nash equilibrium by providing the metric of \emph{Price of Anarchy}. 
For that, they compare the behavior of the players in the worst case against the social optimum, based on a social welfare function. We denote by $opt$ the socially optimal strategy profile, i.e., $opt = \argmax_{\sigma} SW(\sigma)$. In terms of normal-form games, this becomes: \\

\noindent \textbf{Price of Anarchy}. Given a normal-form game $G$, the \index{metric!Price of Anarchy}\emph{Price of Anarchy (PoA)} is defined as
\begin{equation}\label{eq:PoA}
	PoA = \frac{SW(opt)}{min_{\sigma \in NE} SW(\sigma)}
\end{equation}

\subsection{Inflation process for normal form games}
\label{subsec:inflation process-nfg}

We consider the process of \emph{inflation}, which can be used to ``enlarge'' a normal form game, adding new players and strategies, in ways that do not alter the game's important properties. In particular, the \emph{inflated version} of a game $G$, say $G'$, is a ``bigger'' game (in terms of number of players and available strategies), such that the common players will have the same strategic behaviour and will receive the same payoffs in both games. 
The process of inflation will become relevant later, when we define ``canonical misinformation games'' (see Definition \ref{def:mis_games_special}). 

Algorithmically, to construct an inflated version of a game, we add strategies and players as appropriate, while respecting the above properties. To achieve this, when we add a strategy, we make it dominated (and thus irrelevant from a strategic perspective), and when we add a player, this player is dummy, in the sense that her actions do not affect the strategic choices of other players. 

Before defining inflation, we start with an auxiliary concept, namely \emph{compatible strategy profiles}. Given two games $G$ and $G'$, and two strategy profiles $\sigma$, $\sigma'$ of $G$, $G'$ respectively, we say that $\sigma$, $\sigma'$ are compatible if they agree (i.e., contain the same strategic choice) on all players and strategies that are common in the two games. Formally:

\begin{definition}
\label{def:compatible-str-prof}
Consider two normal-form games $G=\langle N, S, P \rangle$, $G' = \langle N', S', P' \rangle$, and let $S_i$ (respectively $S_i'$) be the set of strategies available for player $i\in N$ in $G$ (respectively $i \in N'$ in $G'$).
Consider also a strategy profile $\sigma = (\sigma_{ij})$ of $G$ and a strategy profile $\sigma' =(\sigma_{ij}')$ of $G'$.
Then, we say that $\sigma$, $\sigma'$ are \emph{compatible}, denoted by $\sigma \comp \sigma'$, if and only if
$\sigma_{ij} = \sigma_{ij}'$ for all $i \in N \cap N'$, $j\in S_i\cap S_i'$.
\end{definition}

It is easy to see that $\comp$ is an equivalence relation.
Now we can define the notion of inflation:

\begin{definition}
\label{def:inflated-nfgame}
Consider two normal-form games $G=\langle N, S, P \rangle$, $G' = \langle N', S', P' \rangle$, and let $S_i$ (respectively $S_i'$) be the set of strategies available for player $i\in N$ in $G$ (respectively $i \in N'$ in $G'$).
Then, we say that $G'$ is an \emph{inflated version} of $G$, denoted by $G \infl G'$, if and only if:
\begin{itemize}
    \item $N \subseteq N'$
    \item For all $i \in N$, $S_i \subseteq S_i'$
    \item $P_i(s) = P_i'(s')$ whenever $i \in N$ and $s=(s_j)_{j\in N} \in S$, $s'=(s_j')_{j\in N'} \in S'$ such that $s_j = s_j'$ for all $j\in N$.
    \item If $\sigma \in NE(G)$, then there exists some $\sigma' \in NE(G')$ such that $\sigma \comp \sigma'$.
    \item If $\sigma' \in NE(G')$, then there exists some $\sigma \in NE(G)$ such that $\sigma \comp \sigma'$.
\end{itemize}
\end{definition}

Some analysis on Definition \ref{def:inflated-nfgame} and its consequences is in order.
The first two bullets of the definition are straightforward: in order for $G'$ to be an inflated version of $G$, it must have at least the same players, and at least the same strategies for the (common) players.

The third bullet guarantees that the two games provide the same payoffs ``whenever possible''. In other words, $P$ is, in a sense, a submatrix of $P'$. The effect of this is that players that are common in $G$ and $G'$, while playing common (pure) strategies, will receive the same payoffs in the two games. An immediate consequence of this requirement is that the payoffs will also be the same when common players play any common strategy (pure or mixed):

\begin{proposition}
\label{prop:inflate-cons-hi}
Consider two normal-form games $G=\langle N, S, P \rangle$, $G' = \langle N', S', P' \rangle$, such that $G \infl G'$, and let $S_i$ (respectively $S_i'$) be the set of strategies available for player $i\in N$ in $G$ (respectively $i \in N'$ in $G'$). Take also some $i \in N$, and $\sigma$, $\sigma'$ strategy profiles in $G$, $G'$ respectively, such that $\sigma \comp \sigma'$. Then, $h_i(\sigma) = h_i'(\sigma')$, where $h_i$, $h_i'$ are the payoff functions of player $i$ in $G$, $G'$ respectively.
\end{proposition}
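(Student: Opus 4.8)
The plan is to unwind the definition of the expected-payoff function \eqref{eq:payoff_function} for both games and show that the two sums coincide after exploiting compatibility and the third bullet of Definition~\ref{def:inflated-nfgame}. First I would record the two consequences of $\sigma \comp \sigma'$ that drive everything. Since $N \subseteq N'$ and $S_l \subseteq S_l'$ for every $l \in N$, compatibility reads $\sigma_{lj} = \sigma'_{lj}$ for all $l \in N$ and $j \in S_l$. But $\sigma_l$ is a probability distribution on $S_l$, so $\sum_{j \in S_l}\sigma'_{lj} = \sum_{j\in S_l}\sigma_{lj} = 1$; as $\sigma'_l$ is itself a distribution on the larger set $S_l'$, this forces $\sigma'_{lj} = 0$ for every newly added strategy $j \in S_l'\setminus S_l$ of a common player. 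Thus the only pure profiles of $G'$ that contribute to $h_i'(\sigma')$ are those in which every common player $l \in N$ plays a strategy already present in $S_l$.

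Next I would decompose a generic pure profile $s'$ of $G'$ as $s' = (s,t)$, where $s = (s_l)_{l \in N}$ ranges over $S$ and $t = (t_l)_{l \in N'\setminus N}$ ranges over the strategy choices of the added players. Using the previous observation, $h_i'(\sigma')$ becomes a double sum over $s \in S$ and $t$, with weight $P_i'(s,t)\bigl(\prod_{l\in N}\sigma'_{l,s_l}\bigr)\bigl(\prod_{l\in N'\setminus N}\sigma'_{l,t_l}\bigr)$. The third bullet of Definition~\ref{def:inflated-nfgame} gives $P_i'(s,t) = P_i(s)$ for every $t$, since $(s,t)$ agrees with $s$ on all players of $N$; hence the payoff factor is independent of $t$ and pulls out of the inner sum. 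The inner sum $\sum_t \prod_{l\in N'\setminus N}\sigma'_{l,t_l}$ then factorizes as $\prod_{l\in N'\setminus N}\bigl(\sum_{t_l}\sigma'_{l,t_l}\bigr) = 1$, because each added player's strategy is a probability distribution. What remains is $\sum_{s\in S}P_i(s)\prod_{l\in N}\sigma'_{l,s_l}$, and a final application of compatibility ($\sigma'_{l,s_l} = \sigma_{l,s_l}$ for $l\in N$, $s_l\in S_l$) identifies this with $h_i(\sigma)$.

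I expect the main obstacle to be purely the bookkeeping of the index sets: carefully justifying that compatibility annihilates the probability mass on the new strategies of common players, and that the summation over the added players' profiles collapses to $1$ while leaving the payoff untouched. Once the profile is split as $(s,t)$ and the $t$-independence of $P_i'$ is invoked, the remaining manipulation is a routine rearrangement of finite sums and products.
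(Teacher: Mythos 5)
Your proof is correct and follows essentially the same route as the paper's: unwind both payoff sums, split each pure profile of $G'$ into its $N$-part and its $(N'\setminus N)$-part, invoke the third bullet of Definition~\ref{def:inflated-nfgame} to identify $P_i'$ with $P_i$, collapse the sum over the added players' choices to $1$, and finish with compatibility. If anything, your argument is slightly more careful than the paper's, since you explicitly justify that compatibility forces $\sigma'_{lj}=0$ on the newly added strategies $j\in S_l'\setminus S_l$ of common players — a point the paper's rearrangement over $\hat s \in S$ uses implicitly without comment.
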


A consequence of Proposition \ref{prop:inflate-cons-hi} is that, in $G'$, the actions of players that exist only in $G'$ do not affect the payoffs of players that are common in $G$ and $G'$.
Indeed, for any two strategy profiles $\hat \sigma$, $\tilde \sigma$ in $G'$, if these profiles agree on the actions of the common players, then there exists some strategy profile $\sigma$ in $G$ such that $\sigma \comp \hat \sigma$ and $\sigma \comp \tilde \sigma$, and thus, by Proposition \ref{prop:inflate-cons-hi}, $h_i'(\hat \sigma) = h_i(\sigma) = h_i'(\tilde \sigma)$.

As a result, $P'$ can be seen as consisting of multiple ``copies'' of $P$ along the extra dimensions imposed by the extra players.
Note that this bullet does not impose any immediate requirements for the payoffs of players that exist only in $G'$, or for the payoffs of common players when some (common) player employs a strategy that does not exist in $G$. 

A further immediate consequence of the third bullet is that, if two games have the same number of players, and the same number of strategies per player, then each one is an inflated version of the other if and only if they are the same game.

\begin{proposition}
\label{prop:infl-same-size-same-game}
Consider two normal-form games $G=\langle N, S, P\rangle$, $G'=\langle N', S', P' \rangle$, such that $N = N'$ and $S = S'$. Then, $G \infl G'$ if and only if $G=G'$.
\end{proposition}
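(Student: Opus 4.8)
The plan is to prove the two directions separately, with the forward (\emph{if}) direction following from the reflexivity of $\infl$ and the backward (\emph{only if}) direction resting almost entirely on the third bullet of Definition \ref{def:inflated-nfgame}.

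For the \emph{if} direction I would assume $G = G'$ and verify each bullet of Definition \ref{def:inflated-nfgame} directly: $N \subseteq N'$ and $S_i \subseteq S_i'$ hold with equality; the payoff condition holds because $P = P'$ and one may take $s = s'$; and the two $NE$ conditions hold by choosing $\sigma' = \sigma$ (respectively $\sigma = \sigma'$), using that $\comp$ is reflexive. This establishes $G \infl G'$.

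For the \emph{only if} direction I would assume $G \infl G'$ together with the hypotheses $N = N'$ and $S = S'$. Since the player sets and the strategy sets already coincide, the only possible discrepancy between $G$ and $G'$ lies in the payoff matrices, so it suffices to show $P_i = P_i'$ for every $i \in N$. The key step is to invoke the third bullet of Definition \ref{def:inflated-nfgame} with the choice $s' = s$: because $S = S'$, any pure strategy profile $s \in S$ is also a profile of $G'$, and the compatibility requirement $s_j = s_j'$ for all $j \in N$ is then satisfied trivially. The bullet therefore yields $P_i(s) = P_i'(s)$ for every $i \in N$ and every $s \in S$. Since $S = S'$ also guarantees that $P_i$ and $P_i'$ are indexed over the same set of pure profiles, agreement on every entry is exactly the matrix equality $P_i = P_i'$; hence $P = P'$ and $G = G'$.

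I do not expect a genuine obstacle here: the $NE$-preservation bullets play no role in the \emph{only if} direction, and the whole argument reduces to the observation that the third bullet, under equal player sets and strategy sets, is forced to compare each entry of $P$ with the corresponding entry of $P'$. The only point that warrants a line of care is confirming that the two matrices share the same index set, so that pointwise agreement is literally matrix equality; this is immediate from $S = S'$.
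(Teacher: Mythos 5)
Your proposal is correct and follows essentially the same route as the paper: the forward direction is reflexivity of $\infl$ (checking the bullets of Definition \ref{def:inflated-nfgame} trivially), and the reverse direction applies the third bullet with $s' = s$ under the hypotheses $N = N'$ and $S = S'$ to force $P = P'$. No gaps; the $NE$-preservation bullets are indeed not needed, just as in the paper's argument.
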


The last two bullets on Definition \ref{def:inflated-nfgame} ensure that the strategic behaviour of players in both games is the ``same''. In particular, the fourth bullet guarantees that any Nash equilibrium $\sigma$ of $G$ will have a ``counterpart'' in $G'$, i.e., there will be a Nash equilibrium in $G'$, say $\sigma'$, in which all common players employ the same strategies as in $\sigma$ (as usual, we don't care about the actions of the non-common players).
Analogously, the fifth bullet guarantees that any Nash equilibrium $\sigma'$ in $G'$ will have a ``counterpart'' in $G$, i.e., a Nash equilibrium, say $\sigma$ in which the common players employ the same strategies as in $\sigma'$. Note that this requirement disallows the existence of Nash equilibria in $G'$ for which any common player employs a non-common strategy with a probability higher than $0$, because, in such a strategy profile, the probabilities assigned to the common strategies do not sum to $1$, and thus there cannot exist a compatible strategy profile in $G$.
Note also the use of the compatibility relation between strategy profiles to formalise the notion of ``counterpart''.

An interesting consequence of Definition \ref{def:inflated-nfgame}, which combines the third and fourth bullet, is the following:

\begin{proposition}
\label{prop:infl-all-NE}
Consider two normal-form games $G=\langle N, S, P \rangle$, $G' = \langle N', S', P' \rangle$, such that $G \infl G'$.
Then, $\sigma \in NE(G)$ implies that $\sigma' \in NE(G')$ for all $\sigma \comp \sigma'$.
\end{proposition}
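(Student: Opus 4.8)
The plan is to fix $\sigma \in NE(G)$ together with an arbitrary strategy profile $\sigma'$ of $G'$ satisfying $\sigma \comp \sigma'$, and to show that no player of $N'$ has an improving deviation at $\sigma'$. The engine of the argument is the fourth bullet of Definition~\ref{def:inflated-nfgame}: it hands me a \emph{reference} equilibrium $\sigma^\ast \in NE(G')$ with $\sigma \comp \sigma^\ast$. Since $\comp$ is an equivalence relation, chaining $\sigma' \comp \sigma$ with $\sigma \comp \sigma^\ast$ yields $\sigma' \comp \sigma^\ast$, so $\sigma'$ and $\sigma^\ast$ agree on every common player/common strategy. Moreover, because every strategy available to a player $i\in N$ in $G$ is common, the total mass that $\sigma$ places on $i$'s common strategies is $1$; compatibility then forces $\sigma'$ and $\sigma^\ast$ to assign $i$ \emph{exactly the same} mixed strategy (in particular, zero probability on any non-common strategy). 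Thus the two profiles coincide on the whole block of common players and differ at most on the players that live only in $G'$.

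For a common player $i\in N$ this already closes the case. Let $\hat\sigma_i\in\Sigma_i'$ be any candidate deviation. By Proposition~\ref{prop:inflate-cons-hi} and the remark following it, the payoff of $i$ in $G'$ is insensitive to the actions of the non-common players and depends only on the (common) strategies of the players of $N$; since $\sigma'$ and $\sigma^\ast$ share that block, I get $h_i'(\sigma') = h_i'(\sigma^\ast)$ and $h_i'(\hat\sigma_i,\sigma'_{-i}) = h_i'(\hat\sigma_i,\sigma^\ast_{-i})$. As $\sigma^\ast\in NE(G')$ gives $h_i'(\sigma^\ast)\geq h_i'(\hat\sigma_i,\sigma^\ast_{-i})$, the same inequality transfers verbatim to $\sigma'$, so $i$ has no improving deviation. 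This is the part that genuinely ``combines the third and fourth bullets''.

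The main obstacle is the non-common players $j\in N'\setminus N$, for whom $\sigma'$ and $\sigma^\ast$ need not agree, so the transfer trick above is unavailable: I cannot equate $h_j'(\sigma')$ with $h_j'(\sigma^\ast)$ because the conditioning strategies of the \emph{other} non-common players have changed between the two profiles. To discharge this case I would appeal to the inflation construction itself, arguing that each added player is a \emph{dummy} whose own payoff is constant in her strategy once the others are fixed, so that every $\hat\sigma_j$ is trivially a best response and the equilibrium inequality holds with equality. I expect this to be the delicate step, since indifference of the non-common players is the property baked into the algorithmic construction of inflated games (added players are dummy, added strategies are dominated) rather than something forced by the bare payoff/equilibrium bullets; I would therefore check carefully that the hypotheses in force guarantee this indifference for every $j\in N'\setminus N$, and otherwise record it as an explicit standing assumption. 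Combining the two cases then gives $\sigma'\in NE(G')$ for every $\sigma'$ with $\sigma\comp\sigma'$, as claimed.
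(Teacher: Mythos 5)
Your strategy is the same as the paper's: extract a reference equilibrium $\sigma^\ast$ from the fourth bullet of Definition~\ref{def:inflated-nfgame}, observe that any compatible $\sigma'$ agrees with $\sigma^\ast$ on the whole common block (your argument that the common players' mass on common strategies already sums to $1$ is correct and worth spelling out), and transfer payoffs via Proposition~\ref{prop:inflate-cons-hi}. The paper's proof stops at ``$h_i'(\hat\sigma)=h_i(\sigma)=h_i'(\tilde\sigma)$, [so] the result follows easily from the definition of Nash equilibrium'', which silently elides exactly the two issues at stake. The gap you flag for players $j\in N'\setminus N$ is genuine and is not repaired anywhere in the paper's argument: the five bullets constrain $P'$ only at pure profiles where every common player plays a common strategy, and constrain $NE(G')$ only up to compatibility, so nothing forces an added player to be indifferent. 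Indeed one can exhibit $G\infl G'$ satisfying all five bullets in which a new player strictly prefers one of her strategies regardless of the others; a profile compatible with $\sigma\in NE(G)$ in which she plays the other strategy is then not in $NE(G')$, so the proposition as literally stated fails. It does hold for inflations produced by \texttt{AddPlayer}/\texttt{AddStrategy}, where new players are genuinely dummy and new strategies strictly dominated, so your instinct to record dummy-ness as an explicit standing assumption is the correct repair rather than a cautious hedge.

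There is a second gap that you do not flag, in the common-player case you consider closed. Your transfer step $h_i'(\hat\sigma_i,\sigma'_{-i})=h_i'(\hat\sigma_i,\sigma^\ast_{-i})$ appeals to the remark following Proposition~\ref{prop:inflate-cons-hi}, but that remark applies only to profiles of $G'$ that are compatible with \emph{some} profile of $G$, i.e.\ profiles in which every common player places all mass on common strategies. If the candidate deviation $\hat\sigma_i$ puts positive probability on some $j\in S_i'\setminus S_i$, the entries $P_i'(\cdot)$ that enter these expected payoffs are not constrained by the third bullet and may depend on the non-common players' actions, so the two deviation payoffs need not coincide and the Nash inequality does not transfer. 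This case, like the previous one, is closed only by the dominance property built into the algorithmic construction of inflated games, not by Definition~\ref{def:inflated-nfgame} alone; under the same standing assumption you propose, both halves of your argument go through.
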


\noindent Clearly, the relation $\infl$ is a partial order, as shown in the proposition below:

\begin{proposition}
\label{prop:infl-partial order}
The relation $\infl$ is a partial order, i.e., for any normal-form games $G_1, G_2, G_3$ the following hold:
\begin{enumerate}
    \item $G_1 \infl G_1$ (reflexivity)
    \item If $G_1 \infl G_2$ and $G_2 \infl G_1$ then $G_1 = G_2$ (antisymmetry)
    \item If $G_1 \infl G_2$ and $G_2 \infl G_3$ then $G_1 \infl G_3$ (transitivity)
\end{enumerate}
\end{proposition}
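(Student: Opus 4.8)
The plan is to verify, one by one, the three defining properties of a partial order by unwinding Definition~\ref{def:inflated-nfgame}. Reflexivity and antisymmetry are short; the transitivity of $\infl$ is where the real work lies, and within it the payoff-matching (third) bullet is the crux.

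For \emph{reflexivity}, I would instantiate Definition~\ref{def:inflated-nfgame} with $G' = G = G_1$. The inclusions $N \subseteq N$ and $S_i \subseteq S_i$ hold trivially, the payoff condition reduces to $P_i(s) = P_i(s)$ (since $N = N'$ forces $s = s'$ once they agree on all $j \in N$), and the two Nash-equilibrium bullets follow by taking $\sigma' = \sigma$ and using reflexivity of $\comp$ (recall $\comp$ is an equivalence relation). For \emph{antisymmetry}, suppose $G_1 \infl G_2$ and $G_2 \infl G_1$. The first two bullets of each direction give $N_1 \subseteq N_2$, $N_2 \subseteq N_1$, hence $N_1 = N_2$; and $S_i^{1} \subseteq S_i^{2}$, $S_i^{2} \subseteq S_i^{1}$ for every common player, hence $S^{1} = S^{2}$. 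With the two games now sharing the same players and strategy sets, I would invoke Proposition~\ref{prop:infl-same-size-same-game} directly: since $G_1 \infl G_2$ together with $N_1 = N_2$, $S^{1} = S^{2}$, that proposition yields $G_1 = G_2$.

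For \emph{transitivity}, assume $G_1 \infl G_2$ and $G_2 \infl G_3$; I must establish the five bullets for $G_1 \infl G_3$. The player inclusion $N_1 \subseteq N_3$ and, for each $i \in N_1$, the strategy inclusion $S_i^{1} \subseteq S_i^{3}$ follow by chaining the corresponding inclusions through $G_2$. The two Nash-equilibrium bullets also chain cleanly: given $\sigma \in NE(G_1)$, the fourth bullet of $G_1 \infl G_2$ produces $\sigma'' \in NE(G_2)$ with $\sigma \comp \sigma''$, and the fourth bullet of $G_2 \infl G_3$ then produces $\sigma''' \in NE(G_3)$ with $\sigma'' \comp \sigma'''$; transitivity of the equivalence relation $\comp$ gives $\sigma \comp \sigma'''$, as required (the fifth bullet is symmetric).

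The third bullet is the main obstacle. Given $i \in N_1$ and pure profiles $s \in S^{1}$, $\tilde s \in S^{3}$ that agree on all $j \in N_1$, I want the payoff of player $i$ to coincide in $G_1$ and $G_3$. The natural route is to manufacture an intermediate profile $s'' \in S^{2}$ that is compatible with $s$ on $N_1$ and with $\tilde s$ on $N_2$, apply the third bullet of $G_1 \infl G_2$ to match $i$'s payoff in $G_1$ and $G_2$, and then the third bullet of $G_2 \infl G_3$ to match it in $G_2$ and $G_3$. The difficulty is that this intermediate profile may fail to exist: a player $j \in N_2 \setminus N_1$ could, in $\tilde s$, select a strategy lying in $S_j^{3}$ but not in $S_j^{2}$, so there is no legal $G_2$-profile agreeing with $\tilde s$ on all of $N_2$, and the second matching step does not directly apply. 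Overcoming this is exactly where the argument must do something nontrivial: I would try to show that such ``extra'' strategies of the intermediate players cannot alter the payoff of any player $i \in N_1$, strengthening the consequence of Proposition~\ref{prop:inflate-cons-hi} (which rules out dependence on genuinely \emph{new players} but not, a priori, on new \emph{strategies} of common players). This is the step I expect to require the most care, and possibly an additional structural property of inflation beyond the literal statement of the third bullet.
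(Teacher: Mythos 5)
Your reflexivity and antisymmetry arguments coincide with the paper's (the paper likewise dispatches antisymmetry via Proposition~\ref{prop:infl-same-size-same-game} after deriving equality of the player and strategy sets), and your chaining of the first, second, fourth and fifth bullets for transitivity is exactly what the paper does. The problem is the third bullet, and there your proposal stops short of a proof: you correctly observe that an intermediate profile $s^2 \in S^2$ agreeing with $s^3$ on all of $N^2$ need not exist when some $j \in N^2 \setminus N^1$ plays, in $s^3$, a strategy in $S_j^3 \setminus S_j^2$, and you leave the resolution as something you ``would try to show.'' That is a genuine gap, not a presentational one. It is worth knowing that the paper's own proof makes precisely the move you are worried about --- it simply picks a tuple $s^2=(s_j^2)_{j\in N^2} \in S^2$ with $s_j^2 = s_j^3$ for all $j \in N^2$, without justifying that such a tuple exists --- so your concern identifies a real defect rather than overlooking an easy argument.

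Moreover, the patch you sketch (showing that the extra strategies of intermediate players cannot affect the payoffs of players in $N^1$) cannot be carried out from Definition~\ref{def:inflated-nfgame} alone. The third bullet of $G_2 \infl G_3$ constrains $P^3$ only at profiles whose restriction to $N^2$ lies in $S^2$; at a profile where some $j \in N^2\setminus N^1$ uses a strategy in $S_j^3 \setminus S_j^2$, the payoff of a player $i \in N^1$ is left free, and the equilibrium bullets do not pin it down. Concretely: let $G_1$ be a one-player game with a single strategy $a$ and payoff $0$; let $G_2$ add a dummy player $2$ with single strategy $x$ and payoffs $(0,0)$ at $(a,x)$; let $G_3$ add a strategy $y$ for player $2$ with payoffs $(5,-1)$ at $(a,y)$. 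Then $G_1 \infl G_2$ and $G_2 \infl G_3$ (the new strategy is strictly worse for player $2$, so the unique Nash equilibrium is preserved and the fourth and fifth bullets hold), yet the third bullet of $G_1 \infl G_3$ would require $P_1^1(a) = P_1^3(a,y)$, i.e.\ $0 = 5$. So the transitivity claim fails as literally stated; completing this step requires either weakening the third bullet of Definition~\ref{def:inflated-nfgame} (e.g.\ asserting the payoff identity only when the common players of the larger game restrict to a legal profile of the smaller one) or restricting attention to games produced by Algorithms~\ref{algo:addplayer_inflate_game} and~\ref{algo:addstrategy_inflate_game}. Your closing remark that the step may need ``an additional structural property of inflation beyond the literal statement of the third bullet'' is exactly right.
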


The next question we will consider is how we create an inflated version of a game. For this, we developed the $\texttt{InflateGame}$ algorithm (Algorithm \ref{algo:inflate_game}), along with two supplementary routines, $\texttt{AddPlayer}$ (Algorithm~\ref{algo:addplayer_inflate_game}) and $\texttt{AddStrategy}$ (Algorithm~\ref{algo:addstrategy_inflate_game}). In Algorithm \ref{algo:inflate_game}, the input is a finite normal-form game $G$, a set of players $N'$ and a strategy space $S'$, and the goal is to enlarge (i.e., inflate) $G$ so as to have $N'$ players and strategy space $S'$. In lines \ref{algo:addplayer_inflate_game N'}-\ref{algo:addplayer_inflate_game P'} we add players to the game using Algorithm~\ref{algo:addplayer_inflate_game}, and in lines \ref{algo: addstrategy_inflate_game start for}-\ref{algo: addstrategy_inflate_game end for} we embed the strategy space using Algorithm~\ref{algo:addstrategy_inflate_game}.

In supplementary Algorithm~\ref{algo:addplayer_inflate_game}, we add a new player $i$ in $G$. Besides the enlargement of the set of players (line \ref{algo:addplayer_inflate_game N'}), we make sure that the strategy space is properly modified in order to support a (single) choice for the new player (line \ref{algo:addplayer_inflate_game S'}). We modify the payoff matrices so as to provide a payoff value for the new player (line \ref{algo:addplayer_inflate_game P'}). More specifically, as each entry of $P$ is an array, we enlarge each such array by one dimension. This is accomplished in line \ref{algo:addplayer_inflate_game P'} of the algorithm as follows: given an array $a \in \mathbb{R}^n$ we multiply it from the right with the matrix $[I_{n \times n} \mathbb{0}_{n \times 1}] \in \mathbb{R^{n \times n + 1}}$. This provides a new array with $n+1$ dimensions. We perform this operation for every entry of $P$. Observe that the new player will receive a zero payoff for any strategy profile.

Finally, in supplementary Algorithm~\ref{algo:addstrategy_inflate_game}, we add a new strategy $j$ for player $i$ in $G$. Again, we enhance the set of strategies $S_i$ of player $i$ in order to support the new strategic choice $j$, resulting to a new set of strategies $S'_i$ (line \ref{algo: addstrategy_inflate_game S'}). Afterwards, we modify the payoff matrices so as to provide a payoff value for the new strategy. For that, first we initialize the matrix $P'$ with dimensions $\vert S_{-i} \times S'_i \vert$ (line \ref{algo: addstrategy_inflate_game P'}). Then, in lines \ref{algo: addstrategy_inflate_game start for}-\ref{algo: addstrategy_inflate_game end for} we exhaustively run through all the pure strategy profiles $s$ of the strategic space $S_{-i} \times S'_i$. If $s \in S$ then $P'$ has the same values as in $P$, otherwise $P'$ has values less than the minimum value, $m$ (line \ref{algo: addstrategy_inflate_game m}), of $P$ for all players (line \ref{algo: addstrategy_inflate_game P' inside else}), in order to make this strategy dominated and thus irrelevant with respect to the Nash equilibrium, as required by the definition; this is ensured by the operation $m \mathbb{1}_{\vert N \vert}$.
Observe that this addition affects all axes in the payoff matrices. Also, the payoff values for the pure strategy profiles that do not include $j$ (aka indexes in $P$) are unchanged, but the remaining entries of the payoff matrices have to be such that they do not affect the strategic behavior of the players.

\begin{algorithm}[H]
\caption{$\texttt{InflateGame}(G,N',S')$}\label{algo:inflate_game}
{
\begin{algorithmic}[1]
\Require 
 \Statex $G = \langle N, S, P\rangle$
 \Statex $N'$ such that $N \subseteq N'$, and $\vert N' \vert < +\infty$
 \Statex $S'$ such that, $\forall i \in N'$, $S_i \subseteq S'_i$ and $\vert S'_i \vert < +\infty$ 

\Statex
 \For{$i \in N'\backslash N$}
 \State Set $G \leftarrow \texttt{AddPlayer}(G, i)$
 \EndFor

 \For{$i \in N'$}
 \For{$j \in S_i'\backslash S_i$}
 \State Set $G \leftarrow \texttt{AddStrategy}(G,j,i)$
 \EndFor
 \EndFor

\State \textbf{return} $\langle N', S', P'\rangle$
\end{algorithmic}
}
\end{algorithm}
\vspace{-2.1em}
\noindent \begin{minipage}{0.465\textwidth}
\vspace{-6.4em}
\begin{algorithm}[H]
\caption{$\texttt{AddPlayer}(G,i)$}\label{algo:addplayer_inflate_game}
{
\begin{algorithmic}[1]
\Require 
\Statex A game $G = \langle N, S, P\rangle$ 
\Statex A player $i \not\in N$
\Statex
\State $N' \leftarrow N \cup \{i\}$ \label{algo:addplayer_inflate_game N'}
\State $S' \leftarrow S \times \{1\}$ \label{algo:addplayer_inflate_game S'}
\State $P'_j \leftarrow P_j [ I_{\vert P_j \vert \times \vert P_j \vert} \enspace \mathbb{0}_{\vert P_j \vert \times 1} ]$, \label{algo:addplayer_inflate_game P'}
\Statex \hspace{9em} $\forall j $ entry of $P$
\State \textbf{return} $\langle N', S', P'\rangle$ \label{algo:addplayer_inflate_game return}
\end{algorithmic}
}
\end{algorithm}
\end{minipage}
\hfill
\begin{minipage}{0.465\textwidth}
\begin{algorithm}[H]
\caption{$\texttt{AddStrategy}(G,j,i)$}\label{algo:addstrategy_inflate_game}
{
\begin{algorithmic}[1]
\Require 
\Statex A game $G = \langle N, S, P\rangle$
\Statex A player $i\in N$, and a strategy $j \notin S_i$ 
\Statex
\State $m \leftarrow \min_{s \in S, k \in N} \{P_k(s)\} - 1$ \label{algo: addstrategy_inflate_game m}
\State $S'_i \leftarrow S_i \cup \{j\}$ \label{algo: addstrategy_inflate_game S'}
\State $P' \leftarrow \mathbb{0}_{S_{-i} \times S'_i}$ \label{algo: addstrategy_inflate_game P'}
\For{$s \in S_{-i} \times S'_i $} \label{algo: addstrategy_inflate_game start for}
\If{$s \in S$} \label{algo: addstrategy_inflate_game start if}
\State $P'(s) \leftarrow P(s)$ \label{algo: addstrategy_inflate_game P' inside if}
\Else
\State $P'(s) \leftarrow m \mathbb{1}_{\vert N \vert}$ \label{algo: addstrategy_inflate_game P' inside else}
\EndIf \label{algo: addstrategy_inflate_game end if}
\EndFor \label{algo: addstrategy_inflate_game end for}
\State \textbf{return} $\langle N, S', P'\rangle$ \label{algo: addstrategy_inflate_game return}
\end{algorithmic}
}
\end{algorithm}
\end{minipage}

We can show that these algorithms correctly produce an inflated version of a game. Given the transitivity of $\infl$, the results below hold also for repetitive applications of the above algorithms.

\begin{proposition}
\label{prop:infl-algo-add-player}
Let $G$ be a normal form game, and let $G'$ be the normal form game that is the output of $\texttt{AddPlayer}(G,i)$ (Algorithm~\ref{algo:addplayer_inflate_game}). Then $G \infl G'$.
\end{proposition}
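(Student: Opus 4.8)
The plan is to verify, one by one, the five conditions of Definition \ref{def:inflated-nfgame} for the pair $(G, G')$, where $G' = \langle N', S', P' \rangle$ is the output of $\texttt{AddPlayer}(G,i)$. The first three (structural) bullets should follow almost immediately from inspecting the algorithm. Line \ref{algo:addplayer_inflate_game N'} gives $N' = N \cup \{i\} \supseteq N$; line \ref{algo:addplayer_inflate_game S'} sets $S' = S \times \{1\}$, so each common player keeps exactly her old strategy set ($S_j = S_j'$ for $j \in N$) while the new player $i$ receives the singleton strategy set $\{1\}$; and line \ref{algo:addplayer_inflate_game P'}, by right-multiplying each payoff entry with $[I \; \mathbb{0}]$, leaves every existing payoff value untouched while appending a zero coordinate for player $i$. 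Hence, for any common player $j$ and any pure profiles $s \in S$, $s' \in S'$ agreeing on $N$, we get $P_j(s) = P_j'(s')$, which is the third bullet.

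The central observation, which drives the last two bullets, is that player $i$ has a single available strategy. Consequently every mixed strategy profile $\sigma'$ of $G'$ assigns probability $1$ to strategy $1$ of player $i$, and the map sending a profile $\sigma$ of $G$ to its extension $\sigma'$ (with player $i$ playing $1$ with certainty) is a bijection between the profiles of $G$ and $G'$ that always produces compatible profiles, $\sigma \comp \sigma'$. Moreover, since the extra axis of $P'$ has length $1$ and contributes a factor $\sigma_{i,1} = 1$ in \eqref{eq:payoff_function}, the expected payoff of every common player is preserved: $h_j(\sigma) = h_j'(\sigma')$ for all $j \in N$ whenever $\sigma \comp \sigma'$. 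This is exactly the content of Proposition \ref{prop:inflate-cons-hi}, whose argument relies only on the payoff agreement just established, so there is no circularity in invoking it here.

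With payoff preservation in hand, the two equilibrium bullets follow symmetrically. For the fourth bullet, given $\sigma \in NE(G)$, I would take its extension $\sigma'$: player $i$ has no alternative strategy and so cannot deviate, while for each common player $j$ any candidate deviation $\hat\sigma_j$ in $G'$ corresponds to a compatible deviation in $G$, so $h_j'(\hat\sigma_j, \sigma_{-j}') = h_j(\hat\sigma_j, \sigma_{-j}) \le h_j(\sigma) = h_j'(\sigma')$; hence $\sigma' \in NE(G')$. For the fifth bullet, given $\sigma' \in NE(G')$, I would let $\sigma$ be its (forced) restriction to the common players; the same payoff identity shows that any improving deviation for a common player in $G$ would transfer to an improving deviation in $G'$, contradicting $\sigma' \in NE(G')$, so $\sigma \in NE(G)$.

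The only point requiring slight care — and thus the main, though mild, obstacle — is the equilibrium condition for the newly added player $i$: I must confirm that having a single strategy makes her trivially best-responding in $G'$, so that she never obstructs an equilibrium. Since $\Sigma_i$ is a singleton, the inequality $h_i'(\sigma_i', \sigma_{-i}') \ge h_i'(\hat\sigma_i, \sigma_{-i}')$ holds vacuously (the only admissible $\hat\sigma_i$ equals $\sigma_i'$), and indeed her payoff is constantly $0$ by construction. Once this is noted, all five bullets hold and we conclude $G \infl G'$.
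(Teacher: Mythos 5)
Your proof is correct and follows the same basic approach as the paper's, which simply observes that the added player is a dummy (zero payoffs, hence indifferent) and so does not affect the outcome; your version is considerably more complete, checking all five bullets of Definition~\ref{def:inflated-nfgame} explicitly. The one nuance worth noting is that you hinge the equilibrium bullets on the new player having a \emph{singleton} strategy set (so she trivially best-responds), whereas the paper hinges on her payoffs being zero — both facts hold by construction, but yours is the cleaner justification, and your remark that invoking the argument of Proposition~\ref{prop:inflate-cons-hi} is non-circular (since it only uses the already-verified third bullet) is a point the paper glosses over.
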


\begin{proposition}
\label{prop:infl-algo-add-strategy}
Let $G$ be a normal form game, and let $G'$ be the normal form game that is the output of $\texttt{AddStrategy}(G,j,i)$ (Algorithm~\ref{algo:addstrategy_inflate_game}). Then $G \infl G'$.
\end{proposition}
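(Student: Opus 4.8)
The plan is to verify the five bullets of Definition~\ref{def:inflated-nfgame} directly from the construction in Algorithm~\ref{algo:addstrategy_inflate_game}, keeping in mind that here the player set is unchanged ($N'=N$), that $S_k'=S_k$ for every $k\neq i$, and that $S_i'=S_i\cup\{j\}$. The first three bullets are immediate bookkeeping: the players coincide, player $i$'s strategy set only grows by $\{j\}$ while all others are untouched, and line~\ref{algo: addstrategy_inflate_game P' inside if} copies $P(s)$ into $P'(s)$ for every pure profile $s\in S$, which is exactly the third bullet since the common pure profiles of $G$ and $G'$ are precisely those in $S$. From these I also record the payoff-agreement fact, established by a direct expansion of Equation~\eqref{eq:payoff_function} exactly as in Proposition~\ref{prop:inflate-cons-hi}: if $\sigma\comp\sigma'$ and the common support lies in $S$ (equivalently $\sigma'_{i,j}=0$), then $h_k(\sigma)=h_k'(\sigma')$ for all $k\in N$, because every term of the sum involving $j$ carries the zero coefficient $\sigma'_{i,j}$.

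The heart of the argument is the observation that the freshly added strategy $j$ is \emph{strictly dominated} in $G'$. By line~\ref{algo: addstrategy_inflate_game m} we have $m < P_k(s)$ for every $k\in N$ and every $s\in S$, so by line~\ref{algo: addstrategy_inflate_game P' inside else} player $i$ receives exactly $m$ whenever she plays $j$ (against any $s_{-i}\in S_{-i}$), whereas any original pure strategy $k\in S_i$ yields $P_i'(k,s_{-i})=P_i(k,s_{-i})\geq m+1 > m$. Hence every $k\in S_i$ strictly dominates $j$. I would then invoke the standard fact that a strictly dominated pure strategy is played with probability $0$ in every Nash equilibrium: if $\sigma'_{i,j}>0$ for some $\sigma'\in NE(G')$, shifting all the weight on $j$ onto any fixed $k\in S_i$ strictly increases $h_i'$, contradicting the equilibrium condition.

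With this in hand the last two bullets follow. For the fifth bullet, given $\sigma'\in NE(G')$ we have $\sigma'_{i,j}=0$, so $\sigma'$ is supported entirely on common strategies; letting $\sigma$ be its restriction to $S$ yields $\sigma\comp\sigma'$ and, by the payoff-agreement fact, $h_k(\sigma)=h_k'(\sigma')$ for all $k$. Any deviation of a player in $G$ is also available in $G'$ (strategy sets coincide for $k\neq i$, and $S_i\subseteq S_i'$), and by compatibility it yields the same payoff there; since $\sigma'\in NE(G')$ admits no profitable deviation, neither does $\sigma$ in $G$, so $\sigma\in NE(G)$. For the fourth bullet, given $\sigma\in NE(G)$ I extend it to $\sigma'$ on $S_i'$ by setting $\sigma'_{i,j}=0$, so again $\sigma\comp\sigma'$ with matching payoffs. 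No player $k\neq i$ can deviate profitably (same strategies, same payoffs), and player $i$ cannot gain by moving weight onto any $k\in S_i$ (as $\sigma\in NE(G)$) nor onto $j$, since $j$ delivers $m$, strictly below her equilibrium payoff $h_i(\sigma)$, which is a convex combination of values $P_i(s)\geq m+1$. Thus $\sigma'\in NE(G')$.

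The main obstacle I anticipate is not conceptual but bookkeeping: the domination argument must be carried out against the opponents' mixed behaviour $\sigma'_{-i}$ rather than merely against pure profiles, and the ``shift the weight'' deviation must be formalised carefully. This, however, reduces to the elementary computation that playing $j$ yields expected payoff exactly $m$ regardless of $\sigma'_{-i}$ (the opponent probabilities sum to $1$), while any $k\in S_i$ yields at least $m+1$, which makes the strict improvement transparent.
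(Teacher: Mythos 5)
Your proof is correct and follows the same route as the paper: the paper's own argument is a two-line sketch observing that lines \ref{algo: addstrategy_inflate_game m} and \ref{algo: addstrategy_inflate_game P' inside else} make the new strategy $j$ dominated and hence strategically irrelevant, which is exactly the key fact you establish. Your write-up simply fills in the bookkeeping for the five bullets of Definition~\ref{def:inflated-nfgame} and the equilibrium-preservation arguments that the paper leaves implicit.
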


\begin{proposition}
\label{prop:infl-algo-full}
Let $G$ be a normal form game, and let $G'$ be the normal form game that is the output of $\texttt{InflateGame}(G,N',S')$ (Algorithm~\ref{algo:inflate_game}). Then $G \infl G'$.
\end{proposition}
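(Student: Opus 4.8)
The plan is to exploit the fact that $\texttt{InflateGame}$ (Algorithm~\ref{algo:inflate_game}) is a finite composition of the atomic operations $\texttt{AddPlayer}$ (Algorithm~\ref{algo:addplayer_inflate_game}) and $\texttt{AddStrategy}$ (Algorithm~\ref{algo:addstrategy_inflate_game}), for each of which we have already established that a single application yields an inflated version of its input (Propositions~\ref{prop:infl-algo-add-player} and~\ref{prop:infl-algo-add-strategy}). Since $\infl$ is a partial order (Proposition~\ref{prop:infl-partial order}) and hence transitive, the whole computation should telescope into a single inflation $G \infl G'$.

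Concretely, I would first unfold the two loops of Algorithm~\ref{algo:inflate_game} into a linear sequence of intermediate games
\[
G = G_0, \; G_1, \; \ldots, \; G_m = G',
\]
produced by performing the $\texttt{AddPlayer}$ calls (one per element of $N'\setminus N$) followed by the $\texttt{AddStrategy}$ calls (one per missing strategy of each player), so that each $G_{t+1}$ arises from $G_t$ by exactly one atomic operation. The second step is to check that the precondition of each call holds at the moment it is invoked: a player $i$ passed to $\texttt{AddPlayer}$ satisfies $i \notin N_t$, because the outer loop ranges over $N'\setminus N$ and never revisits a player; and a pair $(j,i)$ passed to $\texttt{AddStrategy}$ satisfies $i \in N_t$ and $j \notin (S_t)_i$, because by the time the strategy loop runs all of $N'$ has already been added, while $j$ ranges over the not-yet-present strategies of player $i$. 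With the preconditions discharged, Propositions~\ref{prop:infl-algo-add-player} and~\ref{prop:infl-algo-add-strategy} give $G_t \infl G_{t+1}$ for every $t$, and a finite induction on $t$ using transitivity of $\infl$ yields $G = G_0 \infl G_m = G'$.

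It then remains to confirm that the telescoped endpoint $G_m$ really is the tuple $\langle N', S', P'\rangle$ returned by the algorithm: the $\texttt{AddPlayer}$ phase grows the player set from $N$ up to $N'$, and the $\texttt{AddStrategy}$ phase grows each player's strategy set from its post-$\texttt{AddPlayer}$ value up to $S_i'$, so that $G_m$ has precisely player set $N'$ and strategy space $S'$; transitivity then certifies $G \infl G_m$ with $G_m$ the claimed output.

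I expect the main obstacle to be bookkeeping rather than anything conceptual: one must track the state of the ``current'' game $G_t$ carefully when interpreting the loop guard $S_i'\setminus S_i$, and in particular reconcile the singleton strategy introduced by $\texttt{AddPlayer}$ for each freshly added player with the indexing of $S_i'$, so that the $\texttt{AddStrategy}$ preconditions genuinely hold and the final strategy sets coincide with $S'$. Once this indexing is fixed, the remaining argument is a routine induction on the length of the operation sequence, and the appeal to transitivity is immediate.
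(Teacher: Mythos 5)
Your proposal is correct and follows exactly the same route as the paper's own proof: decompose \texttt{InflateGame} into a finite sequence of \texttt{AddPlayer} and \texttt{AddStrategy} calls, invoke Propositions~\ref{prop:infl-algo-add-player} and~\ref{prop:infl-algo-add-strategy} for each atomic step, and chain the results via the transitivity of $\infl$ from Proposition~\ref{prop:infl-partial order}. The additional bookkeeping you describe (verifying preconditions and that the endpoint equals $\langle N', S', P'\rangle$) is a sound elaboration of details the paper leaves implicit.
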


\subsection{Misinformation games}\label{sec:misinformation games}

Misinformation captures the concept that different players may have a specific, subjective, and thus different view of the game that they play. 

\begin{definition}
\label{def:mis_normal_basic}
A \emph{misinformation normal-form game} (or simply \emph{misinformation game}) is a tuple $mG = \tuple{G^{0}$ $, G^{1},$ $\ldots, G^{\vert N \vert}}$, where all $G^i$ are normal-form games and $G^0$ contains $\vert N \vert$ players.
\end{definition}

$G^0$ is called the \emph{actual game} and represents the game that is actually being played, whereas $G^i$ (for $i \in \{ 1, \dots, \vert N \vert$\}) represents the 
subjective view of player $i$, called the \emph{game of player $i$}.
We make no assumptions as to the relation among $G^{0}$ and $G^{i}$, and allow all types of misinformation to occur. An interesting special class of misinformation games is the following:

\begin{definition} \label{def:mis_games_special}
A misinformation game $mG= \mGdefo$ is called \emph{canonical} iff:
\begin{itemize}
    \item For any $i$, $G^0, G^i$ differ only in their payoffs.
    \item In any $G^i$, all players have an equal number of pure strategies.
\end{itemize}
\end{definition}

The first bullet of Definition \ref{def:mis_games_special} is usually true in practice. In order for a player to miss a strategy or player, or to have additional, imaginary, strategies or players, the communication failures should concern entire sections of a payoff matrix, and thus be quite substantial. The second bullet is more commonly violated, because the actual game, $G^0$, might violate it.

In any case, canonical misinformation games are simpler to grasp, and also simplify the formalisation. We can restrict our focus to canonical misinformation games without loss of generality, because we can transform any non-canonical misinformation game into a canonical one by repetitive applications of the inflation process for normal form games that was described in Subsection \ref{subsec:inflation process-nfg}. As explained there, the process of inflation does not alter the strategic properties of the normal form games (which in turn determine the strategic properties of the misinformation game).

To be more precise, we use the $\texttt{Inflation\text{ }process}$ algorithm (Algorithm \ref{algo:inflate_mG}) to transform a non-canonical $mG$ into a canonical $mG$ without altering its strategic properties. In a nutshell, Algorithm \ref{algo:inflate_mG} first computes the full list of players and strategies that each game should have in order to ensure the properties of Definition \ref{def:mis_games_special} (lines \ref{line:algo5 add game players}-\ref{line:algo5 add game strategies}).
Then, it utilizes the $\texttt{AddGame}$ algorithm (Algorithm \ref{algo:addgame}) to plug in a new game in the misinformation game, in order to ensure that any player, either real or fictional, has her own subjective view of the interaction (lines \ref{line:algo5 add game start for}-\ref{line:algo5 add game end for}). 
In $\texttt{AddGame}$, we create such a game for any player that is not in $N^0$ (and thus does not already exist in $mG$), ensuring that it has the correct players and strategies ($N_{\cup}, S_{\cup}$).
Afterwards, Algorithm \ref{algo:inflate_mG} inflates each of the normal-form games in the non-canonical misinformation game, to ensure that they, too, have the correct players and strategies (lines \ref{line:algo5 inflate game start for}-\ref{line:algo5 inflate game end for}). 
The final resulting $mG$ is returned in line \ref{line:algo5 return}.

\begin{algorithm}[H]
{
\caption{$\texttt{AddGame}(mG, N', S')$}\label{algo:addgame}
\begin{algorithmic}[1]
\Require 
\Statex A misinformation game $mG$
\Statex A set of players $N'$ such that $\vert N' \vert < +\infty$
\Statex A set of strategies $S'$
\Statex
\State Set $G' \leftarrow \langle \emptyset, \emptyset, 0 \rangle$
\State $\texttt{InflateGame}(G', N', S')$
\State \textbf{return} $\langle G^0, G^1, \ldots, G^{N}, G'\rangle$
\end{algorithmic}
}
\end{algorithm}

\begin{algorithm}[H]
\caption{$\texttt{Inflation\text{ }process}(mG)$}\label{algo:inflate_mG}
{
\begin{algorithmic}[1]
\Require 
 \Statex $mG = \langle G^0, G^1, \ldots, G^{\vert N \vert}\rangle$, with $\vert N \vert < +\infty$
\Statex

\State Set $N_{\cup} \leftarrow \bigcup_{j=0,\dots, \vert N \vert} N^j$ \label{line:algo5 add game players}
\State Set $S_{\cup} \leftarrow \times_{j \in N_{\cup}} \left( \bigcup_{j\in N_{\cup}}\bigcup_{k\in N_{\cup} \cup \{0\}} S_j^k \right)$ \label{line:algo5 add game strategies}

\For{$i \in N_{\cup} \setminus N^0$} \label{line:algo5 add game start for}
   \State $mG \leftarrow \texttt{AddGame}(mG, N_{\cup}, S_{\cup})$ 
\EndFor \label{line:algo5 add game end for}

\For{$i \in N_{\cup} \cup \{0\}$} \label{line:algo5 inflate game start for}
    \State Set $G^i \leftarrow \texttt{InflateGame}(G^i, N_{\cup}, S_{\cup})$ 
\EndFor \label{line:algo5 inflate game end for}

\State \textbf{return} $mG$ \label{line:algo5 return}
\end{algorithmic}
}
\end{algorithm}

Since the inflation process does not alter the strategic properties of a game, and any non-canonical misinformation game can be turned into a canonical one using Algorithm \ref{algo:inflate_mG}, we will establish strategy profiles and equilibrium concepts in Subsection~\ref{subsec:Strategy profiles in misinformation games} below, focusing on canonical misinformation games only.
As we will see, introducing these concepts for the non-canonical case, brings about all sorts of unrelated technical difficulties that are distracting and complicate the formalism, without contributing to the main intuition.

\subsection{Strategy profiles in misinformation games}
\label{subsec:Strategy profiles in misinformation games}

The definition of misinformed strategies and strategy profiles is straightforward, once noticing that they refer to each player's own game:

\begin{definition}\label{def:misinformed_strategy}
A {\em misinformed strategy}, $m\sigma_i$ is a strategy of player $i$ in game $G^i$. We denote the set of all possible misinformed strategies of player $i$ as $\Sigma_i^i$. A {\em misinformed strategy profile} of $mG$ is an $\vert N \vert$-tuple of misinformed strategies $m\sigma = (m\sigma_{1}, \ldots, m\sigma_{\vert N \vert})$, where $m\sigma_i \in \Sigma_i^i$.
\end{definition}

As usual, we denote by $m\sigma_{-i}$ the $(\vert N\vert - 1)$-tuple strategy profile of all other players except for player $i$ in a misinformed strategy $m\sigma$. The payoff function $h_i$ of player $i$ under a given profile $m\sigma$ is determined by the payoff matrix of $G^0$, and is defined as $h_{ i}: \Sigma_1^1 \times \dots \times \Sigma_{\vert N \vert}^{\vert N \vert} \to \mathbb{R}$, such that:
\begin{equation*}
    h_{i}(m\sigma_i, m\sigma_{-i}) = \sum\nolimits_{k\in S^1_1} \dots \sum\nolimits_{j\in S^{\vert N \vert}_{\vert N \vert}} P^0_i(k,\dots,j) \cdot m\sigma_{1,k} \cdot \ldots \cdot m\sigma_{ \vert N \vert ,j},
\end{equation*}
where $P^0_i(k,\dots,j)$ is the payoff of player $i$ in the pure strategy profile $(k,\dots,j)$ under the actual game $G^0$. Also, $S^j_i$ denotes the set of pure strategies of player $i$ in game $G^j$. 

Observe that, although each player's strategic decisions are driven by the information in her own game ($G^i$), the received payoffs are that of the actual game $G^0$, that may differ than $G^i$. Further, the payoff function would be ill-defined in case of non-canonical misinformation games.

Next, we define the solution concept of a misinformation game, where each player chooses a Nash strategy, neglecting what other players know or play:

\begin{definition}\label{def:misinformed_strategy_equilibrium_Gi}
A misinformed strategy, $m\sigma_{i}$, of player $i$, is a \emph{misinformed equilibrium strategy}, iff, it is a Nash equilibrium strategy for game $G^{i}$. A misinformed strategy profile $m\sigma$ is called a \emph{natural misinformed equilibrium} iff it consists of misinformed equilibrium strategies.
\end{definition}

In the following, we denote by $NME(mG)$ (or simply NME, when $mG$ is obvious from the context) the set of natural misinformed equilibria of $mG$, and by $nme$ the elements of $NME(mG)$.

\begin{example}[Running Example]\label{example:running example-I}
Consider the canonical misinformation game $mG = \langle G^0, G^{1}, G^{2} \rangle$, where $G^{i} = \langle \{1, 2\}, S = \{s_1, s_2\} \times \{s_1, s_2\}, P^{i} \rangle$, with $i \in \{0, 1, 2\}$ and payoff matrices as depicted in Table~\ref{tbl:misninformation-game}.
\begin{table}[h]
    \centering
    \caption[Payoff matrices of the misinformation game of Example~\ref{example:running example-I}.]{\small Payoff matrices in the misinformation game of Example~\ref{example:running example-I}.}\label{tbl:misninformation-game}
    \begin{subtable}{.32\linewidth}
       \centering
       \begin{tabular}{ |c|c|c| }
              \hline
              & $s_1$ & $s_2$   \\
              \hline
              $s_1$ &  (6,6) & (2,7)   \\
             \hline
              $s_2$ &  (7,2) & (1,1) \\
             \hline
            \end{tabular}
            \caption{$P^0$}\label{tbl:example-actual-game}
    \end{subtable}
    \enspace
    \begin{subtable}{.32\linewidth}
       \centering
        \begin{tabular}{ |c|c|c| }
         \hline
              & $s_1$ & $s_2$   \\
              \hline
              $s_1$ &  (2,2) & (0,3)   \\
             \hline
              $s_2$ &  (3,0) & (1,1) \\
             \hline
        \end{tabular}
        \caption{$P^1$}\label{tbl:example-1-view}
    \end{subtable}
    \enspace
    \begin{subtable}{.32\linewidth}
       \centering
        \begin{tabular}{ |c|c|c| }
         \hline
              & $s_1$ & $s_2$   \\
              \hline
              $s_1$ &  (-1,1) & (2,-2)   \\
             \hline
              $s_2$ &  (1,-1) & (0,0) \\
             \hline
        \end{tabular}
        \caption{$P^2$}\label{tbl:example-2-view}
    \end{subtable}
\end{table}

Player's $1$ equilibrium strategy is $s_2$ in $G^{1}$ (i.e., player $1$ will play $(0,1)$), while player's $2$ mixed equilibrium strategy in $G^{2}$ is $(1/2,1/2)$. Thus, we have $NME = \{ (s_2, s_1), (s_2, s_1) \}$ with $nme$s $(0, 1), (1, 0))$ and $(0, 1), (0, 1))$.
\end{example}

To understand the computational properties of calculating an $nme$, we observe that any natural misinformed equilibrium consists of the agglomeration of the different Nash equilibrium strategy profiles, one for each $G^i$. The computation of a Nash equilibrium for each $G^i$ is \textbf{PPAD}-complete \cite{DBLP:journals/cacm/DaskalakisGP09}, and computing the $nme$ amounts to repeating this computation for each $G^i$. Thus, the computation of the natural misinformed equilibrium is also \textbf{PPAD}-complete.

Inspired by the {\em Price of Anarchy} metric we define a metric, called the \emph{Price of Misinformation (PoM)} to measure the effect of misinformation compared to the social optimum. 
$PoM$ is defined as follows: 
\begin{definition}\label{def:PoM}
Given a misinformation game $mG$, the \emph{Price of Misinformation} is defined as:
\begin{equation}\label{eq:PoM_max}
\displaystyle PoM = \frac{SW(opt)}{\min_{\scriptscriptstyle \sigma \in NME} SW(\sigma)}
\end{equation}
\end{definition}
\noindent Using the definition of {\em PoA}~\cite{koutsoupias-papadimitriou:kts-pap} and $\eqref{eq:PoM_max}$ we derive the following formula: 
\begin{equation}\label{eq:PoMandPoA}
 \frac{PoM}{PoA} = \frac{\min_{\scriptscriptstyle \sigma \in NE} SW(\sigma)}{\min_{\scriptscriptstyle \sigma \in NME} SW(\sigma)}
\end{equation}

Observe that, if $PoM < PoA$, then misinformation has a beneficial effect on social welfare, as the players are inclined (due to their misinformation) to choose socially better strategies. On the other hand, if $PoM > PoA$, then misinformation leads to a worse outcome from the perspective of social welfare.

\addtocounter{example}{-1}
\begin{example}[continued]\label{example:running example-II}
For the $mG$ in Example~\ref{example:running example-I} game $G^0$ has three Nash equilibria, with strategy profiles $((1,0), (0, 1))$, $((0,1), (1,0))$, and $((2/3, 1/3), (2/3, 1/3)$. The worst Nash equilibrium provides a value of $26/3$, whereas the optimal is in strategy profile $((1, 0), (1, 0))$ with a value of $12$, both in terms of social welfare. Thus, $PoA = 18/13$.

On the other hand, the $nme$ strategy profile $((0,1),(1/2, 1/2))$ has a social welfare value of $9/2$, resulting to $PoM = 8/3$. Hence, $PoA < PoM$.
\end{example}

Another interesting result shown in \cite{VFFB} is that misinformation imposes any desirable behaviour on players. Therefore, the use of misinformation games provides a powerful technique that trivializes the problem of mechanism design, when misinformation can be used. A side-effect of this, related to the Price of Misinformation introduced above, is that anything is possible in terms of improving (or deteriorating) social welfare while using misinformation as a means for mechanism design. We will not further consider this property in this paper; for more details, see \cite{VFFB}.

As a side note, let us consider how the definitions of this subsection would apply (or not) for the case of non-canonical misinformation games. In a non-canonical misinformation game, a player $i$ may have a subjective view $G^i$ that diverts from $G^0$ in every aspect, i.e., number of players, number of pure strategies, payoffs. We observe that her strategic behavior will still be guided by the contents of her subjective game, $G^i$, so Definition \ref{def:misinformed_strategy} applies normally. Consequently, Definition \ref{def:misinformed_strategy_equilibrium_Gi} for the $nme$ is also well-defined. 
Contrarily, we observe that the formula for computing the payoff values $h_i$ for some misinformed strategy profile $m\sigma$ is no longer well-defined, because there may exist misinformed strategy profiles (either pure or mixed) that are unknown to $G^0$, and thus correspond to no entry in the payoff matrix $P^0$.
Moreover, from the perspective of individual players, it is unclear what a player should do when being informed that another player used a strategy that is unknown to her (or that an unknown player participated in the game), and how the received payoff should be interpreted (since it corresponds to no entry in her own payoff matrix).
Addressing this interesting case will be considered in future work.

\section{Adaptation Procedure}\label{sec:adaptation procedure}

In previous work (and in our analysis on misinformation games above), only one-shot interactions were considered; this is a major limitation, as we expect players to realise (partly) the shortcomings of their knowledge following the reception of the (unexpected) payoffs and to revise accordingly. To address this shortcoming, we introduce the Adaptation Procedure that enables us to model iterative interactions. Abusing notation, we write $mG^{(t)}$ instead of $mG$ in order to plug in the step of interaction (this change in the notation is transfused in every concept).

We start with an informal description of the Adaptation Procedure, and continue by presenting the respective formalisation and showing some properties.

\subsection{Informal description of the Adaptation Procedure}

The Adaptation Procedure is initialized with a misinformation game, say $mG^{(0)}$. As explained in Section \ref{subsec:Strategy profiles in misinformation games}, this will cause each player to employ one of the equilibrium strategies in her own game, resulting to a natural misinformed equilibrium. The payoff received from the joint strategic choices will be provided by the actual game, $G^0$, and this may be different from what each player knows (and expects) from her subjective game. 
Notably, we assume that the payoffs received by each player for their strategic choices are publicly announced, thus are common knowledge.
As a result, players will update their payoff matrices by replacing the erroneous payoffs with the correct ones just received, leading to a new misinformation game.

Interestingly, the above process is not, in general, linear. When the misinformation game has more than one natural misinformed equilibria, and/or when there exist mixed strategic choices in them, each of these choices will be considered in a separate branch of the process, because the players may choose any of those strategies to play. 
Any such branch leads to a different new misinformation game, and, thus, our process should take into account all those potential scenarios. To do this, $mG^{(0)}$ will spawn several new misinformation games, one for each element of the support of the natural misinformed equilibria, eventually forming a tree (more precisely, a directed graph, as there may be self-loops or mergers, as we will show later). 
Note that this spawning should not be interpreted as leading to multiple parallel processes; instead, it represents different potential ways that the process can unfold.

The process continues recursively for each branch, creating new misinformation games. When no new misinformation games are spawned (i.e., when the players learn nothing new from the environment and no longer change their payoff matrices), the Adaptation Procedure \emph{terminates}.

Observe that the Adaptation Procedure produces new games (and thus new $nme$s) in each time step. In general an $nme$ is not a stable equilibrium concept when the Adaptation Procedure is considered, because, after the players learn something new about the reality as explained above, they might choose to change their choices. Nevertheless, the $nme$s of the games appearing in the leaves of the recursive tree at the time when the Adaptation Procedure terminates are stable (nothing new is learnt). Given that, we define a new equilibrium concept, that is, a strategy profile that is an $nme$ of one of the leaves, and we call it \emph{stable misinformed equilibrium} ($sme$); players do not have incentives to deviate from an $sme$, even in the presence of the payoff information received by the environment (game). 

\subsection{Formal definition}

Consider a multidimensional matrix $A$, and a vector $\vec{v}$. We denote by $A_{\vec{v}}$ the element of $A$ in position $\vec{v}$. For example, $A_{(1,2)}$ is the top right element of the $2 \times 2$ matrix $A$.
More general, for a $n_1 \times n_2 \times \dots \times n_m$ matrix, a \emph{position} is an $m$-dimensional vector $\vec v \in [n_1] \times [n_2] \times \dots \times [n_m]$, where $[n_i] = \{1,2,\dots, n_i\}$. 
Furthermore, the set of all position vectors in a multidimensional payoff matrix is the set of joint strategies $S$. In what follows, we will use the term position vectors when referring in a specific point in the payoff matrix. Thereupon, we define the operation of replacement of element $A_{\vec{v}}$ with $b$ as follows:

\begin{definition}
\label{def:update_operation}
Consider set $F$, matrix $A \in F^{n_1 \times n_2 \times \ldots \times n_m}$, vector $\vec{v}$ indicating a position in $A$ and some $b \in F$. We denote by $A \oplus_{\vec{v}} b$ the matrix $B \in F^{n_1 \times n_2 \times \ldots \times n_m}$, such that $B_{\vec{v}} = b$ and $B_{\vec{u}} = A_{\vec{u}}$ for all $\vec{u} \neq \vec{v}$.
\end{definition}

In practice, the operator $\oplus_{\vec v}$ will be used to replace a vector of payoffs (that appears in the position $\vec v$) with some other vector. So the set $F$ in our case will consist of all appropriately-sized vectors of real numbers, and $b$ will be one such vector. This replacement reflects the effect of the Adaptation Procedure in the subjective payoff matrices of players, as it leads to the updating of an erroneous payoff vector with the actual one that is communicated to them by the game (and appears in the actual payoff matrix). These ideas are formalised below:

\begin{definition}\label{def:vec_def}
Consider a canonical misinformation game $mG = \langle G^0, G^1, \dots, G^{\vert N \vert} \rangle$, where $G^i = \langle N, S, P^i \rangle$ (for $0 \leq i \leq \vert N \vert$), and some vector $\vec{v}$. We define the $\vec{v}$-update of $mG$, denoted by $mG_{\vec{v}}$, to be the misinformation game 
$\langle G^0, {G^1}', \ldots, {G^{\vert N \vert}}'\rangle$, where ${G^i}' = \langle N, S, P^i\oplus_{\vec{v}} P^{0}_{\vec{v}}\rangle$, for $1 \leq i \leq \vert N \vert$.
\end{definition}

Definition \ref{def:vec_def} tells us how to perform the update process that the Adaptation Procedure requires, in particular by replacing the payoffs of the newly learnt position in each of the subjective payoff matrices ($P^i$) by the actual payoff given by $P^0$ (i.e., by $P^0_{\vec v}$). Interestingly, the $\vec{v}$-update of $mG$ is idempotent, $(mG_{\vec{u}_1})_{\vec{u}_1} = mG_{\vec{u}_1}$, and commutative, $(mG_{\vec{u}_1})_{\vec{u}_2} = (mG_{\vec{u}_2})_{\vec{u}_1}$. 

Abusing notation, for a set of positions $X = \{\vec{u}_1,\dots,\vec{u}_k\}$, we denote by $mG_X$ the game $mG_X = (\dots(mG_{\vec{u}_1})_{\vec{u}_2}\dots)_{\vec{u}_k}$. Given the properties above, the notation $mG_X$ is well-defined.

The position where the update takes place (denoted by $\vec{v}$ in Definition \ref{def:vec_def}) is determined by the strategic choices of players, and can be ``extracted'' from a strategy profile that is an $nme$, using the following definition:

\begin{definition}\label{def:characteristic}
Consider a strategy profile $\sigma = (\sigma_1, \ldots, \sigma_N)$ with $\sigma_i \in [0, 1]^{\vert S_i \vert}$ and $S_i = \{s_{i1}, \ldots, s_{i\vert S_i \vert}\}$. The characteristic strategy set of vectors of $\sigma$ is $\chi(\sigma) = \chi(\supp(\sigma_1)) \times \dots \times \chi(\supp(\sigma_N))$, with $\chi(\supp(\sigma_j)) = \{i \vert s_{ji} \in \supp(\sigma_j)\}$.
\end{definition}

In practice, $\chi$ identifies the indices of the strategies that each player plays with non-zero probability, and then returns a set containing all possible vectors that can be formed with these indices (where each point in the vector corresponds to a player). This will be further clarified with the following example:

\begin{example}
Assume a $4 \times 3$ bimatrix game. Then, the characteristic strategy set of vectors of $\sigma = ((1/2,0,1/3,1/6),$ $(0,0,1))$ is $\chi(\sigma) = \{(1,3), (3,3), (4,3)\}$. \qed 
\end{example}

As explained above, the Adaptation Procedure occurs in discrete time steps $t \in \mathbb{N}_0 = \mathbb{N}\cup\{0\}$. It starts from $t = 0$ where player $i$ has the view $G^{i, (0)}$, $\forall i \in [\vert N \vert]$, and in each time step $t$ we implement the update operation described in Definition \ref{def:vec_def} for the payoff vector(s) that correspond to the strategic choices of the players. In other words, we update the strategy values of the strategy profiles of the players. The following example illustrates this procedure using the above notions, and is also visualised in Figure~\ref{fig:example adaptation}: 

\addtocounter{example}{-2}
\begin{example}[continued]\label{example:running example-III}
We write $mG$ as $\mGt{t} = \langle G^0, G^{1, (t)}, G^{2, (t)} \rangle$. Using the support of the $nme$ and the characteristic strategy vector we take $\chi\{((0, 1), (1/2, 1/2))\} = \{(2,1), (2,2)\}$. So, we have the strategy value vector $(2,1)$ for the strategy profile $((0, 1), (1, 0))$, and the strategy value vector $(2,2)$ for the strategy profile $((0, 1), (0, 1))$.

Notice that, as one player is randomized, $\chi(nme)$ has more than one elements, and the Adaptation Procedure branches result to two new misinformation games, say $\mGt{1a}$, $\mGt{1b}$. The first one becomes, $\mGt{1a} = \langle G^{0}, G^{1, (1a)}, G^{2, (1a)} \rangle$ with payoff matrices as shown in Table~\ref{tbl:misninformation-game-II} (note how the bottom-left payoff has been updated).

\begin{table}[h]
    \centering
    \caption[Payoff matrices in the misinformation game of the Example~\ref{example:running example-I} using the $nme$ $((0, 1), (1, 0))$ for the update.]{\small Payoff matrices in the misinformation game of the Example~\ref{example:running example-I} using the $nme$ $((0, 1), (1, 0))$ for the update.}\label{tbl:misninformation-game-II}
    \begin{subtable}{.32\linewidth}
       \centering
       \begin{tabular}{ |c|c|c| }
              \hline
              & $s_1$ & $s_2$   \\
              \hline
              $s_1$ &  (6,6) & (2,7)   \\
             \hline
              $s_2$ &  (7,2) & (1, 1) \\
             \hline
            \end{tabular}
            \caption{$P^0$}\label{tbl:example-actual-game-II}
    \end{subtable}
    \enspace
    \begin{subtable}{.32\linewidth}
       \centering
        \begin{tabular}{ |c|c|c| }
         \hline
              & $s_1$ & $s_2$   \\
              \hline
              $s_1$ &  (2,2) & (0,3)   \\
             \hline
              $s_2$ &  (\textcolor{upsdellred}{7}, \textcolor{upsdellred}{2}) & (1,1) \\
             \hline
        \end{tabular}
        \caption{$P^{1,(1a)}$}\label{tbl:example-1-view-II}
    \end{subtable}
    \enspace
    \begin{subtable}{.32\linewidth}
       \centering
        \begin{tabular}{ |c|c|c| }
         \hline
              & $s_1$ & $s_2$   \\
              \hline
              $s_1$ &  (-1,1) & (2,-2)   \\
             \hline
              $s_2$ &  (\textcolor{upsdellred}{7}, \textcolor{upsdellred}{2}) & (0,0) \\
             \hline
        \end{tabular}
        \caption{$P^{1,(1a)}$}\label{tbl:example-2-view-II}
    \end{subtable}
\end{table}

Similarly, the second one becomes, $\mGt{1b} = \langle G^{0}, G^{1, (1b)}, G^{2, (1b)} \rangle$ with the payoff matrices (note how the bottom-right payoff has been updated) as are shown in Table~\ref{tbl:misninformation-game-III}.

\begin{table}[h]
    \centering
    \caption[Payoff matrices in the misinformation game of the Example~\ref{example:running example-I} using the $nme$ $((0, 1), (0, 1))$ for the update.]{\small Payoff matrices in the misinformation game of the Example~\ref{example:running example-I} using the $nme$ $((0, 1), (0, 1))$ for the update.}\label{tbl:misninformation-game-III}
    \begin{subtable}{.32\linewidth}
       \centering
       \begin{tabular}{ |c|c|c| }
              \hline
              & $s_1$ & $s_2$   \\
              \hline
              $s_1$ &  (6,6) & (2,7)   \\
             \hline
              $s_2$ &  (7,2) & (1,1) \\
             \hline
            \end{tabular}
            \caption{$P^0$}\label{tbl:example-actual-game-III}
    \end{subtable}
    \enspace
    \begin{subtable}{.32\linewidth}
       \centering
        \begin{tabular}{ |c|c|c| }
         \hline
              & $s_1$ & $s_2$   \\
              \hline
              $s_1$ &  (2,2) & (0,3)   \\
             \hline
              $s_2$ &  (3,0) & (\textcolor{upsdellred}{1}, \textcolor{upsdellred}{1}) \\
             \hline
        \end{tabular}
        \caption{$P^{1,(1b)}$}\label{tbl:example-1-view-III}
    \end{subtable}
    \enspace
    \begin{subtable}{.32\linewidth}
       \centering
        \begin{tabular}{ |c|c|c| }
         \hline
              & $s_1$ & $s_2$   \\
              \hline
              $s_1$ &  (-1,1) & (2,-2)   \\
             \hline
              $s_2$ &  (1,-1) & (\textcolor{upsdellred}{1}, \textcolor{upsdellred}{1}) \\
             \hline
        \end{tabular}
        \caption{$P^{2,(1b)}$}\label{tbl:example-2-view-III}
    \end{subtable}
\end{table}
\end{example}

The procedure shown in Example \ref{example:running example-I} is formalised as follows, taking into account the fact that the process may branch when $\chi(\sigma)$ is not a singleton set:

\begin{definition}\label{def:AD(M)-adaptation procedure iterative process}
For a set $M$ of misinformation games, we set:
\begin{equation*}
\AD{}{M} = \{ mG_{\vec{u}} \mid mG \in M, \vec{u} \in \chi(\sigma), \sigma \in NME(mG) \}
\end{equation*}
We define the \emph{Adaptation Procedure} as the following iterative process:
\begin{equation}\label{eq:naive rec}
    \left\{\begin{array}{l}
     \AD{(0)}{M} = M \\
     \AD{(t+1)}{M} = \AD{(t)}{\AD{}{M}}
   \end{array}
\right.
\end{equation}
for $t \in \mathbb{N}_{0}$.
\end{definition}

The functionality of the Adaptation Procedure between two consecutive time steps $t$ and $t+1$, as provided by Definition~\ref{def:AD(M)-adaptation procedure iterative process}, is depicted in Figure~\ref{fig:schematic_representation_AD}.

\pgfdeclarelayer{background}
\pgfdeclarelayer{foreground}
\pgfsetlayers{background,main,foreground}

\begin{figure}[t]
\centering
\begin{adjustbox}{max width=\linewidth}
\begin{tikzpicture}[
    middlearrow/.style 2 args={
        decoration={             
            markings, 
            mark=at position 0.5 with {\arrow{triangle 45}, \node[#1] {#2};}
        },
        postaction={decorate}
    },
    my mark/.style={
        decoration={
            markings,
            mark=at position 0.5 with{\color{red}\pgfuseplotmark{x}},
        },
        postaction=decorate,
    }
]		
		\node (start) [rec] {\tiny $mG^{(t)}$};

		\node (gdts) [draw=none, right= 3cm of start] {$\mathbf{\vdots}$};
		\node (g2) [rec, above= of gdts] {\tiny $G^{2,(t)}$};
		\node (g1) [rec, above= of g2] {\tiny $G^{1,(t)}$};
		\node (gN_1) [rec, below= of gdts] {\tiny $G^{N-1,(t)}$};
		\node (gN) [rec, below= of gN_1] {\tiny $G^{N,(t)}$};

		\node (ngdts) [draw=none, right= 3cm of gdts] {$\mathbf{\vdots}$};
  
        \node (ne21) [smrec, right= 3cm of g2.north] {\tiny $ne$};
        \node (ne22) [smrec, below = .3cm of ne21.west] {\tiny $ne$};
        \node (ne23) [smrec, below = .3cm of ne22.east] {\tiny $ne$};

        \node[ellipse, very thick, draw=red, fit=(ne21) (ne22) (ne23), inner sep=-1mm] (all1) {};
        
        \node (ne11) [smrec, right= 3cm of g1.north] {\tiny $ne$};
        \node (ne12) [smrec, below = .3cm of ne11.west] {\tiny $ne$};

        \node[ellipse, very thick, draw=red, fit=(ne11) (ne12), inner sep=-1mm] (all2) {};

        \node (neN11) [smrec, right= 2cm of gN_1] {\tiny $ne$};

        \node[ellipse, very thick, draw=red, fit=(neN11), inner sep=1mm] (all3) {};

        \node (neN1) [smrec, right= 3cm of gN.south] {\tiny $ne$};
        \node (neN2) [smrec, above = .3cm of neN1.west] {\tiny $ne$};

        \node[ellipse, very thick, draw=red, fit=(neN1) (neN2), inner sep=-1mm] (all4) {};

       \node (cross) [circle, very thick, draw=red, fill=red!10, right= 1.5cm of ngdts] {$\Cross$};

  		\node (nmegdts) [draw=none, right= 2.7cm of cross] {$\vdots$};
  
        \node (nme21) [smrec, text width=1cm, right= 8cm of g2.north, fill=white] {\tiny $nme_{i}$};
        \node (nme22) [smrec, text width=1cm, below = .3cm of nme21, fill=white] {\tiny $nme_{i+1}$};
        
        \node (nmedgts1) [draw=none, above= 2.6cm of nmegdts] {$\vdots$};
        
        \node (nme11) [smrec, text width=1cm, above= 2cm of nme21.north, fill=white] {\tiny $nme_1$};
        \node (nme12) [smrec, text width=1cm, below = .3cm of nme11, fill=white] {\tiny $nme_2$};

        \node (nmeN11) [smrec, text width=1cm, right= 8cm of gN_1.south, fill=white] {\tiny $nme_{j+1}$};
        \node (nmeN12) [smrec, text width=1cm, above = .3cm of nmeN11, fill=white] {\tiny $nme_{j}$};
        
        \node (nmeN1) [smrec, text width=1cm, right= 8cm of gN.south, fill=white] {\tiny $nme_{k}$};
        \node (nmeN2) [smrec, text width=1cm, above = .3cm of nmeN1, fill=white] {\tiny $nme_{k-1}$};

        \node[rectangle, very thick, draw=cyan, fit=(nme11) (nme12) (nme21) (nme22) (nmegdts) (nmeN11) (nmeN12) (nmeN1) (nmeN2), inner sep=1.2mm] (all5) {};

  		\node (support) [rec, very thick, text width=2.3cm, right= 3.5cm of nmegdts, draw = orange, fill=yellow!50] {\tiny \begin{tabular}{c} \textbf{compute the} \\ \textbf{support of nmes}\end{tabular}};

		\node (end) [rec, right= 1.3cm of support]{\tiny $mG^{(t+1)}$};

  \draw [arrow,->] (start.east) -- ++(+.9,0) node[midway,above]{} |- (g1.west);
  \draw [arrow,->] (start.east) -- ++(+1,0) node[midway,above]{} |- (g2.west); 
  \draw [arrow,->] (start.east) -- ++(+1,0) node[midway,above]{} |- (gN_1.west);
  \draw [arrow,->] (start.east) -- ++(+.9,0) node[midway,above]{} |- (gN.west);

  \draw [arrow,->] (ne11.east) -- (cross.west);
  \draw [arrow,->] (ne12.east) -- (cross.west);

  \draw [arrow,->] (ne21.east) -- (cross.west);
  \draw [arrow,->] (ne22.east) -- (cross.west);
  \draw [arrow,->] (ne23.east) -- (cross.west);

  \draw [arrow,->] (neN11.east) -- (cross.west);

  \draw [arrow,->] (neN1.east) -- (cross.west);
  \draw [arrow,->] (neN2.east) -- (cross.west);

  \draw [arrow,->] (support.east) -- (end.west);

  \draw [arrow,->] (g1.east) -- (ne11.west);
  \draw [arrow,->] (g1.east) -- (ne12.west);

  \draw [arrow,->] (g2.east) -- (ne21.west);
  \draw [arrow,->] (g2.east) -- (ne22.west);
  \draw [arrow,->] (g2.east) -- (ne23.west);

  \draw [arrow,->] (gN_1.east) -- (neN11.west);

  \draw [arrow,->] (gN.east) -- (neN1.west);
  \draw [arrow,->] (gN.east) -- (neN2.west);

   \draw [middlearrow,-] (cross.east) -- (nme11.west);
  \draw [middlearrow,-] (cross.east) -- (nme12.west);
  \draw [middlearrow,-] (cross.east) -- (nme21.west);
  \draw [middlearrow,-] (cross.east) -- (nme22.west);
  \draw [middlearrow,-] (cross.east) -- (nmeN11.west);
  \draw [middlearrow,-] (cross.east) -- (nmeN12.west);
  \draw [middlearrow,-] (cross.east) -- (nmeN1.west);
  \draw [middlearrow,-] (cross.east) -- (nmeN2.west);
  
  \draw [middlearrow,-] (nme11.east) -- (support.west);
  \draw [middlearrow,-] (nme12.east) -- (support.west);
  \draw [middlearrow,-] (nme21.east) -- (support.west);
  \draw [middlearrow,-] (nme22.east) -- (support.west);
  \draw [middlearrow,-] (nmeN11.east) -- (support.west);
  \draw [middlearrow,-] (nmeN12.east) -- (support.west);
  \draw [middlearrow,-] (nmeN1.east) -- (support.west);
  \draw [middlearrow,-] (nmeN2.east) -- (support.west);
  \draw [arrow,->] (support.east) -- (end.west);  
    \coordinate (a) at (-2,.75);
    \coordinate (b) at (0,1.75);
    \coordinate (c) at (18,.75);
    \coordinate (d) at (18,1.75);
		
    \begin{pgfonlayer}{background}
          \path (a.east |- a.south)+(.25,.5) node (f) {};
          \path (c.south -| c.east)+(+2,-4) node (g) {};
           \path (d.west |- d.north)+(+3.5,3.7) node (h) {};
      
        \path[fill=gray!5,rounded corners, draw=black]
             (f.south east)+(0.5,-6.3) rectangle (h);           
        
    \end{pgfonlayer}
		
	\end{tikzpicture}
\end{adjustbox}
	\caption[Schematic representation of the functionality of Adaptation Procedure from time step $t$ to time step $t+1$, that is $\AD{t+1}{mG}$.]{\small Schematic representation of the functionality of Adaptation Procedure from time step $t$ to time step $t+1$, that is $\AD{t+1}{mG}$.}
    \label{fig:schematic_representation_AD}
\end{figure}

Note that the Adaptation Procedure is defined over a set of misinformation games. Although our intent is basically to apply it over a single misinformation game, the branching process, along with the recursive nature of the definition, forces us to consider the more general case right from the start. Note also that we will often abuse notation and write \AD{}{mG} (or \AD{(t)}{mG}) instead of \AD{}{ \{ mG \} } (or \AD{(t)}{\{mG\}}).

The following example shows how the Adaptation Procedure of Example \ref{example:running example-I} continues in its second step. Interestingly, $\mGt{1}$ includes a hybrid natural misinformed equilibrium (i.e., one in which one player plays a pure strategy and the other a mixed one), thus illustrating the branching process mentioned above.

\addtocounter{example}{-1}
\begin{example}[continued]
\label{example:running example-IV}
At $t = 1$, we note that the Adaptation Procedure is branched. Now, we demonstrate each branch separately.

For the first branch, that results from updating profile $(s_2, s_1)$ in $\mGt{0}$, player $1$ has equilibrium strategy $s_2$ in $G^{1, (1a)}$. Similarly, player $2$ has equilibrium strategy $s_1$ in $G^{2, (1a)}$. Thus, the Adaptation Procedure provides a new misinformation game, say $\mGt{2a}$. Observe that the payoff matrices of $mG^{(1a)}$ are already updated with the correct value with respect to the bottom-left element, therefore $\mGt{2a} = \mGt{1a}$.

For the second branch, that results from updating profile $(s_2, s_2)$ in $\mGt{0}$, we have one pure Nash equilibrium in $G^{1,(1b)}$ for player $1$, that is $((0, 1), (0, 1))$. Furthermore, for player $2$ we have one mixed Nash equilibrium in $G^{2,(1b)}$ with strategy profile $((2/5,3/5), (1/3, 2/3))$. Thus, the $nme$ is $((0,1), (1/3, 2/3))$. Applying the characteristic strategy vector $\chi((0,1), (1/3, 2/3))$ yields ${(2,1), (2,2)}$, leading to further branching into new misinformation games $\mGt{2b}$ and $\mGt{2c}$.

Let us consider the element $(2, 2)$ of $\chi((0,1), (1/3, 2/3))$ (which leads to $\mGt{2c}$). We note that the payoff matrices of $mG^{(1b)}$ are already updated, specifically the bottom-left element, making $\mGt{2c} = \mGt{1b}$.

Similarly, for the element $(2,1)$ of $\chi((0,1), (1/3, 2/3))$, we update the bottom-right element of $P^{1, (1b)}$ and $P^{2, (1b)}$, so $\mGt{1b}$ leads to $\mGt{2b} = \langle G^{0},  G^{1, (2b)}, G^{2, (2b)} \rangle$ with payoff matrices detailed in Table~\ref{tbl:misninformation-game-IV}.

\begin{table}[h]
    \centering
    \caption[Payoff matrices in the misinformation game of the Example.]{\small Payoff matrices in the misinformation game of the Example~\ref{example:running example-I} at time $t=1$, using the $nme$ $((0,1), (1/3, 2/3))$.}
    \label{tbl:misninformation-game-IV}
    \begin{subtable}{.42\linewidth}
       \centering
        \begin{tabular}{ |c|c|c| }
         \hline
              & $s_1$ & $s_2$   \\
              \hline
              $s_1$ &  (2,2) & (0,3)   \\
             \hline
              $s_2$ &  (\textcolor{upsdellred}{7}, \textcolor{upsdellred}{2}) & (\textcolor{upsdellred}{1}, \textcolor{upsdellred}{1}) \\
             \hline
        \end{tabular}
        \caption{$P^{1,(2b)}$}\label{tbl:example-2b-view-IV-row}
    \end{subtable}
    \enspace
    \begin{subtable}{.42\linewidth}
       \centering
        \begin{tabular}{ |c|c|c| }
         \hline
              & $s_1$ & $s_2$   \\
              \hline
              $s_1$ &  (-1,1) & (2,-2)   \\
             \hline
              $s_2$ &  (\textcolor{upsdellred}{7}, \textcolor{upsdellred}{2}) & (\textcolor{upsdellred}{1}, \textcolor{upsdellred}{1}) \\
             \hline
        \end{tabular}
        \caption{$P^{2,(2b)}$}\label{tbl:example-2b-view-IV-col}
    \end{subtable}
\end{table}

\noindent In summary, we deduce that $\AD{(2)}{\{\mGt{0}\}} = \{ \mGt{2a}, \mGt{1b},  \mGt{2c} \}$.
\end{example}

\subsection{Stabilisation of the Adaptation Procedure}
\label{subsec:stable}

The following definition determines when the procedure is assumed to have ``terminated''; this corresponds to the time point where any further iterations do not provide new information to the players:

\begin{definition}\label{def:endingcriterion}
We say that the Adaptation Procedure \emph{terminates} (or \emph{stops}) at step $t$, iff $t$ is the smallest non-negative integer for which
\begin{equation*}
    \AD{(t+1)}{M} = \AD{(t)}{M}
\end{equation*}
for $t \in \mathbb{N}_0$. We call $t$ the length of the Adaptation Procedure and we denote it as $\LAD{M}$.
\end{definition}

In essence, the Adaptation Procedure concludes when all spawned misinformation games appear already in the input. We will demonstrate (see Theorem~\ref{thm:bounded mG}) that termination is a well-defined concept for all finite misinformation games, as they invariably reach a termination point.

To simplify presentation in the following, let $\AD{*}{M}$ represent the set of all misinformation games created by $M$ through successive applications of the \AD{}{} function, i.e., $\AD{*}{M} = \bigcup^{\infty}_{t=0} \AD{(t)}{M}$. Moreover, we will denote by $\AD{\infty}{M}$ the misinformation games produced by $\AD{}{}$ post-termination, i.e., $\AD{\infty}{M} = \AD{(t)}{M}$ for $t = \LAD{M}$. $\AD{\infty}{M}$ will be called the \emph{Stable Set}. 

\begin{definition}\label{def:sme}
Consider a misinformation game $mG$. Then, $\sigma$ is a \emph{stable misinformed equilibrium} (or \emph{sme} for short) of $mG$, iff 
there exists some $\widehat{mG} \in \AD{\infty}{\{ mG \}}$ such that
$\sigma \in NME(\widehat{mG})$ and, for all $\vec{v} \in \chi(\sigma)$, $\widehat{mG}_{\vec{v}} = \widehat{mG}$.
\end{definition}

We denote by $SME(mG)$ the $sme$s of $mG$. The careful reader might wonder why we didn't define $sme$s as the $nme$s of the stable set. The extra condition (for all $\vec{v} \in \chi(\sigma)$, $\widehat{mG}_{\vec{v}} = \widehat{mG}$) ensures that the $nme$ satisfies some condition of ``stability'' in the sense that the players already know the actual payoffs of the positions being played. Note that this is not necessarily true for all $nme$s, even when restricting ourselves to the $nme$s of $mG$s that appear in the stable set: an $nme$ could lead to learning new positions, creating different $mG$s, which happen to be in the stable set. Such an $nme$ is not an $sme$, according to Definition \ref{def:sme}; instead, an $nme$ is an $sme$ if and only if the players learn nothing because of it. The following example illustrates this scenario, among other things.

\newcolumntype{M}[1]{>{\centering\arraybackslash}m{#1}}
\newcolumntype{N}{@{}m{0pt}@{}}

\begin{figure}
\centering
\begin{adjustbox}{scale=.65}
    \begin{tikzpicture}[
node distance = 7mm and 7mm,
     N/.style = {draw, draw=purple, fill=cambridgeblue,
                 minimum size=5mm, inner sep =2mm}, 
  FIT/.style = {draw, draw=yellow, fill=yellow!30,
                inner sep=2mm, fit=#1},           
every edge/.append style = {-{Straight Barb[scale=0.8]}, semithick}
                        ]
\node[N]   (w11) {

    \begin{tikzpicture}[->,>=stealth',shorten >=1pt,node distance=2.6cm,
        thick,main node/.style={circle,inner sep=0pt, fill = black,align=center,draw,font=\tiny},
        every loop/.style={min distance=8mm,in=60,out=100,looseness=18},
        set/.style={draw,rectangle, minimum width=1.7cm, minimum height=1.5cm,inner sep=0pt,align=center}]
    \centering
    
    \node[set,fill=yellow!40,text width=1.2cm,minimum width=1.9cm, minimum height=6cm, shift={(0, 0)},label={}] (nat) at (3,0)  (rea) {};
    
    \node[set,fill=yellow!70,text width=1.2cm,minimum width=1.9cm, minimum height=6cm, shift={(0, 0)}, label={}] (int) at (6,0)  {};
    
    \node[set,fill=yellow!10,text width=1.2cm,minimum width=1.9cm, minimum height=6cm, shift={(0, 0)}, label={[xshift=0cm, yshift=0cm]$\scriptscriptstyle\AD{(0)}{\{mG\}}$}] (nat) at (0,0) {};

    \node[set,fill=beige,text width=1.2cm, minimum width=9.9cm, minimum height=6cm, shift={(2, 0)}, label={[xshift=0cm, yshift=-5.5cm] \begin{tabular}{|c|c|}
       \hline
         &  $mG^{(0)}$ \\
         \hline
        $G^0$ & \begin{tabular}{|c|c|c|}
                          \hline
              & $s_1$ & $s_2$   \\
              \hline
              $s_1$ &  (6,6) & (2,7)   \\
             \hline
              $s_2$ &  (7,2) & (1, 1) \\
             \hline 
       \end{tabular} \\ 
       \hline
       $G^1$ & \begin{tabular}{|c|c|c|}
                          \hline
              & $s_1$ & $s_2$   \\
              \hline
              $s_1$ &  (2,2) & (0,3)   \\
             \hline
              $s_2$ &  (3,0) & (1, 1) \\
             \hline
       \end{tabular} \\
       \hline
       $G^2$ & \begin{tabular}{|c|c|c|}
                          \hline
              & $s_1$ & $s_2$   \\
              \hline
              $s_1$ &  (-1,1) & (2,-2)   \\
             \hline
              $s_2$ &  (1,-1) & (0, 0) \\
             \hline
       \end{tabular} \\
       \hline
    \end{tabular}}] (table1) at (11,0) {};

    \draw[-] (8.5,3.5) node {\textcolor{white}{$\text{time step } = 0$}};
    \node[main node,text width=.25cm, label=below:{ $mG^{(0)}$}] (11) {};
 	
\end{tikzpicture}

};
\begin{scope}[node distance = 7mm and 0mm]
\node[N, below=of w11] (w21) {

\begin{adjustbox}{scale=1}
\begin{tikzpicture}[->,>=stealth',shorten >=1pt,node distance=2.6cm,
        thick,main node/.style={circle,inner sep=0pt, fill = black,align=center,draw,font=\tiny},
        every loop/.style={min distance=8mm,in=60,out=120,looseness=18},
        set/.style={draw,rectangle, minimum width=1.7cm, minimum height=1.5cm,inner sep=0pt,align=center}]
    \centering
    
    \node[set,fill=yellow!40,text width=1.2cm,minimum width=1.9cm, minimum height=6cm, shift={(0, 0)},label={[xshift=0cm, yshift=0cm]$\scriptscriptstyle\AD{(1)}{\{mG\}}$}] (nat) at (3,2.5)  (rea) {};
    
    \node[set,fill=yellow!70,text width=1.2cm,minimum width=1.9cm, minimum height=6cm, shift={(0, 0)}, label={}] (int) at (6,2.5)  {};
    
     \node[set,fill=yellow!10,text width=1.2cm,minimum width=1.9cm, minimum height=6cm, shift={(0, 0)}, label={[xshift=0cm, yshift=0cm]$\scriptscriptstyle\AD{(0)}{\{mG\}}$}] (nat) at (0,2.5) {};

    \node[set,fill=beige,text width=1.2cm,minimum width=9.9cm, minimum height=6cm, shift={(2, 0)}, label={[xshift=0cm, yshift=-5.5cm] \begin{tabular}{|c|c||c|}
        \hline
         &  $mG^{(1a)}$ & $mG^{(2a)}$\\
         \hline
        $G^0$ & \begin{tabular}{|c|c|c|} 
                          \hline
              & $s_1$ & $s_2$   \\
              \hline
              $s_1$ &  (6,6) & (2,7)   \\
             \hline
              $s_2$ &  (7,2) & (1, 1) \\
             \hline 
       \end{tabular} &  \begin{tabular}{|c|c|c|} 
                          \hline
              & $s_1$ & $s_2$   \\
              \hline
              $s_1$ &  (6,6) & (2,7)   \\
             \hline
              $s_2$ &  (7,2) & (1, 1) \\
             \hline 
       \end{tabular} \\ 
       \hline
       $G^1$ & \begin{tabular}{|c|c|c|}
                          \hline
              & $s_1$ & $s_2$   \\
              \hline
              $s_1$ &  (2,2) & (0,3)   \\
             \hline
              $s_2$ &  \textcolor{red}{(7,2)} & (1, 1) \\
             \hline
       \end{tabular} &  \begin{tabular}{|c|c|c|} 
                          \hline
              & $s_1$ & $s_2$   \\
              \hline
              $s_1$ &  (2,2) & (0,3)   \\
             \hline
              $s_2$ &  (3,0) & \textcolor{blue}{(1,1)} \\
             \hline 
       \end{tabular} \\
       \hline
       $G^2$ & \begin{tabular}{|c|c|c|}
                          \hline
              & $s_1$ & $s_2$   \\
              \hline
              $s_1$ &  (-1,1) & (2,-2)   \\
             \hline
              $s_2$ & \textcolor{red}{(7,2)} & (0, 0) \\
             \hline
       \end{tabular} &  \begin{tabular}{|c|c|c|} 
                          \hline
              & $s_1$ & $s_2$   \\
              \hline
              $s_1$ &  (-1,1) & (2,-2)   \\
             \hline
              $s_2$ &  (1,-1) & \textcolor{blue}{(1,1)} \\
             \hline 
       \end{tabular}\\
       \hline
    \end{tabular}}] (table2) at (11,2.5) {};

    \node[main node,text width=.25cm, label=below:{$\scriptscriptstyle mG^{(0)}$}] (21) {};
    
    \node[main node,text width=.25cm, label=below:{$ \scriptscriptstyle\stirlingii{mG^{(1a)}}{mG^{(2a)}}$}] (22) [above right= .6cm and 2.8cm of 21] {};
    
    \node[main node,text width=.25cm, label=below:{$ \scriptscriptstyle mG^{(1b)}$}] (23) [below right= .8cm and 2.8cm of 21] {};

    \node[] at (1.45, 0.75)   (a3) {\textcolor{yaleblue}{$\scriptscriptstyle (2,1)$}};
    \node[] at (3.2, 2.2)   (b3) {\textcolor{yaleblue}{$\scriptscriptstyle (2,1)$}};
    \node[] at (1.4, -0.2)   (c3) {\textcolor{yaleblue}{$\scriptscriptstyle (2,2)$}};
    \path[upsdellred, -{Latex[length=2mm]}]
 	    (22) edge [loop above] node {} (22)
		  (21) edge node {} (22)
		  (21) edge node {} (23);

    \draw[-] (8.5,3.5) node {\textcolor{white}{$\text{time step } = 1$}};

\end{tikzpicture}
\end{adjustbox}

};

\node[N, below=of w21] (w31) {

\begin{adjustbox}{scale=1}
\begin{tikzpicture}[->,>=stealth',shorten >=1pt,node distance=2.6cm,
        thick,main node/.style={circle,inner sep=0pt, fill = black,align=center,draw,font=\tiny},
        every loop/.style={min distance=8mm,in=60,out=120,looseness=18},
        set/.style={draw,rectangle, minimum width=1.7cm, minimum height=1.5cm,inner sep=0pt,align=center}]
    \centering
    
    \node[set,fill=yellow!40,text width=1.2cm,minimum width=1.9cm, minimum height=6cm, shift={(0, 0)},label={[xshift=0cm, yshift=0cm]$\scriptscriptstyle\AD{(1)}{\{mG\}}$}] 
            (nat) at (3,2.5)  (rea) {};
    \node[set,fill=yellow!70,text width=1.2cm,minimum width=1.9cm, minimum height=6cm, shift={(0, 0)}, label={[xshift=0cm, yshift=0cm]$\scriptscriptstyle \AD{(2)}{\{mG\}}$}] (int) at (6,2.5)  {};
     \node[set,fill=yellow!10,text width=1.2cm,minimum width=1.9cm, minimum height=6cm, shift={(0, 0)}, label={[xshift=0cm, yshift=0cm]$\scriptscriptstyle\AD{(0)}{\{mG\}}$}] (nat) at (0,2.5) {};

    \node[set,fill=beige,text width=1.2cm,minimum width=9.9cm, minimum height=6cm, shift={(2, 0)}, label={[xshift=0cm, yshift=-5.5cm]    \begin{tabular}{|c|c||c|}
        \hline
         &  $mG^{(1b)}$ & $mG^{(2b)}$\\
         \hline
        $G^0$ & \begin{tabular}{|c|c|c|} 
                          \hline
              & $s_1$ & $s_2$   \\
              \hline
              $s_1$ &  (6,6) & (2,7)   \\
             \hline
              $s_2$ &  (7,2) & (1, 1) \\
             \hline 
       \end{tabular} & \begin{tabular}{|c|c|c|} 
                          \hline
              & $s_1$ & $s_2$   \\
              \hline
              $s_1$ &  (6,6) & (2,7)   \\
             \hline
              $s_2$ &  (7,2) & (1, 1) \\
             \hline 
       \end{tabular} \\
       \hline
       $G^1$ & \begin{tabular}{|c|c|c|}
                          \hline
              & $s_1$ & $s_2$   \\
              \hline
              $s_1$ &  (2,2) & (0,3)   \\
             \hline
              $s_2$ &  \textcolor{red}{(7,2)} &\textcolor{red} {(1, 1)} \\
             \hline
       \end{tabular} &  \begin{tabular}{|c|c|c|} 
                          \hline
              & $s_1$ & $s_2$   \\
              \hline
              $s_1$ &  (2,2) & (0,3)   \\
             \hline
              $s_2$ &  \textcolor{blue}{(7, 2)} & \textcolor{blue}{(1,1)}  \\
             \hline 
       \end{tabular} \\
       \hline
       $G^2$ & \begin{tabular}{|c|c|c|} 
                          \hline
              & $s_1$ & $s_2$   \\
              \hline
              $s_1$ &  (-1, 1) & (2, -2)   \\
             \hline
              $s_2$ &  \textcolor{blue}{(7, 2)} & \textcolor{blue}{(1,1)} \\
             \hline 
       \end{tabular} &  \begin{tabular}{|c|c|c|} 
                          \hline
              & $s_1$ & $s_2$   \\
              \hline
              $s_1$ &  (-1, 1) & (2, -2)   \\
             \hline
              $s_2$ &  \textcolor{blue}{(7, 2)} & \textcolor{blue}{(1,1)} \\
             \hline 
       \end{tabular}\\
       \hline
    \end{tabular}}] (table3) at (11,2.5) {};
    
    \node[main node,text width=.25cm, label=below:{$ \scriptscriptstyle mG^{(0)}$}] (31) {};
 	\node[main node,text width=.25cm, label=below:{$ \scriptscriptstyle\stirlingii{mG^{(1a)}}{mG^{(2a)}}$}] (32) [above right= .6cm and 2.8cm of 31] {};
 	\node[main node,text width=.25cm, label=below:{$ \scriptscriptstyle mG^{(1b)}$}] (33) [below right= .8cm and 2.8cm of 31] {};
 	\node[main node,text width=.25cm, label=below:{$ \scriptscriptstyle\stirlingii{mG^{(2b)}}{mG^{(3b)}}$}] (34) [above right= .6cm and 2.5cm of 33] {};
 	\node[main node,text width=.25cm, label=below:{$ \scriptscriptstyle\stirlingii{mG^{(2c)}}{mG^{(3c)}}$}] (35) [below right= .6cm and 2.5cm of 33] {};
    \node[] at (1.45, 0.75)   (a3) {\textcolor{yaleblue}{$\scriptscriptstyle (2,1)$}};
    \node[] at (3.2, 2.1)   (b3) {\textcolor{yaleblue}{$\scriptscriptstyle (2,1)$}};
     \node[] at (1.4, -0.2)   (c3) {\textcolor{yaleblue}{$\scriptscriptstyle (2,2)$}};
    \node[] at (4.4, -.25)   (c3) {\textcolor{yaleblue}{$\scriptscriptstyle (2,1)$}};
    \node[] at (4.4, -1.25)   (d3) {\textcolor{yaleblue}{$\scriptscriptstyle (2,2)$}};
    \node[] at (6.05, 1.05)   (e3) {\textcolor{yaleblue}{$\scriptscriptstyle (2,1)$}};
 	 \path[upsdellred, -{Latex[length=2mm]}]
 	    (32) edge [loop above] node {} (22)
 	    (34) edge [loop above] node {} (24)
		(31) edge node {} (32)
		(31) edge node {} (33)
		(33) edge node {} (35)
		(33) edge node {} (34);
\draw[-] (8.5,3.5) node {\textcolor{white}{$\text{time step } = 2$}};

\end{tikzpicture}
\end{adjustbox}

};
    \end{scope}

    \end{tikzpicture}

\end{adjustbox}
\caption[ Schematic representation of Example. ]{\small Schematic representation of Example~\ref{example:running example-I}.}
    \label{fig:example adaptation}
\end{figure}

\addtocounter{example}{-1}
\begin{example}[continued]
\label{example:running example-V}
For $t = 2$, let us first consider $\mGt{2a}$. As mentioned, $\mGt{2a} = \mGt{1a}$, leading to $\mGt{3a} = \mGt{1a}$, as previously discussed in Example \ref{example:running example-IV}.

Analogously for $\mGt{2c}$, we note that $\mGt{2c} = \mGt{1b}$ so, again, from the analysis in the previous steps of Example \ref{example:running example-IV}, it branches into $\mGt{3a} = \mGt{1b}$, $\mGt{3b} = \mGt{2b}$.

Finally, for $\mGt{2b}$, we have that both players' subjective games, $G^{1,(2b)}$ and $G^{2, (2b)}$, have a pure Nash equilibrium with strategy profile $((0, 1)), (1, 0))$. Thus, the $nme$ of $\mGt{2b}$ is $\sigma = ((0, 1), (1, 0))$, for which $\chi(\sigma) = \{(2, 1)\}$. 
Observe that this position in $\chi(\sigma)$ is known to the players, i.e., it holds that $P^{i,(2b)}_{\vec{v}} = P^0_{\vec{v}}$, for $i\in \{1,2\}$, $\vec{v} = \{(2,1)\}$. Thus, $\mGt{3c} = \mGt{2b}$.

Combining the above, we observe that $\AD{(3)}{\{\mGt{0}\}} = \{ \mGt{1}, \mGt{2b} \} = \AD{(2)}{\{\mGt{0}\}}$, so the Adaptation Procedure terminates at step $2$, i.e., $\mathfrak{L}_{\AD{}{\{\mGt{0}\}}} = 2$.

Now let us identify the $sme$s of $\mGt{0}$. As explained above, and in the previous steps of Example \ref{example:running example-IV}, $NME(\mGt{1a}) = \{ ((0,1), (1,0)) \}$, and $NME(\mGt{2a}) = \{ ((0,1), (1,0)) \}$. Thus, the profile $\sigma = ((0,1), (1,0))$ is an $sme$.
However, the sole $nme$ of $\mGt{1b}$, $((0, 1), (1/3, 2/3))$, does not meet the criteria for an $sme$ since $mG_{\vec v} \notin \AD{\infty}{mG}$ for $\vec v = (2,1)$, and thus $\mGt{1b}$ fails the second condition of Definition \ref{def:sme}.
\end{example}

\subsection{Adaptation Procedure: Properties}
\label{sec:adaptation-procedure: properties}

\subsubsection{General properties}
\label{subsec:general-properties}

We will start by providing two useful properties of the Adaptation Procedure. The first shows that the Adaptation Procedure is ``local'', i.e., the different misinformation games in the input of $\AD{}{.}$ do not interact with each other:

\begin{proposition}\label{prop:additive}
For any set of misinformation games $M$, $\AD{}{M} = \bigcup_{mG\in M} \AD{}{\{mG\}}$.
\end{proposition}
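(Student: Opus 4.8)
The plan is to prove the set equality by double inclusion, unwinding Definition~\ref{def:AD(M)-adaptation procedure iterative process} on both sides. The key observation is that the defining set-builder
\[
\AD{}{M} = \{ mG_{\vec{u}} \mid mG \in M,\ \vec{u} \in \chi(\sigma),\ \sigma \in NME(mG) \}
\]
places membership behind an existential quantifier over a \emph{single} element $mG$ of $M$: a game $\widehat{mG}$ lies in $\AD{}{M}$ if and only if there exist one $mG \in M$, one $\sigma \in NME(mG)$, and one position $\vec u \in \chi(\sigma)$ with $\widehat{mG} = mG_{\vec u}$. Since the remaining witnesses ($\sigma$ and $\vec u$) depend only on that chosen $mG$ and not on the rest of $M$, the quantifier over $M$ commutes with set union, which is exactly the content of the claim.

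For the inclusion $\AD{}{M} \subseteq \bigcup_{mG \in M} \AD{}{\{mG\}}$, I would take an arbitrary $\widehat{mG} \in \AD{}{M}$ and extract the witnesses $mG \in M$, $\sigma \in NME(mG)$, $\vec u \in \chi(\sigma)$ with $\widehat{mG} = mG_{\vec u}$. These same witnesses certify $\widehat{mG} \in \AD{}{\{mG\}}$ (nothing in the defining predicate changes when $M$ is replaced by the singleton $\{mG\}$), and since $mG \in M$ this places $\widehat{mG}$ in the union. The reverse inclusion is the symmetric argument: from $\widehat{mG} \in \bigcup_{mG \in M} \AD{}{\{mG\}}$ I would pick the index $mG \in M$ whose term contains $\widehat{mG}$, unpack $\widehat{mG} \in \AD{}{\{mG\}}$ to obtain $\sigma$ and $\vec u$, and observe that $mG \in M$ makes these the required witnesses for $\widehat{mG} \in \AD{}{M}$.

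I do not expect a genuine obstacle here: the statement is a set-theoretic distributivity that holds essentially by construction, and the proof is a routine unfolding of definitions. The only point worth stating explicitly is the conceptual one flagged before the proposition, namely that the predicate defining $\AD{}{\cdot}$ refers to one misinformation game at a time, so distinct games in $M$ cannot ``interact'' to produce spawned games that neither produces alone. Making that remark precise is what the two inclusions accomplish, and it is this \emph{locality} that justifies treating the Adaptation Procedure on a set $M$ as the independent, parallel application of the procedure to each $mG \in M$ in the subsequent results.
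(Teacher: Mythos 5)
Your proof is correct and follows essentially the same route as the paper's: the paper simply rewrites the set-builder defining $\AD{}{M}$ as $\bigcup_{mG \in M} \{ mG_{\vec{u}} \mid \vec{u} \in \chi(\sigma),\ \sigma \in NME(mG) \}$, which is exactly the quantifier-commutes-with-union observation you make explicit via double inclusion. No gap; your version is just a more verbose unfolding of the same one-line argument.
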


The next two results essentially show that an $mG$ that appears in any step of the Adaptation Procedure cannot reappear in any subsequent step, unless it stays there permanently:

\begin{proposition}\label{prop:mG eq mG'}
Take some finite sequence $mG_1, \dots, mG_n$ such that $mG_{i+1} \in \AD{}{\{ mG_i \}}$, for $i \in [n-1]$ and $mG_1 \in \AD{}{\{ mG_n \}}$. Then, $mG_i = mG_j$ for all $i, j$.
\end{proposition}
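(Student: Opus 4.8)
The plan is to attach to each misinformation game a nonnegative integer potential that cannot increase along a single step of the Adaptation Procedure, and then exploit the cyclic structure of the hypothesis to force that potential to remain constant, which in turn forces every update in the cycle to be trivial. Concretely, for a canonical misinformation game $mG = \langle G^0, G^1, \dots, G^{\vert N \vert}\rangle$ with $G^i = \langle N, S, P^i\rangle$, I would define the set of \emph{uncorrected positions}
\begin{equation*}
D(mG) = \{\vec{v} \in S : P^i_{\vec{v}} \neq P^0_{\vec{v}} \text{ for some } i \in [\vert N \vert]\},
\end{equation*}
i.e.\ the positions at which at least one subjective payoff vector still disagrees with the actual game. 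Since $S$ is finite, $\vert D(mG)\vert$ is a well-defined nonnegative integer. The point that makes $D(\cdot)$ useful is that the actual game $G^0$ is never touched by the update operation (Definition~\ref{def:vec_def} rewrites only the subjective matrices $P^i$), so the reference values $P^0_{\vec{v}}$ stay fixed and $D$ is an intrinsic, meaningful quantity throughout the procedure.

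First I would establish the single-step behaviour of $D$ under a $\vec{u}$-update. By Definitions~\ref{def:update_operation} and~\ref{def:vec_def}, $mG_{\vec{u}}$ agrees with $mG$ at every position $\vec{w} \neq \vec{u}$ and sets $P^i_{\vec{u}} \leftarrow P^0_{\vec{u}}$ for every $i$; hence $\vec{u} \notin D(mG_{\vec{u}})$ while membership of every other position is unchanged. This yields the exact identity $D(mG_{\vec{u}}) = D(mG)\setminus\{\vec{u}\}$, so $\vert D(mG_{\vec{u}})\vert \leq \vert D(mG)\vert$, with equality if and only if $\vec{u} \notin D(mG)$. Moreover, when $\vec{u} \notin D(mG)$ we already have $P^i_{\vec{u}} = P^0_{\vec{u}}$ for all $i$, so the replacement overwrites each entry with its own value and $mG_{\vec{u}} = mG$. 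This is the key lemma: a cardinality-preserving step is necessarily the identity.

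Now I would apply this to the cyclic hypothesis. Each edge $mG_{i+1}\in\AD{}{\{mG_i\}}$ means $mG_{i+1} = (mG_i)_{\vec{u}_i}$ for some $\vec{u}_i \in \chi(\sigma_i)$ with $\sigma_i \in NME(mG_i)$, together with the closing edge $mG_1 = (mG_n)_{\vec{u}_n}$. Chaining the single-step inequality around the cycle gives $\vert D(mG_1)\vert \geq \vert D(mG_2)\vert \geq \dots \geq \vert D(mG_n)\vert \geq \vert D(mG_1)\vert$, so every inequality is an equality and $\vert D(mG_i)\vert$ is constant. By the equality case of the lemma, each $\vec{u}_i \notin D(mG_i)$, whence each update is trivial and $mG_{i+1} = mG_i$; consequently all the $mG_i$ coincide.

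I do not expect a genuine obstacle here; the argument is a clean monotone-potential computation. The one point requiring care is the bookkeeping behind the \emph{exact} identity $D(mG_{\vec{u}}) = D(mG)\setminus\{\vec{u}\}$ — in particular, verifying that the update alters only position $\vec{u}$ and only in the subjective matrices, leaving $G^0$ and hence the reference values $P^0_{\vec{v}}$ untouched — since it is this identity, rather than a mere inequality, that is tight enough to let me conclude equality of the \emph{games} and not just of the potential.
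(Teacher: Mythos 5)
Your proof is correct. The paper argues by contradiction: it observes that the cyclic hypothesis places $mG_i \in \AD{(n)}{\{mG_i\}}$ for every $i$, so that if $mG_i \neq mG_j$ then $mG_j$ results from $mG_i$ by updating some payoff entries while, continuing around the cycle, $mG_i$ results back from $mG_j$ by further updates; it then dismisses this as absurd with the informal remark that ``replacements are cumulative and cannot be undone.'' Your potential-function argument is a different, and tighter, formalisation of that same irreversibility. The exact identity $D(mG_{\vec{u}}) = D(mG)\setminus\{\vec{u}\}$, combined with the observation that a cardinality-preserving update is necessarily the identity map, converts the paper's informal appeal into a short monotonicity computation around the cycle, and it yields equality of the games directly rather than by contradiction. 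What your version buys is rigour: it pins down exactly which quantity cannot be undone --- the set of positions at which some subjective matrix still disagrees with $P^0$, which is well defined throughout precisely because the update operation never rewrites $G^0$ --- and it makes explicit the equality case (a step that does not strictly decrease $\vert D\vert$ must touch an already-corrected position and hence change nothing), which is the step the paper leaves to the reader. What the paper's phrasing buys is only brevity.
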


\begin{proposition}\label{prop:mG-in-terminal-set}
For any two misinformation games $mG$, $mG'$ and $t \geq 0$, if $mG' \in \AD{(t)}{mG}$ and $mG' \in \AD{}{mG'}$ then $mG' \in \AD{\infty}{mG}$.
\end{proposition}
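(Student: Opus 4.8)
The plan is to treat $\AD{}{\cdot}$ as a map $f$ on sets of misinformation games, so that the iterate $\AD{(t)}{M}$ is the $t$-fold composite $f^t(M)$. The defining recursion (Definition~\ref{def:AD(M)-adaptation procedure iterative process}) reads $f^{t+1}(M) = f^t(f(M))$; as for any iterated map this is equivalent to $f^{t+1}(M) = f(f^t(M))$, i.e., $\AD{(t+1)}{M} = \AD{}{\AD{(t)}{M}}$, which I would record by a one-line induction and use below. By Theorem~\ref{thm:bounded mG} the length $T := \LAD{mG}$ is finite, so $\AD{\infty}{mG} = \AD{(T)}{mG}$ is well defined. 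I would first note that the sequence $\bigl(\AD{(s)}{mG}\bigr)_{s\ge 0}$ is constant from step $T$ onward: by minimality of $T$ we have $\AD{(T+1)}{mG} = \AD{(T)}{mG}$, and applying $f$ to both sides and inducting gives $\AD{(s)}{mG} = \AD{(T)}{mG} = \AD{\infty}{mG}$ for every $s \ge T$.

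The heart of the argument is a \emph{persistence} claim: if $mG' \in \AD{(s)}{mG}$ and $mG' \in \AD{}{mG'}$, then $mG' \in \AD{(s+1)}{mG}$. Indeed, $\AD{(s+1)}{mG} = \AD{}{\AD{(s)}{mG}}$, which Proposition~\ref{prop:additive} expands as $\bigcup_{mH \in \AD{(s)}{mG}} \AD{}{\{mH\}}$. Since $mG'$ is one of the games $mH$ appearing in this union, the set $\AD{}{\{mG'\}}$ is among those being unioned, and the self-loop hypothesis $mG' \in \AD{}{mG'}$ then places $mG'$ in the union. Starting from the given membership $mG' \in \AD{(t)}{mG}$ and iterating this claim yields, by induction on $s$, that $mG' \in \AD{(s)}{mG}$ for all $s \ge t$.

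To finish I would set $s = \max(t, T)$. Since $s \ge t$, persistence gives $mG' \in \AD{(s)}{mG}$; since $s \ge T$, the eventual-constancy step gives $\AD{(s)}{mG} = \AD{\infty}{mG}$. Combining, $mG' \in \AD{\infty}{mG}$, as required. I do not expect a genuine obstacle here; the only points needing care are the two equivalent forms of the recursion and the constancy of the iterates past the termination step $T$, both of which hold simply because $\AD{}{\cdot}$ is a function (equal inputs give equal outputs). Note that Proposition~\ref{prop:mG eq mG'} is not needed: the self-loop is supplied as a hypothesis rather than extracted from a cycle, so locality (Proposition~\ref{prop:additive}) together with finiteness of $T$ suffices.
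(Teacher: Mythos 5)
Your proof is correct and follows essentially the same route as the paper: the paper's own argument is precisely the observation that the self-loop hypothesis propagates $mG'$ through all subsequent iterates by induction, combined with the membership $mG' \in \AD{(t)}{mG}$ and termination at a finite step. Your write-up merely makes explicit the locality step (via Proposition~\ref{prop:additive}) and the eventual constancy of the iterates past $\LAD{mG}$, which the paper leaves implicit.
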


From Definition \ref{def:sme}, not all misinformation games in the Stable Set contribute an $sme$. In other words, some of the games in $\AD{\infty}{mG}$ are irrelevant when it comes to computing the $sme$s. Moreover, the termination criterion outlined in Definition \ref{def:endingcriterion} indicates a ``global'' situation, where the sets of misinformation games are identical across two successive steps. The set-theoretic comparison of two sets of misinformation games would be computationally expensive, so a more ``local'' termination criterion is preferable. Therefore, an interesting question is the following: can we know, while examining a single misinformation game, whether it is safe to discard it from future iterations? This problem is addressed with the notion of the terminal set.

\begin{definition}
    \label{def:terminal-set}
    Let $\mGt{0}$ be a misinformation game. 
    We define the \emph{terminal set} of the Adaptation Procedure on $\mGt{0}$ as follows:
    \begin{equation*}
        \displaystyle \mathcal{T} = \lbrace mG \in \AD{*}{\mGt{0}} \mid mG \in \AD{}{mG} \rbrace.
    \end{equation*}
\end{definition}

\addtocounter{example}{-1}
\begin{example}[continued]
\label{example:running example-VI}
The terminal set in Example~\ref{example:running example-II} is $\mathcal{T} = \{\mGt{1a}, \mGt{2b}\}$.
\end{example}

The next proposition shows that the games in the terminal set are also in the Stable Set; in addition, they are the only members of the Stable Set that matter when it comes to computing $sme$s:

\begin{proposition}
    \label{prop:algo-terminal-set-sme}
    Let $\mathcal{T}$ be the terminal set of the Adaptation Procedure on $\mGt{0}$.
    Then:
    \begin{itemize}
        \item $\mathcal{T} \subseteq \AD{\infty}{\mGt{0}}$
        \item For any $\sigma \in SME(\mGt{0})$, there exists $mG\in \mathcal{T}$ such that $\sigma \in NME(mG)$.    
    \end{itemize}
\end{proposition}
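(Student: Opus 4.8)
The plan is to prove the two bullets separately; both turn out to be short consequences of the definitions together with Proposition~\ref{prop:mG-in-terminal-set}, so no new machinery is needed and the work is essentially one of aligning quantifiers correctly.

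For the first bullet I would take an arbitrary $mG \in \mathcal{T}$ and simply translate the two defining conditions of the terminal set into the hypotheses of Proposition~\ref{prop:mG-in-terminal-set}. By Definition~\ref{def:terminal-set}, $mG \in \AD{*}{\mGt{0}} = \bigcup_{t \geq 0} \AD{(t)}{\mGt{0}}$, so there is some $t \geq 0$ with $mG \in \AD{(t)}{\mGt{0}}$; and also $mG \in \AD{}{mG}$. These are exactly the hypotheses of Proposition~\ref{prop:mG-in-terminal-set}, read with the base game $\mGt{0}$ in the role of $mG$ and our element in the role of $mG'$, whose conclusion yields $mG \in \AD{\infty}{\mGt{0}}$. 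Since $mG$ was arbitrary, $\mathcal{T} \subseteq \AD{\infty}{\mGt{0}}$.

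For the second bullet I would start from $\sigma \in SME(\mGt{0})$ and unfold Definition~\ref{def:sme}: there is some $\widehat{mG} \in \AD{\infty}{\{\mGt{0}\}}$ with $\sigma \in NME(\widehat{mG})$ and $\widehat{mG}_{\vec{v}} = \widehat{mG}$ for every $\vec{v} \in \chi(\sigma)$. The claim then follows by showing $\widehat{mG} \in \mathcal{T}$ and taking $mG := \widehat{mG}$, because $\sigma \in NME(\widehat{mG})$ already holds. To place $\widehat{mG}$ in $\mathcal{T}$ I verify its two membership conditions. Membership in $\AD{*}{\mGt{0}}$ is immediate, since $\AD{\infty}{\{\mGt{0}\}} = \AD{(t)}{\{\mGt{0}\}}$ for $t = \LAD{\{\mGt{0}\}}$ is one of the terms of the union defining $\AD{*}{\mGt{0}}$. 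For the self-loop condition $\widehat{mG} \in \AD{}{\widehat{mG}}$, I would invoke the definition $\AD{}{\{\widehat{mG}\}} = \{\widehat{mG}_{\vec{u}} \mid \vec{u} \in \chi(\sigma'),\ \sigma' \in NME(\widehat{mG})\}$: since $\sigma \in NME(\widehat{mG})$ and $\chi(\sigma)$ is nonempty (each $\supp(\sigma_j)$ is nonempty, so $\chi(\sigma)$ is a product of nonempty index sets), I may pick any $\vec{u} \in \chi(\sigma)$; the second part of Definition~\ref{def:sme} then gives $\widehat{mG}_{\vec{u}} = \widehat{mG}$, exhibiting $\widehat{mG}$ as an element of $\AD{}{\{\widehat{mG}\}}$. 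Hence $\widehat{mG} \in \mathcal{T}$, which finishes the argument.

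I do not expect a genuine obstacle here; the only care-point, in the second bullet, is a matter of quantifier bookkeeping. The $sme$ definition supplies a \emph{universally} quantified condition (``for all $\vec{v} \in \chi(\sigma)$''), whereas membership in $\AD{}{\{\widehat{mG}\}}$ is witnessed by a single \emph{existentially} quantified pair $(\sigma', \vec{u})$; one must note that the universal statement is more than enough to supply one such witness, and, relatedly, that $\chi(\sigma)$ is nonempty so at least one $\vec{u}$ is available. All the substantive content — that a self-looping game survives to the stable set — is already packaged in Proposition~\ref{prop:mG-in-terminal-set}, on which the first bullet rests directly.
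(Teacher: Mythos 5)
Your proof is correct and follows essentially the same route as the paper's (which is much terser): the first bullet via the self-loop persistence fact, here cleanly packaged as Proposition~\ref{prop:mG-in-terminal-set}, and the second by unfolding Definition~\ref{def:sme} to place the witnessing $\widehat{mG}$ in $\mathcal{T}$. Your version just spells out the details — including the nonemptiness of $\chi(\sigma)$ needed for the self-loop — that the paper leaves implicit.
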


Note that the opposite of the second bullet of Proposition \ref{prop:algo-terminal-set-sme} does not hold: a misinformation game in the terminal set may contain $NME$s that are not themselves $SME$s, because they ``lead'' to another misinformation game, thereby violating the condition ``for all $\vec{v} \in \chi(\sigma)$, $\widehat{mG}_{\vec{v}} = \widehat{mG}$'' of Definition \ref{def:sme}.

Consider the implications of Proposition \ref{prop:algo-terminal-set-sme} on the computational aspects of the Adaptation Procedure. When a game $mG$ is not a member of the terminal set $\mathcal{T}$, our interest shifts to its outcomes, specifically $\AD{}{mG}$, rather than the game itself, since it will not yield an $sme$. In contrast, if $mG$ is part of the terminal set, we retain it without further processing, as it will perpetually recur (as well as its outcomes), rendering any additional processing superfluous. On the other hand, the games in $\AD{}{mG} \setminus \{mG\}$ may be relevant, if any of them (or any of their descendants) is in the terminal set.

\subsubsection{Termination, and existence of smes}
\label{subsec:termination-existence}

In this subsection we show some results related to the termination of the Adaptation Procedure.

We start by showing that the Adaptation Procedure will always terminate when $mG$ is finite. In particular, when $mG$ is finite, then at any given point in the Adaptation Procedure a finite number of positions can be learnt, and the positions that remain to be learnt are also finite. Therefore the length of the procedure is bounded, as well as the total number of $mG$s that can be generated at any step, so the set generated by the Adaptation Procedure is finite and the Adaptation Procedure terminates. This result is formally phrased in the next propositions.

\begin{proposition}
    \label{thm:bounded mG}
    Consider a finite $\vert N \vert$-player canonical misinformation game $mG$ on $\vert S \vert$ strategies. For the length $\LAD{mG}$ of the Adaptation Procedure on $mG$, we have,
    \begin{equation}
        \label{eq:bounded mG}
        \LAD{mG} \leq \vert S \vert
    \end{equation}
\end{proposition}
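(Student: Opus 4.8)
The plan is to reduce the global, set-valued claim to a per-path counting argument and then lift it back to the iterates. First I would invoke Proposition~\ref{prop:additive}, so that $\AD{}{\cdot}$ distributes over unions and it suffices to treat a single game $mG$. Unfolding the recursion of Definition~\ref{def:AD(M)-adaptation procedure iterative process} identifies $\AD{(t)}{\{mG\}}$ with the $t$-fold composition, i.e. with the set of games reachable from $mG$ by a \emph{path} $H_0 = mG, H_1, \dots, H_t$ where $H_{s+1} = (H_s)_{\vec u_s}$ for some $\vec u_s \in \chi(\sigma_s)$, $\sigma_s \in NME(H_s)$. This reformulation is what makes the self-loops and mergers tractable.

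The key monotone quantity is the set of \emph{known positions}: for a game $H$ arising in the procedure, let $K(H) \subseteq S$ collect the positions $\vec v$ with $P^i_{\vec v} = P^0_{\vec v}$ for every $i$. Since the $\vec v$-update only overwrites subjective payoffs with the correct ones and is idempotent (Definition~\ref{def:vec_def}), $K$ never shrinks along a path. Each step is then either a \emph{self-loop} (when $\vec u_s \in K(H_s)$, the update does nothing and $H_{s+1} = H_s$) or \emph{productive} (when $\vec u_s \notin K(H_s)$, so $\vec u_s$ joins $K$ and $|K|$ strictly increases). Because $K \subseteq S$ forces $|K| \le |S|$, every path contains at most $|S|$ productive steps.

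The lifting step is where the real work lies. I would aim to prove $\AD{(|S|+1)}{\{mG\}} = \AD{(|S|)}{\{mG\}}$, which yields $\LAD{\{mG\}} \le |S|$ because the iterates form a fixpoint once two consecutive ones agree. For $\AD{(t+1)}{\{mG\}} \subseteq \AD{(t)}{\{mG\}}$ with $t \ge |S|$: any path of length $t+1 \ge |S|+1$ has more steps than the available productive ones, hence at least one self-loop, and deleting that self-loop step gives a length-$t$ path to the same endpoint. The reverse inclusion is by \emph{padding}: if a length-$t$ path contains a self-loop, I duplicate it to reach length $t+1$.

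The delicate boundary case is $t = |S|$ with a path having \emph{no} self-loop: then all $|S|$ steps are productive, forcing $|K(H_t)| = |S|$, so every subjective game of $H_t$ equals $G^0$. Here I would observe that such an $H_t$ must itself carry a self-loop, since every $nme$ now points only to known positions (and $NME(H_t) \neq \emptyset$ by Nash's theorem, with each $\chi(\sigma)$ nonempty), allowing the path to be padded at its endpoint. I expect this translation from ``bounded productive steps per path'' to ``the frontier set is constant from step $|S|$ onward,'' and in particular making the endpoint-padding argument airtight in this saturated boundary case, to be the main obstacle; the counting itself is routine.
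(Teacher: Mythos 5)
Your proposal is correct and rests on exactly the same idea as the paper's proof, which simply observes that along any branch a given position can be (productively) updated at most once and there are only $\vert S\vert$ positions. The paper leaves the lifting from this per-path count to the set-equality termination criterion of Definition~\ref{def:endingcriterion} entirely implicit, whereas you carry it out carefully (self-loop deletion/padding, and the saturated case where all positions are known); this is a welcome elaboration rather than a different route.
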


\begin{proposition}
    \label{prop:finite-mgs}
    Let $mG$ be a $\vert N \vert$-player canonical misinformartion game on $\vert S \vert$ strategy profiles. Then for the number $\vert \AD{*}{\{mG\}} \vert$ of all the misinformation games produced during the Adaptation Procedure on $mG$, it holds that,
    \begin{equation}
        \label{eq:finite-mgs}
        \vert \AD{*}{\{mG\}} \vert = O\left( \min\left[\vert S \vert^{\LAD{mG} + 1}, 2^{\vert S \vert}\right] \right)\footnote{
            When $\LAD{mG} \ll \vert S \vert$, then $\vert S \vert^{\LAD{mG} + 1} < 2^{\vert S \vert}$. On the other hand, when $\vert S \vert \approx \LAD{mG}$ or $\vert S \vert \gg \LAD{mG}$, we have $2^{\vert S \vert} < \vert S \vert^{\LAD{mG} + 1}$.
        }.
    \end{equation}
\end{proposition}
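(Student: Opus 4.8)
The plan is to prove the two upper bounds separately and take their minimum. The key structural observation I would establish first is that \emph{every} misinformation game reachable from $mG^{(0)}$ is of the form $mG_X$ for some set of positions $X \subseteq S$. Indeed $mG^{(0)} = mG_\emptyset$, and a single application of $\AD{}{\cdot}$ to $mG_X$ produces only games $(mG_X)_{\vec u} = mG_{X \cup \{\vec u\}}$ with $\vec u \in \chi(\sigma)$ for some $\sigma \in NME(mG_X)$; since $\chi(\sigma) \subseteq S$ (positions are exactly joint pure strategy profiles), we have $X \cup \{\vec u\} \subseteq S$. Using the idempotence and commutativity of $\vec v$-updates noted after Definition~\ref{def:vec_def}, a straightforward induction on $t$ shows that every element of $\AD{(t)}{\{mG\}}$ is of this form. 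This indexing immediately gives the first bound: the map $mG_X \mapsto X$ injects $\AD{*}{\{mG\}}$ into the power set of $S$, so $\vert \AD{*}{\{mG\}} \vert \leq 2^{\vert S \vert}$.

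For the second bound I would control the size of each level $\AD{(t)}{\{mG\}}$ via the branching factor. By Proposition~\ref{prop:additive}, $\vert \AD{(t+1)}{\{mG\}} \vert \leq \sum_{mG' \in \AD{(t)}{\{mG\}}} \vert \AD{}{\{mG'\}} \vert$, and for any single game $\vert \AD{}{\{mG'\}} \vert \leq \vert S \vert$, because every spawned game is $mG'_{\vec u}$ for some position $\vec u$ and there are at most $\vert S \vert$ positions available. An induction starting from $\vert \AD{(0)}{\{mG\}} \vert = 1$ then yields $\vert \AD{(t)}{\{mG\}} \vert \leq \vert S \vert^{t}$.

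The remaining ingredient is that $\AD{*}{\{mG\}}$ accumulates new games only through step $t^\ast := \LAD{mG}$, which Proposition~\ref{thm:bounded mG} guarantees is finite. Writing the iteration as $\AD{(t+1)}{\cdot} = \AD{}{\AD{(t)}{\cdot}}$, the termination criterion $\AD{(t^\ast+1)}{\{mG\}} = \AD{(t^\ast)}{\{mG\}}$ (Definition~\ref{def:endingcriterion}) propagates forward: applying $\AD{}{\cdot}$ to both sides and iterating shows the level-sets are constant for all $t \geq t^\ast$, whence $\AD{*}{\{mG\}} = \bigcup_{t=0}^{t^\ast} \AD{(t)}{\{mG\}}$. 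Summing the level bounds gives $\vert \AD{*}{\{mG\}} \vert \leq \sum_{t=0}^{t^\ast} \vert S \vert^{t} \leq \vert S \vert^{t^\ast+1}$ (for $\vert S \vert \geq 2$), which is the claimed $\vert S \vert^{\LAD{mG}+1}$ bound. Combining this with the $2^{\vert S \vert}$ bound from the subset-indexing argument yields $\vert \AD{*}{\{mG\}} \vert = O\!\left(\min\left[\vert S \vert^{\LAD{mG}+1}, 2^{\vert S \vert}\right]\right)$.

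I expect the main obstacle to be this last stabilisation step rather than the counting itself: one must argue rigorously, from the precise recursive definition in Definition~\ref{def:AD(M)-adaptation procedure iterative process} and the termination condition, that no genuinely new misinformation game appears after step $t^\ast$, instead of relying on the informal picture of ``a tree of depth $\LAD{mG}$''. The subset indexing (for $2^{\vert S \vert}$) and the branching-factor estimate (for $\vert S \vert^{\LAD{mG}+1}$) are then routine once the reachable games have been identified with subsets $X \subseteq S$ of learnt positions.
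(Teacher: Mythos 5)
Your proof is correct and follows essentially the same two-pronged strategy as the paper: bounding $\vert \AD{*}{\{mG\}} \vert$ by $2^{\vert S \vert}$ via the identification of reachable games with subsets of learnt positions, and by $\vert S \vert^{\LAD{mG}+1}$ via a branching-factor-$\vert S \vert$ tree of depth $\LAD{mG}$ whose levels you sum geometrically. The only difference is one of rigour: you explicitly justify the stabilisation step ($\AD{*}{\{mG\}} = \bigcup_{t=0}^{\LAD{mG}} \AD{(t)}{\{mG\}}$) that the paper's proof leaves implicit in its appeal to ``a tree with $\LAD{mG}$ height''.
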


\begin{corollary}
    \label{cor:ad-infty-bound}
    Assume a $\vert N \vert$-player canonical misinformation game $mG$, on $\vert S \vert$ strategies. Then the stable set $\AD{\infty}{mG}$ is also finite.
\end{corollary}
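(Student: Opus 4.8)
The plan is to derive this corollary directly from the two preceding results, Propositions~\ref{thm:bounded mG} and~\ref{prop:finite-mgs}, since the finiteness of the entire orbit of the Adaptation Procedure has effectively already been established there. First I would unfold the definition of the Stable Set: by construction $\AD{\infty}{\{mG\}} = \AD{(t)}{\{mG\}}$ for $t = \LAD{\{mG\}}$, where this value of $t$ exists precisely because Proposition~\ref{thm:bounded mG} guarantees termination (indeed $\LAD{mG} \leq \vert S \vert$). The key observation is then that this single post-termination set is one of the terms in the union $\AD{*}{\{mG\}} = \bigcup_{k=0}^{\infty} \AD{(k)}{\{mG\}}$, and hence $\AD{\infty}{\{mG\}} \subseteq \AD{*}{\{mG\}}$.

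Next I would invoke the finiteness of the full orbit. Since $mG$ is a finite canonical misinformation game, $\vert S \vert$ is finite, and by Proposition~\ref{thm:bounded mG} we also have $\LAD{mG} \leq \vert S \vert < +\infty$; consequently both quantities $\vert S \vert^{\LAD{mG}+1}$ and $2^{\vert S \vert}$ are finite. Proposition~\ref{prop:finite-mgs} then yields $\vert \AD{*}{\{mG\}} \vert = O\!\left(\min\left[\vert S \vert^{\LAD{mG}+1},\, 2^{\vert S \vert}\right]\right) < +\infty$, so $\AD{*}{\{mG\}}$ is a finite set. A subset of a finite set is finite, so $\AD{\infty}{\{mG\}}$ is finite, which is exactly the claim.

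The main point requiring care is not a genuine obstacle but a bookkeeping check: one must confirm that the Stable Set is legitimately a subset of the orbit $\AD{*}{\{mG\}}$, which is immediate once the terminating step $t = \LAD{mG}$ is known to exist (Proposition~\ref{thm:bounded mG}). The substantive effort, namely the counting argument bounding $\vert \AD{*}{\{mG\}} \vert$, has already been carried out in Proposition~\ref{prop:finite-mgs}, so at this stage there is no real difficulty; the corollary is essentially a repackaging of that bound restricted to the single set produced after termination.
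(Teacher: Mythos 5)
Your argument is correct and follows exactly the paper's route: the paper likewise observes that $\AD{\infty}{mG} \subseteq \AD{*}{mG}$ and then invokes Proposition~\ref{prop:finite-mgs} to conclude finiteness. You simply spell out the inclusion and the role of Proposition~\ref{thm:bounded mG} in guaranteeing that the terminating step exists, which the paper leaves implicit.
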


Unfortunately, a similar result cannot be shown for infinite games. The following counter-example proves this fact:

\addtocounter{example}{+1}
\begin{example}
\label{ex:inf-games counter-example}
Consider $G^0 = \langle N, S, P^0 \rangle$ such that $N = \{ r, c\}$, $S_r = S_c = \{1,2,\dots\}$, $S = S_r \times S_c$, and the payoff for a position $(x,y) \in S_r \times S_c$ is computed as follows: 
\[
P^0_{(x,y)} = \left(\frac 1 x, \frac 1 y \right)
\]
As a result of this definition, the only (pure) Nash equilibrium for $G^0$ is in position $(1,1)$, where the payoff is $(1,1)$.

Now consider the canonical misinformation game $mG = \langle G^0, G^r, G^c \rangle$, where $G^r = \langle N, S, P^r \rangle$ such that $P^r_{(x,y)} = (\frac 1 x + 1, \frac 1 y)$, and $G^c = G^0$. By the definition of $mG$ we note that player $c$ knows the correct payoffs and will always play strategy $1$ (and will never learn anything from the Adaptation Procedure). On the other hand, player $r$ knows the correct payoffs as far as player $c$ is concerned, but her own payoffs are distorted, and she believes that their actual value is 1 point more than they really are. The key observation here is that, for player $r$, any of her subjective payoffs is better than any of her actual ones. Therefore, when she learns any position, this position becomes highly unattractive and cannot be selected again.

More formally, we note that $NME(mG) = \{ \sigma_0 \}$, where $\chi(\sigma_0) = \{ (1,1) \}$. Thus, $\AD{(1)}{\{ mG \}} = \{ \mGt{1} \}$, where $\mGt{1} = mG_{(1,1)}$. It is easy to see that $NME(\mGt{1}) = \{ \sigma_1 \}$, where $\chi(\sigma_1) = \{ 2,1 \}$. Continuing this process, we observe that $\AD{(i)}{ \{ mG \}} = \{\mGt{i}\}$, where $\mGt{i}$ is such that the positions $(1,1), (2,1), \dots, (i,1)$ have been learnt. But the only (pure) $nme$ of $\mGt{i}$ corresponds to the position $(i+1,1)$. As a result, the Adaptation Procedure will continuously lead to the learning of new positions (and, thus, to new misinformation games), which shows that the Adaptation Procedure will not terminate.\qed
\end{example}

Due to this negative result, all subsequent analysis focuses on finite misinformation games.

The following theorem characterises a Stable Set as the closure of the terminal set with respect to the adaptation operator $\AD{}{.}$.
    
\begin{theorem}[Stable Set Characterisation]
\label{theo:stable-set-characterisation}
Let $\AD{\infty}{mG}$ be the Stable Set and $\mathcal{T}$ the terminal set of the Adaptation Procedure on $\mGt{0}$. Then,
\begin{equation*}
    \label{eq:stable-set-characterisation}
    \AD{\infty}{mG} = \AD{*}{\mathcal{T}}.
\end{equation*}
\end{theorem}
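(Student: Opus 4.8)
The plan is to recast the Adaptation Procedure as a finite directed graph and translate the three objects in the statement into reachability conditions on it. Let $\mathcal{G}$ be the directed graph whose vertex set is $\AD{*}{\{\mGt{0}\}}$ (finite by Proposition~\ref{prop:finite-mgs}) and which has an edge $mG \to mG'$ exactly when $mG' \in \AD{}{\{mG\}}$. First I would record two bookkeeping facts. By Proposition~\ref{prop:additive} and a straightforward induction on $t$, each operator $\AD{(t)}{\cdot}$ is additive (hence monotone) and the composition law $\AD{(a)}{\AD{(b)}{M}} = \AD{(a+b)}{M}$ holds; consequently, for a single game, $mG' \in \AD{(t)}{\{\mGt{0}\}}$ is equivalent to the existence of a walk of length exactly $t$ from $\mGt{0}$ to $mG'$ in $\mathcal{G}$. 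Second, since the procedure terminates (Definition~\ref{def:endingcriterion}) at step $L := \LAD{\{\mGt{0}\}}$, the sets $\AD{(t)}{\{\mGt{0}\}}$ are constant for $t \geq L$, each equal to the Stable Set $\AD{\infty}{mG}$.

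The structural engine of the argument is Proposition~\ref{prop:mG eq mG'}: it forbids nontrivial cycles, so every closed walk in $\mathcal{G}$ collapses to a single repeated vertex carrying a self-loop. Hence the terminal set $\mathcal{T}$ (Definition~\ref{def:terminal-set}) is precisely the set of self-loop vertices reachable from $\mGt{0}$, and a terminal game, once reached at some distance $d$, persists in $\AD{(t)}{\{\mGt{0}\}}$ for every $t \geq d$ (traverse its self-loop as many times as needed).

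For the inclusion $\AD{*}{\mathcal{T}} \subseteq \AD{\infty}{mG}$ I would take $mG' \in \AD{*}{\mathcal{T}}$, witnessed by a terminal game $mG_T \in \mathcal{T}$ and a walk of length $k$ from $mG_T$ to $mG'$. Since $mG_T$ is reachable from $\mGt{0}$ and persists, $mG_T \in \AD{(t)}{\{\mGt{0}\}}$ for all large $t$; monotonicity together with the composition law then places $mG'$ in $\AD{(t+k)}{\{\mGt{0}\}}$ for all large $t$, and any such index can be pushed past $L$, landing $mG'$ in the Stable Set. For the reverse inclusion $\AD{\infty}{mG} \subseteq \AD{*}{\mathcal{T}}$, take $mG' \in \AD{\infty}{mG}$; by eventual constancy it is reachable by a walk of length exactly $t := \max(L, n+1)$, where $n = |\AD{*}{\{\mGt{0}\}}|$. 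As $t > n$, the pigeonhole principle forces a repeated vertex on the walk, which by the no-nontrivial-cycles fact must carry a self-loop, hence lie in $\mathcal{T}$; the tail of the walk from that terminal vertex to $mG'$ exhibits $mG' \in \AD{*}{\mathcal{T}}$.

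I expect the main obstacle to be the careful handling of the ``exactly length $t$'' semantics built into $\AD{(t)}{\cdot}$: the whole argument hinges on converting the eventual stabilisation of these sets into the availability of arbitrarily long walks, and then on invoking Proposition~\ref{prop:mG eq mG'} to guarantee that any sufficiently long walk must pass through a self-loop vertex (a terminal game). Getting the quantifiers right — that a terminal game persists for all large $t$, and that an element of the Stable Set admits a walk strictly longer than the vertex count — is the delicate bookkeeping, whereas the graph-theoretic skeleton is routine once these translations are in place.
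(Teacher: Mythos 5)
Your proof is correct, and for the harder inclusion it takes a genuinely different route from the paper. For $\AD{*}{\mathcal{T}} \subseteq \AD{\infty}{mG}$ both arguments have essentially the same content: the paper invokes $\mathcal{T} \subseteq \AD{\infty}{mG}$ (Proposition~\ref{prop:algo-terminal-set-sme}) together with the fixed-point identity $\AD{}{\AD{\infty}{mG}} = \AD{\infty}{mG}$ and additivity, whereas you re-derive the persistence of terminal games by padding walks with self-loops and then push forward with the composition law — same idea, different bookkeeping. The real divergence is in $\AD{\infty}{mG} \subseteq \AD{*}{\mathcal{T}}$: the paper argues by contradiction, taking a putative $mG \in \AD{\infty}{mG} \setminus \AD{*}{\mathcal{T}}$, forming the set $C$ of its ancestors inside the Stable Set, extracting an ``oldest ancestor'' (a maximal element of the induced order, whose existence needs the acyclicity of Proposition~\ref{prop:mG eq mG'} plus finiteness), and deriving a contradiction with $\AD{}{\AD{\infty}{mG}} = \AD{\infty}{mG}$. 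You instead argue directly: any element of the Stable Set is reached by a walk of length $\max(\LAD{mG}, n+1)$ where $n = \vert \AD{*}{\{\mGt{0}\}}\vert$, pigeonhole forces a repeated vertex, and Proposition~\ref{prop:mG eq mG'} collapses the resulting closed sub-walk to a self-loop, placing that vertex in $\mathcal{T}$ and the target in $\AD{*}{\mathcal{T}}$. Your version buys a constructive witness (an explicit terminal ancestor on the walk) and sidesteps the slightly delicate existence claim for maximal elements that the paper relegates to a footnote; the paper's version avoids the ``exact walk length'' semantics of $\AD{(t)}{\cdot}$ that you correctly identify as the main point requiring care. Both rest on the same two pillars, finiteness (Proposition~\ref{prop:finite-mgs}) and the absence of nontrivial cycles (Proposition~\ref{prop:mG eq mG'}).
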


Intuitively the above theorem shows that a misinformation game $mG$ belongs the Stable Set of the Adaptation Procedure if it is either in the terminal set itself or has an ancestor that belongs to the terminal set. Moreover, Theorem \ref{theo:stable-set-characterisation} shows that the terminal set contains all the information of the Adaptation Procedure on a misinformation game $mG^0$. Not only all the $sme$s of the Adaptation Procedure are located somewhere between the elements of $\mathcal{T}$ (see Proposition \ref{prop:algo-terminal-set-sme}), but also we can \emph{reconstruct} the Stable Set $\AD{\infty}{mG^0}$, from the terminal set $\mathcal{T}$. As we shall see in Section \ref{sec:computing-adaptation-procedure}, keeping track of the terminal set suffices to compute the Adaptation Procedure.

We will now show that all finite misinformation games have an $sme$. To start with, we show the following result, which describes a condition sufficient for the existence of an $sme$. In particular, Proposition \ref{prop:mG leq mG'} states that if the players in some misinformation game $mG'$ learn nothing new from the respective $nme$s (i.e., if $\AD{}{mG'} = \{mG'\}$, then the $nme$s of $mG'$ are also $sme$s. Formally:

\begin{proposition}\label{prop:mG leq mG'}
If $mG' \in \AD{*}{\{mG\}}$ and $\AD{}{\{ mG' \}} = \{mG'\}$, then $NME(mG') \subseteq SME(mG)$.
\end{proposition}

\begin{proposition}
\label{prop:existence}
If $mG$ is finite, then $SME(mG) \neq \emptyset$.
\end{proposition}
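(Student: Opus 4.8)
The plan is to show that the Adaptation Procedure, starting from a finite $mG$, always produces at least one misinformation game that lies in the terminal set $\mathcal{T}$, and then invoke Proposition~\ref{prop:mG leq mG'} to conclude that this game contributes an $sme$. First I would appeal to the termination results already established: by Proposition~\ref{thm:bounded mG} the length $\LAD{mG}$ is finite (bounded by $\vert S \vert$), and by Proposition~\ref{prop:finite-mgs} the total collection $\AD{*}{\{mG\}}$ of games ever produced is finite. Thus the Stable Set $\AD{\infty}{mG}$ is a nonempty finite set of misinformation games, as recorded in Corollary~\ref{cor:ad-infty-bound}.

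The heart of the argument is to exhibit a game in $\AD{\infty}{mG}$ that is a fixed point of the adaptation operator, i.e., some $\widehat{mG}$ with $\AD{}{\{\widehat{mG}\}} = \{\widehat{mG}\}$. I would obtain this by following any single branch of the procedure: start at $mG$ and repeatedly pick one successor, generating a sequence $mG = mG_0, mG_1, mG_2, \dots$ with $mG_{k+1} \in \AD{}{\{mG_k\}}$. Since every $mG_k$ lies in the finite set $\AD{*}{\{mG\}}$, this sequence must eventually revisit a game, producing a cycle $mG_a, mG_{a+1}, \dots, mG_b = mG_a$. Proposition~\ref{prop:mG eq mG'} then forces every game in this cycle to be equal, so in fact $mG_a \in \AD{}{\{mG_a\}}$, meaning $mG_a$ is in the terminal set $\mathcal{T}$ (Definition~\ref{def:terminal-set}). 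An alternative, perhaps cleaner route is to argue directly that each update step strictly enlarges the set of ``learnt'' positions (those where the subjective payoffs already match $P^0$) until no new position can be learnt, at which point the branch has reached a game $\widehat{mG}$ with $\AD{}{\{\widehat{mG}\}} = \{\widehat{mG}\}$; this monotonicity is exactly what underlies the bound in Proposition~\ref{thm:bounded mG}.

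Once such a $\widehat{mG}$ is in hand, the conclusion is immediate. The game $\widehat{mG}$ satisfies $\widehat{mG} \in \AD{*}{\{mG\}}$ and $\AD{}{\{\widehat{mG}\}} = \{\widehat{mG}\}$, so Proposition~\ref{prop:mG leq mG'} yields $NME(\widehat{mG}) \subseteq SME(mG)$. Since $\widehat{mG}$ is itself a (finite) normal-form misinformation game, each of its components $G^i$ has a Nash equilibrium by Nash's theorem, and therefore $NME(\widehat{mG}) \neq \emptyset$; hence $SME(mG) \neq \emptyset$.

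The main obstacle is establishing the existence of the fixed-point game $\widehat{mG}$ rigorously, since $\AD{}{\{mG_k\}}$ may branch into several successors and one must argue that \emph{some} branch stabilises rather than merely that the whole set of games is finite. The pigeonhole/cycle argument combined with Proposition~\ref{prop:mG eq mG'} handles this, but care is needed to ensure the chosen sequence genuinely stays within $\AD{*}{\{mG\}}$ and that the recurrence $mG_1 \in \AD{}{\{mG_n\}}$ required by Proposition~\ref{prop:mG eq mG'} is met by closing the cycle correctly; the monotonicity viewpoint sidesteps the branching entirely and is probably the safer path to write up.
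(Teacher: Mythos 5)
Your first route has a genuine gap at the point where you pass from the cycle to Proposition~\ref{prop:mG leq mG'}. The pigeonhole argument together with Proposition~\ref{prop:mG eq mG'} only yields $\widehat{mG} \in \AD{}{\{\widehat{mG}\}}$, i.e.\ membership in the terminal set (a self-loop); but Proposition~\ref{prop:mG leq mG'} requires the strictly stronger fixed-point condition $\AD{}{\{\widehat{mG}\}} = \{\widehat{mG}\}$. A terminal-set game can have successors other than itself, and the paper explicitly warns (in the discussion following Proposition~\ref{prop:algo-terminal-set-sme}) that such a game may contribute no $sme$ at all. So the sentence ``Once such a $\widehat{mG}$ is in hand \dots $\AD{}{\{\widehat{mG}\}} = \{\widehat{mG}\}$'' asserts the needed property without justification. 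The repair is exactly your second route: follow a branch, always choosing a successor \emph{different} from the current game whenever one exists; each such proper step strictly enlarges the set of positions at which all subjective payoffs agree with $P^0$, and updates are never undone, so after at most $\vert S \vert$ proper steps you reach a $\widehat{mG}$ with no proper successor, which is precisely $\AD{}{\{\widehat{mG}\}} = \{\widehat{mG}\}$. From there Proposition~\ref{prop:mG leq mG'} plus Nash's theorem (to guarantee $NME(\widehat{mG}) \neq \emptyset$ --- a step you rightly make explicit and the paper leaves implicit) finishes the argument.

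For comparison, the paper proves the statement by contradiction: assuming $SME(mG) = \emptyset$, the contrapositive of Proposition~\ref{prop:mG leq mG'} gives every game in the finite Stable Set a successor distinct from itself; chaining these successors through the finite set forces a cycle of pairwise-distinct games, contradicting Proposition~\ref{prop:mG eq mG'}. This is the same finiteness-plus-no-nontrivial-cycles mechanism as yours, run in reverse, and it never needs to exhibit a fixed point explicitly. Your direct, constructive version is arguably more informative (it locates a concrete $\widehat{mG}$ witnessing the $sme$), provided the fixed-point condition is established via the monotonicity argument rather than via terminal-set membership.
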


\noindent An immediate observation from Proposition~\ref{prop:existence} is the following,
\begin{corollary}\label{cor:terminal set non emptyness}
If $mG$ is finite, then $\mathcal{T} \neq \emptyset$.
\end{corollary}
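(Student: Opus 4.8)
The plan is to derive this corollary directly from the two immediately preceding results, since $\mathcal{T}$ is guaranteed to be nonempty as soon as we know that at least one $sme$ exists and that every $sme$ is hosted by a game in the terminal set. First I would invoke Proposition~\ref{prop:existence}: because $mG$ is finite, $SME(mG) \neq \emptyset$, so I may fix some $\sigma \in SME(mG)$. Then I would appeal to the second bullet of Proposition~\ref{prop:algo-terminal-set-sme}, which states that for any $\sigma \in SME(\mGt{0})$ there exists $mG' \in \mathcal{T}$ with $\sigma \in NME(mG')$. This $mG'$ witnesses that $\mathcal{T} \neq \emptyset$, which is exactly the claim. The whole argument is essentially a one-line composition of the two cited propositions, which matches the paper's own description of this statement as an ``immediate observation''.

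For completeness I would also note that an independent, self-contained proof is available that does not route through the existence of an $sme$. By Corollary~\ref{cor:ad-infty-bound} the Stable Set $\AD{\infty}{\{mG\}}$ is finite, and by the termination criterion it satisfies $\AD{}{\AD{\infty}{\{mG\}}} = \AD{\infty}{\{mG\}}$; together with the locality of $\AD{}{.}$ (Proposition~\ref{prop:additive}) this means every game in the Stable Set maps, under $\AD{}{.}$, into the Stable Set. Since each finite game has a nonempty $NME$ (Nash equilibria always exist) and hence a nonempty image under $\AD{}{.}$, one can trace a forward trajectory inside this finite set, which must eventually revisit a game and thereby close a cycle. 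Proposition~\ref{prop:mG eq mG'} then collapses that cycle: all games in it coincide, so the repeated game $mG^\ast$ satisfies $mG^\ast \in \AD{}{\{mG^\ast\}}$. As $mG^\ast \in \AD{\infty}{\{mG\}} \subseteq \AD{*}{\{mG\}}$, Definition~\ref{def:terminal-set} gives $mG^\ast \in \mathcal{T}$.

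I do not anticipate a genuine obstacle here, as the corollary is a formal consequence of results already established. The only point requiring care is making sure that the $sme$ produced by Proposition~\ref{prop:existence} is indeed realised as an $nme$ of a \emph{terminal-set} game rather than merely of some game in the Stable Set; this distinction is precisely what the second bullet of Proposition~\ref{prop:algo-terminal-set-sme} supplies, so I would cite it explicitly to avoid conflating $\AD{\infty}{\{mG\}}$ with $\mathcal{T}$. Given the brevity of the argument, I would present the first (two-line) version as the proof and relegate the direct combinatorial argument to a remark if desired.
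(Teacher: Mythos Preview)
Your proposal is correct and matches the paper's approach: the paper presents this corollary as an ``immediate observation from Proposition~\ref{prop:existence}'' without a separate proof, and your two-line argument (Proposition~\ref{prop:existence} gives an $sme$, then the second bullet of Proposition~\ref{prop:algo-terminal-set-sme} locates it in some $mG' \in \mathcal{T}$) is precisely the intended bridge. Your alternative combinatorial argument is a nice self-contained variant but is not needed here.
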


Proposition \ref{theo:sme-complexity} below shows that the upper bound on $\LAD{mG}$ provided by equation \ref{eq:bounded mG} is tight.

\begin{proposition}
\label{theo:sme-complexity}
Let $N$ be a set of players and $S$ a set of strategies. Then there is a $\vert N \vert$-player canonical misinformation game $mG$, on $\vert S \vert$ strategies, such that,
\begin{equation*}
    \label{eq:comp-1-sme}
    \LAD{mG} = \vert S \vert.
    \end{equation*}
\end{proposition}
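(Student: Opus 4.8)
The plan is to prove tightness by exhibiting an explicit canonical misinformation game whose Adaptation Procedure follows a single, \emph{non-branching} path that learns exactly one new position at every step and visits all $\vert S\vert$ joint strategy profiles before stabilising. Since Proposition~\ref{thm:bounded mG} already gives $\LAD{mG}\le \vert S\vert$, it suffices to construct one game attaining this value; equality follows immediately.

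First I would fix the combinatorial skeleton. Write $m$ for the common number of pure strategies (so $\vert S\vert = m^{\vert N\vert}$ in the canonical case) and enumerate the joint profiles $p_0,p_1,\dots,p_{\vert S\vert-1}$ in a fixed mixed-radix (odometer) order on $[m]^{\vert N\vert}$. The target invariant is: after the first $t$ profiles $\{p_0,\dots,p_{t-1}\}$ have been learnt, the current misinformation game has a \emph{unique} and \emph{pure} natural misinformed equilibrium equal to $p_t$, so that $\chi(\cdot)$ (Definition~\ref{def:characteristic}) is the singleton $\{p_t\}$ and exactly the one fresh position $p_t$ is learnt at that step (Definition~\ref{def:vec_def}). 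If this invariant holds, the procedure is a chain of distinct singletons, so the first $\vert S\vert$ iterations each strictly change the (singleton) set while the $(\vert S\vert+1)$-th does not, and Definition~\ref{def:endingcriterion} yields $\LAD{mG}=\vert S\vert$.

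The construction itself generalises the mechanism behind Example~\ref{ex:inf-games counter-example}: I make the actual payoffs uniformly unattractive (a large common negative value $-L$ in $G^0$), so that once a position is learnt it can never be selected again, while each subjective game $G^i$ is engineered so that its players are ``pushed'' toward the next unlearnt profile in the chosen order. Concretely, the subjective payoffs in $G^i$ are chosen so that $G^i$ has a unique pure Nash equilibrium whose $i$-th coordinate equals $(p_t)_i$; combining these coordinates across players produces $nme=p_t$. I would then verify by induction on $t$ that (a) the updated $G^i$ still has a unique pure equilibrium, (b) its $i$-th coordinate is $(p_t)_i$, and (c) $p_t$ is not yet learnt, so the process genuinely advances. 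At step $\vert S\vert$ every position is learnt, each $G^i$ coincides with $G^0$, hence $\AD{}{\{mG\}}=\{mG\}$ and the procedure terminates precisely there.

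The main obstacle is points (a)--(b): keeping the equilibrium unique and pure at \emph{every} one of the $\vert S\vert$ stages, given that the update operation overwrites the learnt cells in all subjective matrices \emph{simultaneously}. Unlike the single-player intuition, where devaluing one strategy cleanly promotes the next, in the multi-player case devaluing a single cell of $P^i$ destroys any strict dominance of player $i$'s current strategy, so the intended equilibrium must be enforced through the full (belief) bimatrix structure of $G^i$ rather than by dominance alone. The delicate task is thus to choose the subjective payoffs so that the odometer ``carry'' behaviour — a player advancing her coordinate precisely when her current block of profiles is exhausted — emerges as the unique pure equilibrium at each stage, with no ties that would make $\chi(\cdot)$ non-singleton and cause branching. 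Ruling out such ties and spurious extra equilibria along the entire trajectory is where the bulk of the verification lies.
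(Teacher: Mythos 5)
Your overall strategy is the same as the paper's: invoke Proposition~\ref{thm:bounded mG} for the upper bound, then exhibit an explicit canonical game whose Adaptation Procedure is a non-branching chain that learns exactly one fresh position per step and visits all $\vert S\vert$ positions, with the actual payoffs made so unattractive that a learnt position can never be selected again (the mechanism of Example~\ref{ex:inf-games counter-example}). The structural argument that such a chain forces $\LAD{mG}=\vert S\vert$ is correct.

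However, there is a genuine gap: you never actually produce the construction. The entire content of the proposition is the existence of subjective payoff matrices for which, at \emph{every} one of the $\vert S\vert$ stages, the updated game has a unique pure natural misinformed equilibrium sitting exactly at the next unlearnt profile, and you explicitly defer this ("ruling out such ties and spurious extra equilibria along the entire trajectory is where the bulk of the verification lies"). A proof of an existential statement by construction must exhibit concrete payoffs and verify the invariant; a template plus an acknowledgement that the verification is the hard part is a proof sketch, not a proof. The paper closes this gap with a specific device that dissolves the multi-player equilibrium-control difficulty you worry about: it gives \emph{all} players the \emph{same} subjective payoff matrix $P$, with distinct values across cells, and sets $P^0=-P$. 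In such a common-interest subjective game the players' incentives are aligned, the equilibrium analysis reduces to locating maxima of a single scalar function on the cells, and each update replaces the currently most attractive cell by a value below everything else, so the procedure marches through the cells in decreasing order of their subjective value. If you want to complete your odometer-based version, you would need to supply an equally concrete assignment and prove uniqueness and purity of the equilibrium at each of the $\vert S\vert$ stages by induction; as written, that inductive step is missing.
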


\section{Computing the Adaptation Procedure}
\label{sec:computing-adaptation-procedure}

To test the properties of the Adaptation Procedure in practice, we have developed an algorithm to see how the Adaptation Procedure behaves across a spectrum of randomly generated misinformation games, as well as its computational properties. In this section, we describe the algorithm and its implementation, as well as the corresponding experiments and their results.

\subsection{The adaptation graph}
\label{subsec:adaptation-graph}

We encode the computation of the Adaptation Procedure as a graph search problem. We will see that the Adaptation Procedure operator induces a structure on the set of all misinformation games $\AD{*}{mG}$ derived from the Adaptation Procedure on $mG$. We call this structure \emph{adaptation graph}.

\begin{definition}
    \label{def:adaptation-graph}
    Let $mG$ be a misinformation game, and the set of all misinformation games $\AD{*}{mG}$ derived from the Adaptation Procedure on $mG$. We denote with $\Gamma = (\AD{*}{mG}, E)$ the directed graph where,
    \begin{equation}
        \label{eq:adaptation-graph}
        E = \left\{(mG^{1}, mG^{2}) \mid \text{ there is } \vec{v} \in \chi(\text{NME}(mG^{1})), \text{ s.t. } mG^{2} = (mG^{1})_{\vec{v}}\right\}.
    \end{equation}
    Further, we denote by $\Gamma^\prime = (\AD{*}{mG}, E^\prime)$ the loopless version of $\Gamma$ where,
    \begin{equation}
        \label{eq:adapt-graph-dag}
        E^\prime = E \setminus \left\{(mG^{1}, mG^{2}) \in E \mid mG^{1} = mG^{2}\right\}.
    \end{equation}
\end{definition}

Some notes on Definition \ref{def:adaptation-graph} are in order. First, by Corollary \ref{cor:ad-infty-bound}, $\Gamma$ (and $\Gamma^\prime$) are finite. 
Moreover, comparing Definition \ref{def:adaptation-graph} with Definition \ref{def:AD(M)-adaptation procedure iterative process}, we note that the edge $(mG^1,mG^2)$ belongs to the adaptation graph $\Gamma$ if and only if $mG^2 \in \AD{}{mG^1}$.
Further, the definition of the terminal set $\mathcal{T}$ (Definition \ref{def:terminal-set}) implies that each misinformation game $mG$ that belongs to $\mathcal{T}$ has a self-loop in $\Gamma$. Moreover, as the following proposition shows, $\Gamma'$ is a \emph{directed acyclic graph (DAG)}, i.e., $\Gamma$ is ``almost'' a DAG.

\begin{proposition}
    \label{prop:adapt-graph-dag}
    Let $\Gamma = (\AD{*}{mG}, E)$ be the adaptation graph of the Adaptation Procedure on a misinformation game $mG$ and $\Gamma^\prime$ its loopless version. Then $\Gamma^\prime$ is a directed acyclic graph.
\end{proposition}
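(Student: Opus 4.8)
The plan is to argue by contradiction, leaning almost entirely on Proposition \ref{prop:mG eq mG'}, which already isolates the genuinely nontrivial combinatorial fact. First I would recall, from the remark following Definition \ref{def:adaptation-graph}, that an edge $(mG^1, mG^2)$ lies in $E$ precisely when $mG^2 \in \AD{}{mG^1}$, and that $E'$ is obtained from $E$ by deleting exactly the self-loops (equation \ref{eq:adapt-graph-dag}). Since a finite directed graph is a DAG if and only if it contains no directed cycle, it suffices to rule out directed cycles in $\Gamma'$.

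Next I would assume, toward a contradiction, that $\Gamma'$ contains a directed cycle, i.e., a sequence of vertices $mG_1, mG_2, \dots, mG_n$ with $n \geq 2$ (the case $n=1$ being vacuous, as $E'$ has no self-loops) such that $(mG_i, mG_{i+1}) \in E'$ for each $i \in [n-1]$ and $(mG_n, mG_1) \in E'$. Because $E' \subseteq E$, every one of these edges also lies in $E$, so by the characterisation above we obtain $mG_{i+1} \in \AD{}{mG_i}$ for $i \in [n-1]$ together with the closing relation $mG_1 \in \AD{}{mG_n}$. This is exactly the hypothesis of Proposition \ref{prop:mG eq mG'}.

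Applying Proposition \ref{prop:mG eq mG'}, I would then conclude that $mG_i = mG_j$ for all $i,j$; in particular $mG_1 = mG_2$. But this means the edge $(mG_1, mG_2) \in E'$ is in truth the self-loop $(mG_1, mG_1)$, which contradicts the defining equation \ref{eq:adapt-graph-dag} of $E'$, where all self-loops were explicitly removed. Hence no directed cycle can exist in $\Gamma'$, and $\Gamma'$ is acyclic, as claimed.

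I expect this proof to be genuinely short, since the real obstacle — proving that any closed walk under the adaptation operator forces all the games along it to coincide — has already been overcome in Proposition \ref{prop:mG eq mG'}. The only step requiring care is the bookkeeping at the boundary between $E$ and $E'$: one must verify that the closing edge $(mG_n, mG_1)$ is treated symmetrically with the rest (which is why Proposition \ref{prop:mG eq mG'} carries the separate hypothesis $mG_1 \in \AD{}{mG_n}$), and that the degenerate length-one case is correctly excluded by the absence of self-loops in $E'$.
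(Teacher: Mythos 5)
Your proof is correct and follows exactly the route the paper takes: the paper's own proof simply states that acyclicity of $\Gamma'$ is an immediate consequence of Proposition~\ref{prop:mG eq mG'}, and your write-up just fills in the bookkeeping (translating edges of $E'$ into memberships $mG_{i+1} \in \AD{}{mG_i}$, invoking Proposition~\ref{prop:mG eq mG'} to collapse the cycle, and deriving the contradiction with the removal of self-loops). No gaps; the extra detail you supply is exactly what the paper leaves implicit.
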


Focusing on $\Gamma^\prime$, we can explore more effectively the adaptation graph $\Gamma$. In the following analysis we will use the concept of the degree of a vertex, where $d^-_{\Gamma}(v)$ is the number of the incoming edges and $d^+_{\Gamma}(v)$ is the number of the outgoing edges in graph $\Gamma$.

\begin{proposition}
    \label{prop:adapt-graph-source}
    Let $\Gamma = (\AD{*}{mG}, E)$ be the adaptation graph, of the Adaptation Procedure, on a misinformation game $mG$ and $\Gamma'$ its loopless version. Then, $mG$ is the \emph{only source node} of $\Gamma^\prime$. That is, $mG$ is the only node in $\Gamma^\prime$ such that
    \begin{equation}
        \label{eq:adapt-graph-source}
        d^{-}_{\Gamma^\prime}(mG) = 0.
    \end{equation}
\end{proposition}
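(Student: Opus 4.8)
The plan is to split the statement into two claims and prove each: (i) $mG$ has no incoming edge in $\Gamma'$, so it is a source; and (ii) every other vertex of $\Gamma'$ has at least one incoming edge, so $mG$ is the \emph{only} source. Throughout I would use the remark following Definition~\ref{def:adaptation-graph}, namely that $(mG^1,mG^2)\in E$ iff $mG^2\in\AD{}{mG^1}$; hence $\AD{*}{mG}$ is exactly the set of vertices reachable from $mG$ along directed paths of $\Gamma$, and every vertex $mG'$ comes equipped with such a path $mG=w_0\to w_1\to\cdots\to w_k=mG'$.

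The backbone would be a monotone invariant. All games in $\AD{*}{mG}$ share the same actual game $G^0$, since the update of Definition~\ref{def:vec_def} only overwrites subjective matrices. For such a game $mG'$ let $\mu(mG')$ count the positions $\vec v$ at which every subjective matrix already agrees with the actual one, i.e. $P^{i}_{\vec v}=P^0_{\vec v}$ for all $i\in[\vert N\vert]$. Applying $(\cdot)_{\vec v}$ rewrites exactly entry $\vec v$ of each subjective matrix to $P^0_{\vec v}$, so $(mG')_{\vec v}=mG'$ holds precisely when $\vec v$ already contributes to $\mu$, and otherwise $\mu$ rises by exactly one. Thus $\mu$ is non-decreasing along every edge of $\Gamma$ and strictly increasing along every non-loop edge; following the reachability path then gives $\mu(mG')\ge\mu(mG)$ for all $mG'\in\AD{*}{mG}$.

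For (i) I would argue by contradiction: a non-loop edge $(mG'',mG)\in E'$ with $mG''\ne mG$ would give $\mu(mG)=\mu(mG'')+1$, while reachability of $mG''$ forces $\mu(mG'')\ge\mu(mG)$, so $\mu(mG)\ge\mu(mG)+1$ --- impossible. (Equivalently, stripping the self-loop steps from a reachability path $mG\rightsquigarrow mG''$ yields a genuine walk in $\Gamma'$, which the edge $(mG'',mG)$ would close into a cycle, contradicting Proposition~\ref{prop:adapt-graph-dag}.) Hence $d^{-}_{\Gamma'}(mG)=0$. For (ii) I would take any $mG'\ne mG$ together with its reachability path; since $mG'\ne mG$ we have $k\ge1$ and $J=\{j:w_j\ne w_{j+1}\}\ne\emptyset$. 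Setting $j^*=\max J$, every later step is a self-loop, so $w_{j^*+1}=\cdots=w_k=mG'$; then $(w_{j^*},mG')$ is an edge of $\Gamma$ with $w_{j^*}\ne mG'$, i.e. an edge of $\Gamma'$ into $mG'$, giving $d^{-}_{\Gamma'}(mG')\ge1$.

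The main obstacle is the bookkeeping around self-loops. Reachability paths in $\Gamma$ may stall on self-loops --- exactly the steps where the chosen $nme$ re-updates an already-known position --- so one cannot simply read off ``the last edge'' of a path to produce an incoming edge of $\Gamma'$. The device that resolves this is to locate the last index $j^*$ at which the game genuinely changes (for (ii)) and, symmetrically, to delete self-loop steps before invoking acyclicity (for (i)); both rest on the single observation that $(mG')_{\vec v}=mG'$ iff $\vec v$ is already learnt, which is precisely what makes $\mu$ strictly monotone on $\Gamma'$.
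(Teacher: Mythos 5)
Your proof is correct, and it follows the paper's decomposition into the same two claims (the root has in-degree zero in $\Gamma'$; every other vertex has positive in-degree), but the engine driving both halves is different. The paper proves the second claim by taking the \emph{minimal} $k$ with $mG'\in\AD{(k)}{mG}$, which forces the parent $mG''\in\AD{(k-1)}{mG}$ to differ from $mG'$, and proves the first by observing that an incoming non-loop edge at the root would close a cycle through the root, contradicting Proposition~\ref{prop:mG eq mG'}. You instead introduce an explicit potential $\mu$ (the number of fully learnt positions), verify that it is constant on self-loops and increases by exactly one on every non-loop edge, and derive everything from its monotonicity; the paper's minimal-$k$ device is replaced by locating the last non-loop edge on a reachability path, which serves the same purpose of producing a parent distinct from the child. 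Your route is more self-contained --- it re-derives the acyclicity of $\Gamma'$ (Proposition~\ref{prop:adapt-graph-dag}) for free and turns the informal ``replacements are cumulative and cannot be undone'' from the paper's proof of Proposition~\ref{prop:mG eq mG'} into a quantitative invariant --- at the cost of slightly more bookkeeping. The paper's version is shorter because it leans on the already-established Propositions~\ref{prop:additive} and~\ref{prop:mG eq mG'}. Both arguments rest on the routine fact that every vertex of $\Gamma$ is reachable from $mG$ by a directed path, which you correctly state and which follows by induction on $t$ from Proposition~\ref{prop:additive}.
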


Moreover, we are able to characterize the \emph{sink} nodes of $\Gamma^\prime$ as members of the terminal set $\mathcal{T}$, and we argue that \emph{every} $nme$ of such a node is an $sme$.

\begin{proposition}
    \label{prop:adapt-graph-sinks}
    Let $\Gamma = (\AD{*}{mG}, E)$ be the adaptation graph of the Adaptation Procedure on a misinformation game $mG$ and $\Gamma'$ its loopless version. Let $K$ be the set of sink nodes in $\Gamma^\prime$ that is
    \begin{equation}
        \label{eq:adapt-graph-sinks}
        K = \{mG^\prime \in \AD{*}{mG} \mid d^{+}_{\Gamma^\prime}(mG) = 0\}.
    \end{equation}
    Then, $K \subseteq \mathcal{T}$. Moreover, every $nme$ of some $mG^\prime \in K$ is an $sme$ of the Adaptation Procedure on $mG$.
\end{proposition}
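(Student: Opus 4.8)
The plan is to exploit the fact that a sink of $\Gamma'$ can only carry self-loops in the full graph $\Gamma$, which immediately places it in the terminal set, after which Definition \ref{def:sme} can be read off almost directly.

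First I would fix some $mG' \in K$ and observe that, since $mG'$ is a finite misinformation game, each component game $G^i$ has a Nash equilibrium, so $NME(mG') \neq \emptyset$; hence $\chi(NME(mG'))$ is nonempty and $mG'$ has at least one outgoing edge in $\Gamma$. Because $mG'$ is a sink of the loopless graph $\Gamma'$, \emph{none} of these outgoing edges survives in $\Gamma'$, which by the definition of $E'$ (equation \eqref{eq:adapt-graph-dag}) forces every outgoing edge of $mG'$ in $\Gamma$ to be a self-loop. Unwinding this through equation \eqref{eq:adaptation-graph}: for every $\sigma \in NME(mG')$ and every $\vec{v} \in \chi(\sigma)$ we have $(mG')_{\vec{v}} = mG'$. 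In particular $mG' = (mG')_{\vec{v}} \in \AD{}{mG'}$, so by Definition \ref{def:terminal-set} we get $mG' \in \mathcal{T}$. As $mG'$ was arbitrary, $K \subseteq \mathcal{T}$.

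For the second claim, take any $\sigma \in NME(mG')$ with $mG' \in K$. Since $mG' \in \mathcal{T}$ and $\mathcal{T} \subseteq \AD{\infty}{mG}$ by the first bullet of Proposition \ref{prop:algo-terminal-set-sme}, we have $mG' \in \AD{\infty}{mG}$, which supplies the witness $\widehat{mG} = mG'$ demanded by Definition \ref{def:sme}. It then remains to check the stability condition, namely that $\widehat{mG}_{\vec{v}} = \widehat{mG}$ for all $\vec{v} \in \chi(\sigma)$. But this is exactly what the first paragraph established: each such $\vec{v}$ lies in $\chi(NME(mG'))$, and for those positions $(mG')_{\vec{v}} = mG'$. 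Hence $\sigma$ satisfies both conditions of Definition \ref{def:sme} and is therefore an $sme$ of $mG$.

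The only delicate point — and the single place where care is warranted — is the step equating ``$mG'$ is a sink of $\Gamma'$'' with ``every outgoing $\Gamma$-edge of $mG'$ is a self-loop''; this inference is legitimate only after one knows that $mG'$ actually \emph{has} an outgoing edge in $\Gamma$, since otherwise $mG'$ could be a sink of $\Gamma'$ vacuously without belonging to $\mathcal{T}$. This is precisely why the nonemptiness of $NME(mG')$ for finite games is invoked at the outset. Everything else is a direct unfolding of the definitions of $E$, $E'$, $\mathcal{T}$, and $sme$, combined with Proposition \ref{prop:algo-terminal-set-sme}.
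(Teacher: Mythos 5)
Your proof is correct and follows essentially the same route as the paper's: a sink of $\Gamma'$ has only self-loops in $\Gamma$, hence lies in $\mathcal{T}$, and the self-loop condition directly verifies Definition \ref{def:sme} for each of its $nme$s. Your extra observation that $NME(mG')\neq\emptyset$ is needed to rule out a vacuous sink is a point the paper's proof glosses over (it asserts $\AD{}{mG'}=\{mG'\}$ without justifying nonemptiness), so your version is marginally more careful but not a different argument.
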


Note that the reverse direction of Proposition \ref{prop:adapt-graph-sinks} is not true. That is, in general, $K \subsetneq \mathcal{T}$, and there are $sme$s which are not $nme$s of a sink node.
Continuing our exploration of the structure of $\Gamma^\prime$, we remark that the \emph{longest} path in $\Gamma^\prime$ is $\LAD{mG}$.

\begin{proposition}
    \label{prop:adapt-graph-longest-path}
    Let $\Gamma = (\AD{*}{mG}, E)$ be the adaptation graph of the Adaptation Procedure on a misinformation game $mG$ and $\Gamma'$ its loopless version. Then, the length of the longest path in $\Gamma^\prime$ is $\LAD{mG}$.
\end{proposition}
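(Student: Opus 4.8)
The plan is to translate $\LAD{mG}$, defined through the stabilisation of the sets $\AD{(t)}{\{mG\}}$ (Definition~\ref{def:endingcriterion}), into a purely combinatorial statement about walks in $\Gamma$. The starting observation is that, comparing Definitions~\ref{def:AD(M)-adaptation procedure iterative process} and~\ref{def:adaptation-graph}, a game $mG'$ lies in $\AD{(t)}{\{mG\}}$ if and only if there is a walk of length exactly $t$ from $mG$ to $mG'$ in $\Gamma$ (each application of $\AD{}{\cdot}$ traverses one outgoing edge; the two directions follow by induction on $t$ from the remark that $mG^2\in\AD{}{mG^1}$ iff $(mG^1,mG^2)\in E$). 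Since $\Gamma$ arises from the DAG $\Gamma'$ (Proposition~\ref{prop:adapt-graph-dag}) by adjoining only self-loops, and the self-loops of $\Gamma$ are exactly the nodes of $\mathcal{T}$ (Definition~\ref{def:terminal-set}), any such walk decomposes uniquely into a directed $\Gamma'$-path together with a number of self-loop repetitions located at $\mathcal{T}$-nodes on that path.

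The crucial ingredient, and the step I expect to be the main obstacle, is to show that $\Gamma'$ is \emph{graded}: all $\Gamma'$-paths from $mG$ to a fixed node $mG'$ have the same length. I would prove this by tracking, for each reachable game, the number of positions whose payoff has already been corrected (i.e.\ agrees with $P^0$ across every subjective matrix). Along a loopless edge $mG^{1}\to mG^{2}$ we have $mG^{2}=(mG^{1})_{\vec v}$ with $mG^{2}\neq mG^{1}$, so the updated position $\vec v$ was \emph{not} already correct in $mG^1$ (otherwise the $\vec v$-update would leave $mG^1$ unchanged, by idempotence); hence this corrected-count increases by \emph{exactly one} across every edge of $\Gamma'$. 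Thus the length of any $\Gamma'$-path from $mG$ to $mG'$ equals the corrected-count difference, a quantity $\ell(mG')$ depending only on $mG'$. This quantitative strengthening of Proposition~\ref{prop:adapt-graph-dag} is precisely what forbids ``shortcut'' edges that could let the procedure stabilise prematurely. Since $mG$ is the unique source of $\Gamma'$ (Proposition~\ref{prop:adapt-graph-source}), every maximal path begins at $mG$, so the longest-path length $L$ equals $\max_{mG'}\ell(mG')$, attained at some node $mG^{\ast}$.

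With grading in hand the two inequalities follow cleanly. By the walk characterisation, $mG'\in\AD{(t)}{\{mG\}}$ exactly when $\ell(mG')=t$, or $\ell(mG')<t$ and some $\Gamma'$-path to $mG'$ meets $\mathcal{T}$ (so self-loops can pad the walk up to length $t$). For $\LAD{mG}\le L$ I would show $\AD{(L+1)}{\{mG\}}=\AD{(L)}{\{mG\}}$: any node of maximal depth $L$ is a sink of $\Gamma'$ (an outgoing edge would force depth $L+1$), hence lies in $\mathcal{T}$ by Proposition~\ref{prop:adapt-graph-sinks} and is ``paddable'', and one checks that both sets equal the collection of nodes admitting a $\Gamma'$-path through $\mathcal{T}$. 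For $\LAD{mG}\ge L$ I would exhibit a separating witness: the maximal-depth node $mG^{\ast}$ has $\ell(mG^{\ast})=L$, so its length-$L$ path is a walk witnessing $mG^{\ast}\in\AD{(L)}{\{mG\}}$, whereas \emph{every} walk from $mG$ to $mG^{\ast}$ has length at least $\ell(mG^{\ast})=L>L-1$ (a $\Gamma'$-path of length $L$ plus nonnegative self-loops), so $mG^{\ast}\notin\AD{(L-1)}{\{mG\}}$. Hence $\AD{(L-1)}{\{mG\}}\neq\AD{(L)}{\{mG\}}$; as stabilisation once reached persists, no earlier stabilisation is possible. Combining the two bounds gives $\LAD{mG}=L$, the length of the longest path in $\Gamma'$.
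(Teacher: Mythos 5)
Your argument is correct, but it is far more elaborate than what the paper does: the paper's entire proof of this proposition is the single sentence that it is ``an immediate consequence of Proposition~\ref{thm:bounded mG}'' (the bound $\LAD{mG}\le\vert S\vert$), implicitly treating the correspondence between iteration depth and path length as self-evident. What you supply is precisely the content the paper leaves unstated. Your two main ingredients --- the walk characterisation $mG'\in\AD{(t)}{\{mG\}}$ iff there is a length-$t$ walk in $\Gamma$, and the grading lemma showing that every loopless edge increases the count of already-corrected positions by exactly one, so that all $\Gamma'$-paths from $mG$ to a fixed node have the same length --- appear nowhere in the paper, and the grading lemma in particular is a genuine quantitative strengthening of Proposition~\ref{prop:adapt-graph-dag} (the acyclicity result), obtained by essentially the same monotonicity idea that drives the paper's proof of Proposition~\ref{prop:mG eq mG'} (updates are cumulative and cannot be undone). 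The two-sided argument is sound: padding walks with self-loops at $\mathcal{T}$-nodes, together with the fact that maximal-depth nodes are sinks and hence lie in $\mathcal{T}$ (Proposition~\ref{prop:adapt-graph-sinks}), gives $\AD{(L+1)}{\{mG\}}=\AD{(L)}{\{mG\}}$; the maximal-depth witness $mG^{\ast}$ separates $\AD{(L-1)}{\{mG\}}$ from $\AD{(L)}{\{mG\}}$; and persistence of stabilisation rules out earlier termination. What the paper's terse appeal buys is brevity at the cost of rigour; what your route buys is an actual proof that the procedure cannot stabilise before depth $L$ nor continue past it, which the cited bound alone does not establish.
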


Let us summarize what the above results say about the form of the adaptation graph $\Gamma$. First, the adaptation graph is ``almost'' a DAG; when we omit the self-loops from $\Gamma$, the resulting graph $\Gamma^\prime$ is a DAG. The graph $\Gamma^\prime$ begins with a single source node, namely the root of the Adaptation Procedure. Additionally, $\Gamma^\prime$ extends for $\LAD{mG}$ length units, concluding to a set of sink nodes, each of which belongs to the terminal set $\mathcal{T}$. Moreover, every $nme$ of a sink node is an $sme$. The ideas highlighted here will be exploited in the next subsection to develop algorithms for computing either one, or all $sme$s; as we shall see, there is an intriguing asymmetry between these two problems (computing a single $sme$, and computing all $sme$s), that we explore next.

\subsection{Set-based representation: efficiently detecting self-loops}
\label{subsec:syntactic-semantic-implementation}

In the previous subsection we captured the graph structure induced on the space of misinformation games by the Adaptation Procedure. We named this structure the \emph{adaptation graph} $\Gamma$. We also showed how we can characterise the terminal set of the Adaptation Procedure as the nodes in $\Gamma$ that contain self-loops. Moreover, we observed that we can omit these self-loops during graph exploration, resulting in the graph $\Gamma^\prime$. Since the self-loop detection constitutes one of the main operations that will be performed at each step of the graph exploration, we want it to be as efficient as possible. In this direction, we define the \emph{set-based representation} $\widetilde{\Gamma}$ of the adaptation graph $\Gamma$, showing that this reduces the computational cost of self-loop detection. Lastly, we will explore $\widetilde{\Gamma}$, the same way we did with $\Gamma$, omitting the self-loops and thus resulting in the graph $\widetilde{\Gamma}^\prime$.

Assume a misinformation game $mG^\prime$ in the adaptation graph $\Gamma$. In order to detect whether $mG$ contains a self-loop, we need to determine whether there exists some $\sigma \in NME(mG)$ and some $\vec v \in \chi(\sigma)$ such that $mG = mG_{\vec v}$. To determine whether $mG = mG_{\vec v}$, we need to verify that in both misinformation games, the players have equal subjective and objective payoffs. The computational complexity of this operation is given by the following result.

\begin{proposition}
\label{prop:semantic-equivalence-complexity}
Let $mG^1, mG^2$ be two canonical $\vert N \vert$-player misinformation games with $\vert S \vert$ joint strategies. Then the computational complexity to determine if $mG^1 = mG^2$ is
\begin{equation}
    \label{eq:semantic-equivalence-complexity}
    \Theta\left( \vert N \vert^2 \cdot \vert S \vert \right).
\end{equation}
\end{proposition}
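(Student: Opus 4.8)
The plan is to prove matching $O(|N|^2\cdot|S|)$ and $\Omega(|N|^2\cdot|S|)$ bounds, which together yield the claimed $\Theta$ estimate. The first step is to account for the total amount of data that describes a canonical misinformation game. By Definition~\ref{def:mis_normal_basic}, $mG = \langle G^0, G^1, \ldots, G^{\vert N \vert}\rangle$ consists of $\vert N\vert + 1$ normal-form games, and by canonicity (Definition~\ref{def:mis_games_special}) all of them share the same player set $N$ and the same joint-strategy set $S$. Each $G^i$ is therefore fully determined by its payoff matrix $P^i$, which assigns to each of the $\vert S\vert$ joint strategies a vector of $\vert N\vert$ real payoffs (one per player). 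Hence every $P^i$ stores exactly $\vert N\vert\cdot\vert S\vert$ scalar entries, and the entire misinformation game is specified by $(\vert N\vert+1)\cdot\vert N\vert\cdot\vert S\vert = \Theta(\vert N\vert^2\cdot\vert S\vert)$ scalars.

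For the upper bound, I would observe that $mG^1 = mG^2$ holds if and only if the two games agree on every one of these scalar entries: the players have equal subjective and objective payoffs precisely when each corresponding payoff matrix coincides entrywise. Comparing the two representations entry by entry therefore decides equality using $\Theta(\vert N\vert^2\cdot\vert S\vert)$ unit-cost comparisons, establishing the $O(\vert N\vert^2\cdot\vert S\vert)$ direction.

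For the lower bound, I would give a standard adversary argument. Any algorithm that correctly decides $mG^1 = mG^2$ must, in the worst case, inspect essentially every scalar entry: if some entry is never read, an adversary can make all read entries agree across the two games while forcing that single unread entry to differ, so the true answer flips without the algorithm being able to detect it. Since there are $\Theta(\vert N\vert^2\cdot\vert S\vert)$ such entries, at least $\Omega(\vert N\vert^2\cdot\vert S\vert)$ accesses are required, and combining the two directions gives $\Theta(\vert N\vert^2\cdot\vert S\vert)$.

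The main obstacle I anticipate is not the counting but fixing the computational model precisely enough for the lower bound to be meaningful: I would need to declare that reading or comparing a single payoff scalar is the unit operation and that the adversary is free to choose the numerical payoff values, so that the information-theoretic argument applies without the canonicity constraints secretly reducing the number of entries that genuinely must be examined. Once the model is pinned down in this way, both directions become routine.
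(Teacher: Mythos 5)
Your proposal is correct and follows essentially the same route as the paper's proof: count the $(\vert N\vert+1)\cdot\vert N\vert\cdot\vert S\vert$ scalar payoff entries of a canonical misinformation game and observe that equality testing amounts to comparing all of them. The only difference is that you make the $\Omega(\vert N\vert^2\cdot\vert S\vert)$ direction explicit via an adversary argument, whereas the paper simply asserts that every component must be compared.
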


To reduce the computational cost of self-loop detection, we will adopt a representation of a misinformation game that allows us to omit equality checks, when possible. In this direction, we will consider the \emph{set-based representation} $\widetilde{\Gamma}$ of the adaptation graph $\Gamma$. Recall that $S$ is the set of all position vectors in a misinformation game $mG$. Consider a mapping $\theta \colon 2^S \to \AD{*}{mG}$ where,
\begin{equation}
    \label{eq:sets-to-mgs-mapping}
    \theta[X] = mG_X.
\end{equation}

The mapping $\theta[\cdot]$ is surjective by construction but not injective, as different sets of position vectors may result in the same misinformation game, e.g., when the subjective payoffs of the players coincide with the objective payoffs at this position vector. Using $\theta[\cdot]$ we can define the \emph{set-based representation} $\widetilde{\Gamma}$.

\begin{definition}
\label{def:adapt-graph-syntactic-approximation}
Let $\Gamma = (\AD{*}{mG}, E)$ be the adaptation graph, of the Adaptation Procedure, on a misinformation game $mG$. We denote with $\widetilde{\Gamma} = (2^S, \widetilde{E})$ the set-based representation of $\Gamma$, where
\begin{equation*}
    \widetilde{E} = \left\{ (X,Y) \mid Y = X \cup \{\vec v\}, X \neq Y \text{ and the edge } (\theta[X], \theta[Y]) \text{ exists in } \Gamma \right\}.
\end{equation*}
Further, we denote by $\widetilde{\Gamma}' = (2^S, \widetilde{E}')$ the loopless version of $\widetilde{\Gamma}$ where,
\begin{equation*}
    \widetilde{E}' = \widetilde{E} \setminus \left\{ (X,Y) \in \widetilde{E} \mid \theta[X] = \theta[Y] \right\}.
\end{equation*}
\end{definition}

Note that, with respect to Definition \ref{def:adapt-graph-syntactic-approximation}, the vertex set of $\widetilde{\Gamma}$ contains all the position vectors sets, i.e. the whole powerset $2^S$. From Proposition \ref{prop:finite-mgs}, in the worst case, all the position vector sets will be connected to the root of the Adaptation Procedure. On the other hand, in general, in $\widetilde{\Gamma}$ there will be isolated nodes, i.e. nodes with zero in and out degree. Subsequently, we implicitly discard the isolated nodes, and only consider the connected component of $\widetilde{\Gamma}$ to which the root of the Adaptation Procedure belongs. The existence of isolated nodes, does not affect the subsequent worst case analysis of our algorithms.

In Figure \ref{fig:Set-based adaptation graph} we depict an adaptation graph $\Gamma$ and its corresponding set-based representation $\widetilde{\Gamma}$. In $\widetilde{\Gamma}$, the misinformation games are represented as position vector sets, which are implemented as sorted lists. Additionally, we keep a dictionary that implements $\theta[\cdot]$, by mapping position vector sets to misinformation games. Before performing an equivalence check between two misinformation games, we check whether their set-based representations are equal. Note that two position vector sets $X$, $Y$ that are connected with an edge in $\widetilde{\Gamma}$, differ by \emph{at most} one element $\vec{v}$. Given the position vector $\vec{v}$, we can perform a set equivalence check between $X$, $Y$, by only checking whether $\vec{v}$ belongs to $X$. This can be done in sub-linear time, with a simple binary search (given that position vector sets are implemented as sorted lists).
    
\begin{proposition}
\label{prop:child-syntactic-equivalence-complexity}
Let $\Gamma = (\AD{*}{mG}, E)$ be the adaptation graph of the Adaptation Procedure on a canonical misinformation game $mG$, and $\widetilde{\Gamma} = (2^S, \widetilde{E})$ be the set-based representation of $\Gamma$. Assume two nodes $X, Y$ of $\widetilde{\Gamma}$, and a position vector $\vec{v} \in \chi(NME(\theta[X]))$, such that $\theta[Y] = \theta[X]_{\vec{v}}$. The set equivalence $X = Y$ can be performed in
\begin{equation}
    O(\log \vert S \vert)
\end{equation}
time, where $\vert S \vert$ is the number of joint strategies.
\end{proposition}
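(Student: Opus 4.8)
The plan is to show that the hypotheses force $Y$ to be the set $X\cup\{\vec v\}$, so that the set equivalence test $X=Y$ collapses to a single membership query $\vec v\in X$, which a binary search answers in logarithmic time.

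First I would make the relationship between $X$ and $Y$ explicit, since this is the one place where care is needed: the hypothesis is stated at the semantic level ($\theta[Y]=\theta[X]_{\vec v}$), whereas the cheap test operates at the syntactic level of the underlying position-vector sets. Recall that $\theta[X]=mG_X$, so $\theta[X]_{\vec v}=(mG_X)_{\vec v}=mG_{X\cup\{\vec v\}}$ by the idempotency and commutativity of the $\vec v$-update and the well-definedness of the $mG_X$ notation. In the set-based representation, the child of $X$ associated with the played position $\vec v$ is generated as the set $X\cup\{\vec v\}$ (this is exactly the form of the edges in $\widetilde{E}$ in Definition \ref{def:adapt-graph-syntactic-approximation}), so the node $Y$ under consideration is $Y=X\cup\{\vec v\}$, and indeed $\theta[Y]=\theta[X\cup\{\vec v\}]=\theta[X]_{\vec v}$, consistent with the hypothesis.

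Given $Y=X\cup\{\vec v\}$, the remaining argument is purely set-theoretic: $X=Y$ holds if and only if adjoining $\vec v$ does not enlarge $X$, i.e. if and only if $\vec v\in X$. Thus the entire comparison reduces to the single membership test $\vec v\in X$, with no element-by-element scan. Since position-vector sets are stored as sorted lists and $X\subseteq S$ has at most $\vert S\vert$ elements, this membership test is a binary search costing $O(\log\vert X\vert)=O(\log\vert S\vert)$, which is the claimed bound.

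There is no difficult computational step here; the content of the proposition is precisely that the design of $\widetilde{\Gamma}$ makes self-loop detection cheap. The caveat worth recording is that $X=Y$ is only \emph{sufficient}, not necessary, for $\theta[X]=\theta[Y]$, because $\theta[\cdot]$ is surjective but not injective; hence this $O(\log\vert S\vert)$ shortcut certifies exactly the self-loops already visible at the level of position-vector sets, while the residual cases ($X\neq Y$ but $\theta[X]=\theta[Y]$) still fall back on the semantic check of Proposition \ref{prop:semantic-equivalence-complexity}. This does not affect the stated bound, which concerns only the set equivalence $X=Y$, but it is what justifies describing the technique as omitting equality checks ``when possible''.
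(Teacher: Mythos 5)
Your proof is correct and follows essentially the same route as the paper's: reduce the set equivalence $X=Y$ to the single membership query $\vec v\in X$ and answer it by binary search on the sorted list in $O(\log\vert S\vert)$ time. The extra care you take in justifying $Y=X\cup\{\vec v\}$ from the semantic hypothesis, and the closing caveat about $\theta[\cdot]$ not being injective, are sound elaborations of what the paper leaves implicit.
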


As we see in the sequel, the position vector $\vec{v}$, required by Proposition \ref{prop:child-syntactic-equivalence-complexity}, is always known, as it is the position vector used to apply the update operation to the parent to obtain the child. Our previous observations regarding $\Gamma$ and $\Gamma'$ apply also for $\widetilde{\Gamma}$ and $\widetilde{\Gamma}'$. In Table \ref{tab:adaptation graphs} we summarize the characteristics of the graphs $\Gamma$, $\Gamma'$, $\tilde{\Gamma}$, and $\tilde{\Gamma}'$ (visualised in Figure \ref{fig:diagram adaptation graph}).

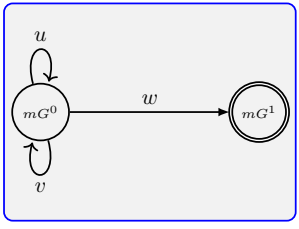
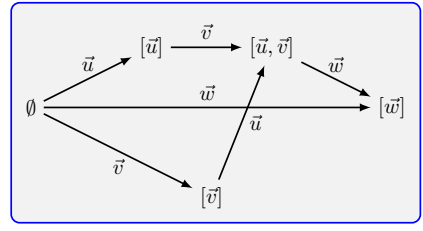
\begin{figure*}[t!]
    \centering
    \begin{subfigure}[t]{0.42\textwidth}
        \centering
\begin{adjustbox}{scale=.83}
\begin{tikzpicture}[transform shape,remember picture]

    \node[circle,draw,thick] (C) at (10,10) {\tiny $mG^0$};
    \node (C1) at (10,11.5) {};
    \node (C2) at (10,8.5) {};
    \node[circle,draw,thick,double] (N) at (13.5,10) {\tiny $mG^1$};

    \node at (10,11.2) (c3) {};
    \node at (10,8.6) (c4) {};
    \node at (11.75,10) (c5) {};
    
    \path[thick,-latex] (C) edge node[above] {$w$} (N);
    \path[thick,-latex] (C) edge [loop above] node {$u$} (C);
    \path[thick,-latex] (C) edge [loop below] node {$v$} (C);

    \begin{pgfonlayer}{background}
       \node[draw=blue,rounded corners,thick,fill=gray!10,fit=(C)(C1)(N)(C2)] {};
    \end{pgfonlayer}
    \end{tikzpicture}
    \end{adjustbox}
    \caption{Adaptation graph $\Gamma$.}\label{fig:set-based-ad-graph}
    \end{subfigure}%
    \hfil 
    \begin{subfigure}[t]{0.58\textwidth}
        \centering
  \begin{adjustbox}{scale=.8}
  \begin{tikzpicture}[transform shape,remember picture]
    \node at (0,0) (b1) {$\emptyset$};
    \node at (2,1) (b2) {$\left[\vec{u}\right]$};
    \node at (2,1.5) (b6) {};
    \node at (3,-1.5) (b3) {$\left[\vec{v}\right]$};
    \node at (6,0) (b4) {$\left[\vec{w}\right]$};
    \node at (4,1) (b5) {$\left[\vec{u}, \vec{v}\right]$};

    \node at (0.9,.6) (b7) {};
    \node at (1.5,-.75) (b8) {};
    \node at (3,0.3) (b9) {};
    \node at (2.8,1.4) (b10) {};
    \node at (5,.6) (b11) {};
    \node at (3.5,-.25) (b12) {};

    \draw[-latex,thick] (b1) -- (b2) node[midway,above] {$\vec{u}$};
    \draw[-latex,thick] (b1) -- (b3)
    node[midway,below] {$\vec{v}$};
    \draw[-latex,thick] (b1) -- (b4)
    node[midway,above] {$\vec{w}$};
    \draw[-latex,thick] (b2) -- (b5)
    node[midway,above] {$\vec{v}$};
    \draw[-latex,thick] (b3) -- (b5)
    node[midway,right] {$\vec{u}$};
    \draw[-latex,thick] (b5) -- (b4)
    node[midway,above] {$\vec{w}$};

    \begin{pgfonlayer}{background}
        \node[draw=blue,rounded corners,thick,fill=gray!10,fit=(b1)(b2)(b6)(b3)(b4)(b5)] {};
    \end{pgfonlayer}
    \end{tikzpicture}
    \end{adjustbox}
    \caption{a connected component of the set-based representation graph $\tilde{\Gamma}$.}
    \label{fig:BFS-tree}
    \end{subfigure}
    \caption{Correspondence between the adaptation graph $\Gamma$ and $\widetilde{\Gamma}$.}
\label{fig:Set-based adaptation graph}
\end{figure*}

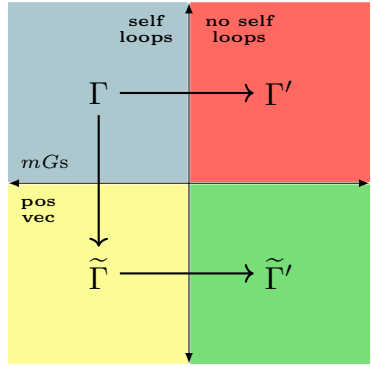
\begin{figure*}[t!]
    \centering
    \begin{tikzpicture}[transform shape,scale=1.2]
        \filldraw[pastelred] (0.02,0.02) rectangle (2,2);
        \filldraw[pastelyellow] (-0.02,-0.02) rectangle (-2,-2);
        \filldraw[pastelblue] (-0.02,0.02) rectangle (-2,2);
        \filldraw[pastelgreen] (0.02,-0.02) rectangle (2,-2);
        
        \draw[latex-latex] (2,0)--(-2,0);
        \draw[latex-latex] (0,-2)--(0,2);
        
        \coordinate (A1) at (-2,0);
        \node at (A1) [below right = .5mm and .1mm of A1] {\tiny $\substack{\textbf{pos} \\ \textbf{vec}}$};
        \node at (A1) [above right = .5mm and .1mm of A1] {\tiny $mG$s};
        \coordinate (A2) at (0,2);
        \node at (A2) [below left = .1mm and .5mm of A2] {\tiny $\substack{\textbf{self} \\ \textbf{loops}}$};
        \node at (A2) [below right = .1mm and .5mm of A2] {\tiny $\substack{\textbf{no self} \\ \textbf{loops}}$};;
        
        \node (B1) at (-1,-1) {$\widetilde{\Gamma}$};
        \node (B2) at (-1,1) {$\Gamma$};
        \node (B3) at (1,1) {$\Gamma'$};
        \node (B4) at (1,-1) {$\widetilde{\Gamma}'$};
        
        \draw[->,thick] (B2)--(B1);
        \draw[->,thick] (B2)--(B3);
        \draw[->,thick] (B1)--(B4);
        
        \end{tikzpicture}
        \caption{Correspondence between all variations of adaptation graph.}
\label{fig:diagram adaptation graph}
\end{figure*}

\begin{table}[]
    \centering
\resizebox{\linewidth}{!}{
   {\scriptsize
    \begin{tabular}{>{\centering\arraybackslash}p{3.5cm}>{\centering\arraybackslash}p{8.7cm}}
         \hline
          Graph & Edges \\
         \hline
         \hline
       $\Gamma = (\AD{*}{mG}, E)$ & $E = \left\{(mG^{1}, mG^{2}) \bigm| \begin{array}{l} \text{ there is } \vec{v} \in \chi(\text{NME}(mG^{1})), \\ \text{ s.t. } mG^{2} = (mG^{1})_{\vec{v}} \end{array} \right\}$ \\
       \\
       $\Gamma' = (\AD{*}{mG}, E')$ & $E^\prime = E \setminus \{(mG^{1}, mG^{2}) \in E \mid mG^{1} = mG^{2}\}$ \\ 
       \\
       $\widetilde{\Gamma} = (2^F, \widetilde{E})$ & 
        $\widetilde{E} = \left\{ (X,Y) \bigm|
         \begin{array}{l}  
         Y = X \cup \{\vec v\}, X \neq Y \\ 
         \text{ and the edge } (\theta[X], \theta[Y]) \text{ exists in } \Gamma  
         \end{array} 
         \right\}$\\
         \\
        $\widetilde{\Gamma}' = (2^F, \widetilde{E}')$ &  $\widetilde{E}' = \widetilde{E} \setminus \{ (X,Y) \in \widetilde{E} \mid \theta[X] = \theta[Y] \}$ \\
        \\
    \hline
    \end{tabular}
    }
    }
    \caption{Taxonomy of the adaptation graphs.}
    \label{tab:adaptation graphs}
\end{table}

Before concluding this section, it is important to note that the use of the set equivalence checks does not improve the worst-time analysis. If a set equivalence check fails, a subsequent equality check between the misinformation games is necessary. In the worst-case, each set equivalence check would be followed by a misinformation game equality check. However, in practice, we have observed considerable runtime improvements. This practical benefit justifies our decision to base our subsequent algorithms on this approach.

Additionally, recall that we cannot explore the Adaptation Procedure solely based on the set-based representation graph $\widetilde{\Gamma}$. Since we need the underlying misinformation games both for computing the $nme$s and for performing equality checks, we need to store the $\theta[\cdot]$ mapping. In the algorithms presented below, $\theta[\cdot]$ is dynamically mapped to a dictionary. The dictionary is expanded whenever a new misinformation game is discovered. For the details involved in implementing $\theta[\cdot]$ see \cite{PVF}. Moreover, in the sections below, we insist on the notation we used here: given a set of misinformation games $\mathcal X$, the set of position vectors that correspond to a set of strategy profiles $\mathcal{S}$ in $\mathcal{X}$ is $\widetilde{\mathcal{X}} := \{\chi(\sigma) \mid \text{for } \sigma \in \mathcal{S} \text{ in } \mathcal{X}\}$. For simplicity we will refer to $\widetilde{\mathcal{X}}$ as $\chi(\mathcal{S}(\mathcal{X}))$. Furthermore, in the following we abuse the notation $\theta[\mathcal{X}]$, where $\mathcal{X}$ is a set of position vector sets. Naturally, $\theta[\mathcal{X}] = \{\theta[X] \mid X \in \mathcal{X}\}$.

\subsection{Computing ALL-SMEs}
\label{subsec:computing all sme}

In this subsection, we present and analyse an algorithm for computing all the stable misinformed equilibria of a misinformation game. We state the problem formally below.

\problem{All-SMEs}{prob:all-smes}{
    $mG$, a $N$-player canonical misinformation game on $\vert S \vert$ joint strategies.
}{
    $SME$, the set of all stable misinformed equilibria of $mG$.
}

We organize our algorithm in three routines. The main routine is \texttt{AdaptationProcedure} (Algorithm \ref{algo:algo-1}), which uses two subroutines, $\texttt{TraverseAdaptationGraph}(mG)$ (Algorithm \ref{algo:algo-2}) and $\texttt{ComputeSME}(\mathcal{T})$ (Algorithm \ref{algo:algo-3}).

\paragraph{Main routine: Adaptation Procedure}
\label{par:main algorithm}

Algorithm \ref{algo:algo-1} implements the Adaptation Procedure in two phases. In the first phase (line \ref{line:algo-1-phase-1}), we use Algorithm \ref{algo:algo-2} to traverse the adaptation graph using the set-based representation described  previously. That is we traverse $\widetilde{\Gamma}$.
The output of that routine is the terminal set (see Definition \ref{def:terminal-set}) and the mapping $\theta[\cdot]$. Subsequently, the second phase (line \ref{line:algo-1-phase-2}) scans the terminal set to identify the stable misinformed equilibria by calling Algorithm \ref{algo:algo-3}. In order to do that, the second phase utilizes the mapping $\theta[\cdot]$ computed in the first phase.

\begin{algorithm}[h]
\caption{\texttt{AdaptationProcedure}}
\label{algo:algo-1}
\begin{algorithmic}[1]
    \Require A root misinformation game $mG$.
    \Statex
    \State $\widetilde{\mathcal{T}}, \theta[\cdot] \gets \texttt{TraverseAdaptationGraph}(mG)$ \label{line:algo-1-phase-1}
    \State $SME \gets \texttt{ComputeSME}(\widetilde{\mathcal{T}}, \theta[\cdot])$ \label{line:algo-1-phase-2}
    
    \State \Return $SME$
\end{algorithmic}
\end{algorithm}

\paragraph{Phase 1: Traverse Adaptation Graph}
\label{par:traverse adaptation graph}

Algorithm \ref{algo:algo-2} ($\texttt{TraverseAdaptationGraph}(mG)$) essentially employs a breadth-first traversal of the set-based representation $\widetilde{\Gamma}$ of the adaptation graph to compute the terminal set. To do so, we maintain two sets of sets of position vectors, namely $\widetilde{\mathcal{Q}}$ (a queue of the nodes that remain to be processed) and $\widetilde{\mathcal{V}}$ (the set of visited nodes). Moreover, we incrementally compute the $\theta[\cdot]$ mapping as we traverse the graph.

In the beginning, the set-based representation of the terminal set $\widetilde{\mathcal{T}}$ is empty and the sets $\widetilde{\mathcal{Q}}, \widetilde{\mathcal{V}}$ start with the value $\{ \emptyset \}$, as we initially have to process the root misinformation game $mG$. We initialize the $\theta[\cdot]$ dictionary with the root misinformation game and compute the root's $nme$s. In each iteration of the \textbf{while} loop in lines \ref{algo-Traverse-Adaptation-Graph-start-while}-\ref{algo-Traverse-Adaptation-Graph-end-while}, we select (arbitrarily) a set of position vectors $W \in \widetilde{\mathcal{Q}}$, and remove $W$ from $\widetilde{\mathcal{Q}}$. The set $W$ denotes the node undergoing processing during the current iteration of the \textbf{while} loop. Note that  for each new set $X$, we compute $NME(\theta[X])$ before inserting $X$ to the queue $\widetilde{\mathcal{Q}}$.

To process node $W \in \widetilde{\mathcal{Q}}$, we compute the expansions $X$ of $W$. An expansion $X$ of $W$ corresponds to $W \cup \{ \vec{v} \}$, for some position vector $\vec{v} \in  \chi(NME(\theta[W]))$.
Note that it is possible to have a position vector $\vec{v} \in  \chi(NME(\theta[W])$, which also belongs to $W$.

In this case, $W$ has a self-loop in $\widetilde{\Gamma}$, thus constituting a terminal node. Lines \ref{algo-step:syntactic-equivalence-1}--\ref{algo-step:syntactic-equivalence-2} perform this check, and if positive, add $W$ to the terminal set. Additionally, $X$ is another set-based representation for the same misinformation game. Thus $\theta[\cdot]$ is updated to reflect that fact.

For the remaining expansions $X = W \cup \{\vec{v}\}$, with $\vec{v} \notin W$, we firstly check if $X$ is visited, in lines \ref{algo-step:if visit X}-\ref{algo-step: end if visit X}. If $X$ is visited, we proceed to the next expansion. Otherwise, we append $X$ to the set of visited nodes $\widetilde{\mathcal{V}}$, line \ref{algo-step: append X}. Recall that, we may have $\theta[W] = \theta[X]$, even if $W \neq X$. For instance, the position vector $\vec{v}$ may correspond to a position of the payoff matrices, already known to the players. Hence, an additional equivalence check between misinformation games should be performed in lines \ref{algo-step:semantic-equivalence-1}-\ref{algo-step:semantic-equivalence-2}. If the equivalence check proves negative, then $mG_X$ is a new misinformation game. $mG_X$ is added to the dictionary $\theta[\cdot]$, and its $nme$s are computed. Lastly, $X$ is inserted into the queue $\widetilde{\mathcal{Q}}$.

The algorithm concludes when there are no further unprocessed nodes ($\widetilde{\mathcal{Q}} = \emptyset$), in which case the complete adaptation graph has been explored, and all misinformation games that must be added to the terminal set are already there. The terminal set is returned in line \ref{algo-Traverse-Adaptation-Graph-return-T}.

    \begin{algorithm}[h]
    \caption{$\texttt{TraverseAdaptationGraph}(mG)$}
    \label{algo:algo-2}
    \begin{algorithmic}[1]
        \Require A root misinformation game $mG$.

        \Statex
        \State $\widetilde{\mathcal{T}} \gets \emptyset$,
        $\widetilde{\mathcal{Q}} \gets \{ \emptyset \}$,
        $\widetilde{\mathcal{V}} \gets \{\emptyset\}$

        \State $\theta[\emptyset] \gets mG$
        \State compute $NME(\theta[\emptyset])$
        
        \While {$\widetilde{\mathcal{Q}} \neq \emptyset$}
        \label{algo-Traverse-Adaptation-Graph-start-while}
            
            \State Pick some $W \in \widetilde{\mathcal{Q}}$
            \State $\widetilde{\mathcal{Q}} \gets \widetilde{\mathcal{Q}} \setminus \{W\}$

            \If{$\exists \vec{v} \in \chi(NME(\theta[W])) \cap W$} 
            \label{algo-step:syntactic-equivalence-1}
                \State $\widetilde{\mathcal{T}} \gets \widetilde{\mathcal{T}} \cup \{W\}$ 
                \Comment{Self-loop detected}
            \EndIf \label{algo-step:syntactic-equivalence-2}
               
            \ForEach{position vector $\vec{v} \in \chi(NME(\theta[W])) \setminus W$}
                \State $X \gets W \cup \{\vec{v}\}$
                
                \If{$X \in \widetilde{\mathcal{V}}$} \label{algo-step:if visit X}
                    \Comment{Check if $X$ revisited}
                    \State \textbf{continue}
                \EndIf \label{algo-step: end if visit X}
                
                \State $\widetilde{\mathcal{V}} \gets \widetilde{\mathcal{V}} \cup \{X\}$  \label{algo-step: append X}
                \State $\#$ Compute the update operation resulting in $mG_X$.
                \If{$mG_{X} = mG_W$} 
                    \label{algo-step:semantic-equivalence-1}
                    \State $\theta[X] \gets \theta[W]$
                    \Comment{New set-based representation}
                    \Statex \Comment{for same misinformation games}
                    \State $\widetilde{\mathcal{T}} \gets \widetilde{\mathcal{T}} \cup \{W\}$
                    \Comment{Self-loop detected}%
                    \label{algo-step:semantic-equivalence-2}
                \Else
                    \State $\theta[X] \gets mG_X$
                    \Comment{New misinformation game}
                    \State compute $NME(\theta[X])$
                    \State $\widetilde{\mathcal{Q}} \gets \widetilde{\mathcal{Q}} \cup \{X\}$
                        
                \EndIf
            \EndFor
        \EndWhile
        \label{algo-Traverse-Adaptation-Graph-end-while}
        \State \Return $\widetilde{\mathcal{T}}$, $\theta[\cdot]$
        \label{algo-Traverse-Adaptation-Graph-return-T}
    \end{algorithmic}
    \end{algorithm}

\paragraph{Phase 2: Compute SMEs}
\label{par:compute stable misinformed equilibria}

The role of Algorithm \ref{algo:algo-3} is to determine the $sme$s of the Adaptation Procedure. Given the set-based representation of the terminal set $\widetilde{\mathcal{T}}$, it traverses the terminal set and checks whether each $nme$ in each of the misinformation games in the terminal set satisfies Definition \ref{def:sme}; if so, it is an $sme$ and it is added to the list of $sme$s (maintained in variable $SME$). Proposition \ref{prop:algo-terminal-set-sme} guarantees that this procedure will not miss any of the stable misinformed equilibria.

\begin{algorithm}[h]
\caption{$\texttt{ComputeSME}(\widetilde{\mathcal{T}})$}
\label{algo:algo-3}
\begin{algorithmic}[1]
        
    \Require The terminal set $\widetilde{\mathcal{T}}$ and the mapping $\theta[\cdot]$

    \Statex
    \State $SME \gets \emptyset$
    
    \ForEach {$X \in \widetilde{\mathcal{T}}$} 
        \ForEach {$\sigma \in NME(\theta[X])$}
            \If {for all position vectors $\vec{v} \in \chi(\sigma)$, we have $\theta[X] = \theta[X \cup \vec{v}]$}
                \State $SME \gets SME \cup \{\sigma\}$
                \Comment{Following Definition \ref{def:sme}.}
            \EndIf
        \EndFor
    \EndFor
    \State \Return $SME$
\end{algorithmic}
\end{algorithm}

\paragraph{Correctness}

Now we discuss the complexity of algorithms \ref{algo:algo-1}, \ref{algo:algo-2}, \ref{algo:algo-3}. Observe that the correctness of algorithms \ref{algo:algo-1}, \ref{algo:algo-3} depends solely on the correctness of Algorithm \ref{algo:algo-2}. Indeed, assuming that Algorithm \ref{algo:algo-2} correctly computes the terminal set $\mathcal{T}$, then Algorithm \ref{algo:algo-3} correctly computes the $sme$s, since it simply verifies Definition \ref{def:sme} for each element in the terminal set. Moreover, the correctness of Algorithm \ref{algo:algo-1} is implied by Proposition \ref{prop:algo-terminal-set-sme}. Thus, we only need to validate the correctness of Algorithm \ref{algo:algo-2}. In this direction, we need to show that Algorithm \ref{algo:algo-2} terminates when and only when the end criterion of Definition \ref{def:endingcriterion} is met, and after considering all the misinformation games in $\AD{*}{mG}$.

To simplify the notation of the following proofs we use sets of misinformation games instead of their set-based representations. Consider the sets $\widetilde{\mathcal{T}},\widetilde{\mathcal{Q}}, \widetilde{\mathcal{V}}$ of Algorithm \ref{algo:algo-2}. In the following we consider the corresponding sets of misinformation games, i.e., $\mathcal{T}, \mathcal{Q}, \mathcal{V}$. Recall that we can transition from a set of sets of position vectors to a set of misinformation games, using the mapping $\theta[\cdot]$, e.g., $\mathcal{Q} = \theta[\widetilde{\mathcal{Q}}]$.

We begin by stating a recursive definition of the queue $\mathcal{Q}$, which is respected by Algorithm \ref{algo:algo-2}. Observe that only the misinformation games that do not have a self-loop in the adaptation graph $\Gamma$ are inserted in the queue $\mathcal{Q}$. Therefore, for the $i$-th iteration of the \textbf{while} loop the following holds.

\begin{equation}
    \label{eq:queue-rec}
    \left\{\begin{split}
    & \mathcal{Q}_0       = \{mG^0\},\\
    & \mathcal{Q}_{i+1}   = \AD{}{\mathcal{Q}_i} \setminus \mathcal{T}
    \end{split}\right.
\end{equation}

Let $\mathcal{T}_i = \mathcal{Q}_i \cap \mathcal{T}$. Intuitively, $\mathcal{T}_i$ contains the terminal sets computed in the $i$-th iteration. Using the recursive equations in \eqref{eq:queue-rec}, we can rewrite the recursive equation in Definition \ref{def:AD(M)-adaptation procedure iterative process} as,
\begin{equation}
    \label{eq:stable-rec}
    \left\{\begin{split}
    &  M_0       = \mathcal{Q}_0,\\
    &  M_{i+1}   = \mathcal{Q}_{i+1} \cup \AD{i+1}{\bigcup_{k=1}^{i+1} \mathcal{T}_k}.
\end{split}\right.
\end{equation}
To see that, note that $\mathcal{Q}_{i+1}$ is produced from $\mathcal{Q}_i$, we apply the adaptation operator on $\mathcal{Q}_i$, neglecting the terminal nodes in $\mathcal{Q}_i$. To properly compute $M_{i+1}$, we need to reinstantiate the omitted terminal nodes and apply the adaptation operator on them. Noteworthy, Algorithm \ref{algo:algo-2} terminates when $\mathcal{Q}_{i} = \emptyset$. We argue that $\mathcal{Q}_{i} = \emptyset$ if and only if $M_i = M_{i+1}$, i.e., the end criterion of Definition \ref{def:endingcriterion} holds. In order to prove that, we first need the following lemma.
        
\begin{lemma}
\label{lem:correctness}
If $\mathcal T_i = \mathcal Q_i \cap \mathcal T$, where $\mathcal Q_i$ are defined as in equation \eqref{eq:queue-rec}, then it holds $\mathcal{T} = \bigcup_{i=1}^\infty \mathcal{T}_i$.
\end{lemma}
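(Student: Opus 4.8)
The plan is to prove the stated equality by double inclusion, treating $\supseteq$ as immediate and concentrating the work on $\subseteq$. For the inclusion $\bigcup_{i=1}^\infty \mathcal{T}_i \subseteq \mathcal{T}$, note that each $\mathcal{T}_i = \mathcal{Q}_i \cap \mathcal{T}$ is by construction a subset of $\mathcal{T}$, so their union cannot escape $\mathcal{T}$. Hence the only real content lies in showing that every terminal game is captured by some frontier $\mathcal{Q}_i$.

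For the reverse inclusion, fix some $mG \in \mathcal{T}$; it suffices to produce an index $i \geq 1$ with $mG \in \mathcal{Q}_i$, since then $mG \in \mathcal{Q}_i \cap \mathcal{T} = \mathcal{T}_i$. By Definition \ref{def:terminal-set} we have $mG \in \AD{*}{\mGt{0}}$, so there is a finite chain $\mGt{0} = H_0, H_1, \dots, H_\ell = mG$ with $H_{k+1} \in \AD{}{H_k}$. Using Proposition \ref{prop:adapt-graph-dag} (the loopless graph $\Gamma'$ is a DAG), I would first replace this chain by one that traverses only proper edges, i.e.\ with $H_k \neq H_{k+1}$, since self-loop steps can be excised without altering the endpoints. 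The plan is then to induct on $k$ to show that each $H_k$ is enqueued by the traversal governed by the recursion \eqref{eq:queue-rec}: the base case is $H_0 = \mGt{0} \in \mathcal{Q}_0$, and for the inductive step I would argue that once $H_k$ has been discovered, its proper successor $H_{k+1} \in \AD{}{H_k}$ is generated by an application of $\AD{}{\cdot}$ and, not being a self-loop, is placed on the frontier at the next level (unless it was already discovered earlier). Tracking the chain up to $k = \ell$ then deposits $mG$ in some $\mathcal{Q}_i$, and since $mG \in \mathcal{T}$ it lies in $\mathcal{T}_i$.

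The main obstacle is the interaction between the pruning of $\mathcal{T}$ in \eqref{eq:queue-rec} and the completeness of the discovery: I must rule out the possibility that a terminal game reachable \emph{only} through another terminal game is silently dropped when the latter is removed from the frontier. The resolution I would give is that a game identified as terminal is recorded in $\mathcal{T}$ but still contributes all of its non-self-loop expansions to the next frontier, so proper edges out of terminal nodes are never lost; combined with the finiteness of $\AD{*}{\mGt{0}}$ (Corollary \ref{cor:ad-infty-bound}) and the acyclicity of $\Gamma'$, this guarantees the induction terminates after at most $\LAD{mG}$ levels (Proposition \ref{thm:bounded mG}) with every reachable game — and in particular every member of $\mathcal{T}$ — discovered. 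A final bookkeeping remark settles the indexing of the union: a terminal root $\mGt{0}$ is detected already at the first iteration, so it too appears in $\bigcup_{i\geq 1}\mathcal{T}_i$, which closes the argument.
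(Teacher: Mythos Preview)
Your approach is essentially the same as the paper's: both reduce the lemma to showing that every $mG \in \mathcal{T}$ appears in some $\mathcal{Q}_i$, by tracing a path from $\mGt{0}$ to $mG$ and arguing that successive nodes on the path land in successive frontiers. The paper phrases this as a contradiction argument (take the most recent ancestor of $mG$ that lies in some $\mathcal{Q}_j$, observe that its proper children lie in $\mathcal{Q}_{j+1}$, and contradict the choice of ``most recent''), whereas you carry out a direct induction along the chain; these are dual formulations of the same idea, and you are somewhat more explicit than the paper about the subtlety of terminal intermediate nodes.
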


The above lemma states that every node in the terminal set will be considered by Algorithm \ref{algo:algo-2}, since, at some point, will belong to the queue $\mathcal{Q}$. The following proposition shows that when the queue $\mathcal{Q}$ becomes empty, i.e. the graph exploration terminates, then the end criterion of the Adaptation Procedure is met. Thus proving the correctness of Algorithm \ref{algo:algo-2}.

\begin{proposition}[Correctness of Algorithm \ref{algo:algo-2}]
\label{prop:algo-3-correctness}
Consider the equations \eqref{eq:queue-rec}, \eqref{eq:stable-rec}. If  $\mathcal{Q}_{i} = \emptyset$, then $M_{i+1}$ is a Stable Set, and  $\bigcup_{k=1}^{i+1} \mathcal{T}_k = \mathcal{T}$. 
\end{proposition}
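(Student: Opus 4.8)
The plan is to prove the two assertions in turn, exploiting the fact that an empty queue stays empty. First I would observe that once $\mathcal{Q}_i = \emptyset$, equation \eqref{eq:queue-rec} forces $\mathcal{Q}_{i+1} = \AD{}{\mathcal{Q}_i} \setminus \mathcal{T} = \AD{}{\emptyset} \setminus \mathcal{T} = \emptyset$, since $\AD{}{\emptyset} = \emptyset$ directly from Definition \ref{def:AD(M)-adaptation procedure iterative process}; by induction $\mathcal{Q}_j = \emptyset$ for every $j \geq i$, and hence $\mathcal{T}_j = \mathcal{Q}_j \cap \mathcal{T} = \emptyset$ for all $j \geq i$, so the tail of the family $(\mathcal{T}_k)_k$ vanishes. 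This is the single structural consequence of the hypothesis that both parts will rest on.

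For the terminal-set equality I would invoke Lemma \ref{lem:correctness}, which gives $\mathcal{T} = \bigcup_{k=1}^{\infty}\mathcal{T}_k$. Since $\mathcal{T}_k = \emptyset$ for $k \geq i$, the infinite union collapses to $\bigcup_{k=1}^{i-1}\mathcal{T}_k$, and reinserting the empty terms $\mathcal{T}_i,\mathcal{T}_{i+1}$ changes nothing; therefore $\bigcup_{k=1}^{i+1}\mathcal{T}_k = \bigcup_{k=1}^{\infty}\mathcal{T}_k = \mathcal{T}$, which is the second claim.

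For the first claim I would substitute the two facts just established into \eqref{eq:stable-rec}: with $\mathcal{Q}_{i+1}=\emptyset$ and $\bigcup_{k=1}^{i+1}\mathcal{T}_k = \mathcal{T}$ we obtain $M_{i+1} = \AD{(i+1)}{\mathcal{T}}$, so it suffices to show $\AD{(i+1)}{\mathcal{T}} = \AD{\infty}{mG^0}$, and by Theorem \ref{theo:stable-set-characterisation} the right-hand side equals $\AD{*}{\mathcal{T}}$. The inclusion $\AD{(i+1)}{\mathcal{T}} \subseteq \AD{*}{\mathcal{T}}$ is immediate. For the reverse inclusion I would note that every $mG \in \mathcal{T}$ satisfies $mG \in \AD{}{\{mG\}}$ (Definition \ref{def:terminal-set}), so by additivity (Proposition \ref{prop:additive}) $\mathcal{T} \subseteq \AD{}{\mathcal{T}}$; since $\AD{}{\cdot}$ is monotone (again by additivity), the sequence $(\AD{(t)}{\mathcal{T}})_t$ is non-decreasing, whence $\AD{(i+1)}{\mathcal{T}} = \bigcup_{t=0}^{i+1}\AD{(t)}{\mathcal{T}}$. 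It then remains to argue that this finite union already exhausts $\AD{*}{\mathcal{T}} = \bigcup_{t=0}^{\infty}\AD{(t)}{\mathcal{T}}$, i.e. that no game first appears after step $i+1$; emptiness of the queue certifies exactly this, because every descendant of a terminal game lies on a path of the loopless graph $\Gamma'$, whose longest path has length $\LAD{mG^0}$ (Proposition \ref{prop:adapt-graph-longest-path}), and following genuine-child edges keeps the queue non-empty through level $\LAD{mG^0}$, so $\mathcal{Q}_i=\emptyset$ forces $i \geq \LAD{mG^0}+1$, hence a fortiori $i+1 \geq \LAD{mG^0}$. As each terminal game reaches all of its descendants within $\LAD{mG^0}$ steps, this yields $\AD{(i+1)}{\mathcal{T}} = \AD{*}{\mathcal{T}} = \AD{\infty}{mG^0}$, so $M_{i+1}$ is the Stable Set (equivalently $M_{i+1}=M_{i+2}$, i.e. the criterion of Definition \ref{def:endingcriterion}).

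The main obstacle is precisely this last timing step: converting the set-theoretic statement $\mathcal{Q}_i=\emptyset$ into the quantitative guarantee $i+1 \geq \LAD{mG^0}$ that the Adaptation Procedure on $\mathcal{T}$ has already stabilised. Care is needed because a terminal game may possess genuine (non-self-loop) children, so the recursion must be read as expanding such a game before discarding only its self-loop, otherwise descendants reachable solely through terminal nodes would be lost and Lemma \ref{lem:correctness} would fail. Once this bookkeeping is fixed, the longest-path characterisation of $\Gamma'$ bridges queue-emptiness and stabilisation, and Theorem \ref{theo:stable-set-characterisation} delivers the identification of $M_{i+1}$ with the Stable Set; the rest of the argument is routine set algebra.
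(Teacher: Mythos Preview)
Your proof follows the paper's structure closely: both deduce from $\mathcal{Q}_i=\emptyset$ that all subsequent queues (and hence $\mathcal{T}_j$ for $j\ge i$) vanish, invoke Lemma~\ref{lem:correctness} to obtain $\bigcup_{k=1}^{i+1}\mathcal{T}_k=\mathcal{T}$, feed this into \eqref{eq:stable-rec} to get $M_{i+1}=\AD{(i+1)}{\mathcal{T}}$, and close with Theorem~\ref{theo:stable-set-characterisation}.

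The one place you diverge is in arguing $\AD{(i+1)}{\mathcal{T}}=\AD{*}{\mathcal{T}}$. The paper dispatches this in a single sentence (``no additional misinformation games will be generated, otherwise $\mathcal{Q}_{i+1}\neq\emptyset$''), whereas you take a longer route: you first show the sequence $(\AD{(t)}{\mathcal{T}})_t$ is monotone via $\mathcal{T}\subseteq\AD{}{\mathcal{T}}$, then extract the quantitative bound $i\ge\LAD{mG^0}+1$ from the longest-path characterisation of $\Gamma'$ (Proposition~\ref{prop:adapt-graph-longest-path}), and use that bound to truncate the infinite union. Your argument is more explicit and ties queue-emptiness to the global termination time; the paper's one-liner encodes the same underlying observation---once the queue is empty no genuinely new game can be produced---but without the intermediate detour through $\LAD{mG^0}$. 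Your closing remark about the bookkeeping subtlety in \eqref{eq:queue-rec} (a terminal game may have genuine non-self-loop children, so it must be expanded before being discarded) is a real issue that the paper glosses over; without that reading, both the paper's one-liner and your longest-path step are vulnerable to the same objection, and indeed Lemma~\ref{lem:correctness} would fail as stated.
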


\paragraph{Complexity}

We now proceed to the time analysis of our method. We begin from Algorithm \ref{algo:algo-2} since it does most of the heavy lifting. For the time complexity of Algorithm \ref{algo:algo-2}, it suffices to see that we are essentially doing a breadth-first search in the set-based representation $\widetilde{\Gamma}$ of the adaptation graph. Thus, the computational complexity is bounded by the number of vertices. We state this complexity rigorously in the following proposition.

\begin{proposition}
\label{prop:graph-traversal-complexity}
Assume a canonical $\vert N \vert$-player misinformation game $mG$, on $\vert S \vert$ joint strategies. Then Algorithm \ref{algo:algo-2} will terminate after
\begin{equation}
    O\left( 2^{\vert S \vert} \vert N \vert^3 \mathtt{t}(\textbf{NASH}) \vert S \vert^{2} \right).
\end{equation}
\end{proposition}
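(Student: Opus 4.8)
The plan is to read Algorithm~\ref{algo:algo-2} as a breadth-first search over the set-based representation $\widetilde{\Gamma}$ and to charge the total running time to the product (number of nodes processed)$\,\times\,$(work done per node). First I would argue that the \textbf{while} loop executes at most once per vertex of $\widetilde{\Gamma}$: every set $X$ that is enqueued is first inserted into the visited set $\widetilde{\mathcal{V}}$ (line~\ref{algo-step: append X}), and the guard in lines~\ref{algo-step:if visit X}--\ref{algo-step: end if visit X} prevents any already-visited set from being enqueued again, so no vertex is processed twice. Since the vertices of $\widetilde{\Gamma}$ are elements of $2^S$, there are at most $2^{\vert S \vert}$ of them; together with the correctness/termination guarantee of Proposition~\ref{prop:algo-3-correctness} (the loop halts exactly when $\widetilde{\mathcal{Q}}=\emptyset$) and the finiteness of $\AD{*}{\{mG\}}$ from Corollary~\ref{cor:ad-infty-bound}, this bounds the number of loop iterations by $2^{\vert S \vert}$.

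Next I would bound the work performed while processing a single node $W$. There are three cost sources. (i) Computing $NME(\theta[X])$ for a freshly discovered misinformation game: as noted after Definition~\ref{def:misinformed_strategy_equilibrium_Gi}, an $nme$ is the agglomeration of a Nash equilibrium of each of the $\vert N \vert$ subjective games, so this costs $O(\vert N\vert \cdot \mathtt{t}(\textbf{NASH}))$. (ii) The \textbf{for each} loop ranges over position vectors $\vec v \in \chi(NME(\theta[W]))\setminus W$; here the key observation is that $\chi(NME(\theta[W]))=\bigcup_{\sigma\in NME}\chi(\sigma)\subseteq S$, so this union contains at most $\vert S\vert$ distinct position vectors and the loop runs at most $\vert S\vert$ times, regardless of how many $nme$s exist. (iii) For each such $\vec v$, forming $X=W\cup\{\vec v\}$ and testing the equivalence $mG_X = mG_W$. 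Although the set-based check of Proposition~\ref{prop:child-syntactic-equivalence-complexity} resolves most cases in $O(\log\vert S\vert)$, in the worst case a failed set check must be followed by a full misinformation-game equality test, which by Proposition~\ref{prop:semantic-equivalence-complexity} costs $O(\vert N\vert^2\cdot\vert S\vert)$; this is the bound one must use for a worst-case analysis.

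Combining (i)--(iii), the work charged to one node is $O\big(\vert N\vert\,\mathtt{t}(\textbf{NASH})\big)$ for its own $nme$ plus $\vert S\vert$ expansions each costing $O\big(\vert N\vert^2\vert S\vert + \vert N\vert\,\mathtt{t}(\textbf{NASH})\big)$, i.e. $O\big(\vert N\vert^2\vert S\vert^2 + \vert S\vert\,\vert N\vert\,\mathtt{t}(\textbf{NASH})\big)$ per node. Multiplying by the $2^{\vert S\vert}$ bound on the number of nodes and then absorbing both summands into a single dominating term (using $\vert N\vert\geq 1$, $\vert S\vert\geq 1$, $\mathtt{t}(\textbf{NASH})\geq 1$, so that $\vert N\vert^2\vert S\vert^2 \le \vert N\vert^3\,\mathtt{t}(\textbf{NASH})\,\vert S\vert^2$ and $\vert S\vert\,\vert N\vert\,\mathtt{t}(\textbf{NASH}) \le \vert N\vert^3\,\mathtt{t}(\textbf{NASH})\,\vert S\vert^2$) yields the claimed $O\big(2^{\vert S\vert}\,\vert N\vert^3\,\mathtt{t}(\textbf{NASH})\,\vert S\vert^2\big)$.

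The main obstacle is the accounting in step (ii): the number of $nme$s of a misinformation game can be large, so one must resist iterating per-$nme$ and instead exploit that the characteristic sets live inside $S$, capping the number of distinct expansions at $\vert S\vert$. The second delicate point is justifying that the set-based speed-up of Proposition~\ref{prop:child-syntactic-equivalence-complexity} cannot be relied upon in the worst case, so that the $O(\vert N\vert^2\vert S\vert)$ equality cost of Proposition~\ref{prop:semantic-equivalence-complexity} is the correct charge; the remaining arithmetic is routine over-estimation to package two additive terms into the single stated bound (which is deliberately loose in the exponent of $\vert N\vert$).
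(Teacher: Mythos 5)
Your proposal is correct and follows essentially the same decomposition as the paper's proof: bound the number of processed nodes by $\vert 2^S \vert = 2^{\vert S \vert}$, charge each node $O(\vert N \vert\, \mathtt{t}(\textbf{NASH}))$ for the $nme$ computation, at most $\vert S \vert$ position-vector expansions, and $O(\vert N \vert^2 \vert S \vert)$ per worst-case equality check via Proposition~\ref{prop:semantic-equivalence-complexity}. Your accounting is in fact slightly more careful than the paper's (additive per-node costs absorbed into the dominating term, rather than a straight multiplication of all factors), but it arrives at the same bound by the same route.
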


The following theorem summarises our analysis of algorithms \ref{algo:algo-1}, \ref{algo:algo-2}, \ref{algo:algo-3}. Note that the total complexity of Algorithm \ref{algo:algo-1} is the sum of the complexities of Algorithms \ref{algo:algo-2}, \ref{algo:algo-3}.

\begin{theorem}[Correctness and Complexity of Algorithm \ref{algo:algo-1}]
\label{theo:complexity-corectness}
Algorithm \ref{algo:algo-1} correctly computes \emph{all} the $sme$s of a canonical $\vert N \vert$-player misinformation game $mG^0$, on $\vert S \vert$ joint strategies in time
\begin{equation}
    \label{eq:complexity-corectness}
    O\left( \vert N \vert^2 \vert S \vert^{2} \left[ \vert \widetilde{\mathcal{T}} \vert + 2^{\vert S \vert} \vert N \vert \mathtt{t}(\textbf{NASH}) \right] \right)
\end{equation}
\end{theorem}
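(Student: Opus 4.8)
The plan is to split the statement into a correctness claim and a complexity claim, and to assemble each from the results already proved for the two subroutines $\texttt{TraverseAdaptationGraph}$ (Algorithm \ref{algo:algo-2}) and $\texttt{ComputeSME}$ (Algorithm \ref{algo:algo-3}). Since Algorithm \ref{algo:algo-1} merely runs the first subroutine and pipes its output into the second, both claims reduce to chaining facts about the two phases.

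For correctness, the first phase is handled by Proposition \ref{prop:algo-3-correctness} (which itself rests on Lemma \ref{lem:correctness}): it guarantees that the breadth-first traversal of $\widetilde{\Gamma}$ halts exactly when the end criterion of Definition \ref{def:endingcriterion} is met, at which point the returned $\widetilde{\mathcal{T}}$ represents the correct terminal set $\mathcal{T} = \theta[\widetilde{\mathcal{T}}]$ and $\theta[\cdot]$ is fully populated. Given a correct $\mathcal{T}$, Algorithm \ref{algo:algo-3} inspects every $nme$ of every game in $\mathcal{T}$ and keeps exactly those satisfying the condition ``for all $\vec v \in \chi(\sigma)$, $\widehat{mG}_{\vec v} = \widehat{mG}$'' of Definition \ref{def:sme}, so it emits only $sme$s (soundness). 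Completeness follows from Proposition \ref{prop:algo-terminal-set-sme}: every $\sigma \in SME(mG^0)$ is an $nme$ of some $mG \in \mathcal{T}$, hence it is examined and, by Definition \ref{def:sme}, passes the test. Therefore Algorithm \ref{algo:algo-1} returns precisely $SME(mG^0)$.

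For the complexity, I would bound the total time as the sum of the two phases, which the paper already notes is legitimate because $\widetilde{\mathcal{T}}$ and $\theta[\cdot]$ are consumed in a single forward pass without re-traversal. The first phase is bounded directly by Proposition \ref{prop:graph-traversal-complexity}, giving $O\!\left(2^{\vert S \vert}\,\vert N\vert^3\,\mathtt{t}(\textbf{NASH})\,\vert S\vert^{2}\right)$. For the second phase, the $nme$s of each terminal game were already computed and cached during the traversal, so Algorithm \ref{algo:algo-3} performs no new Nash computations; its work per pair $(X,\sigma)$ with $X \in \widetilde{\mathcal{T}}$ and $\sigma \in NME(\theta[X])$ consists of looping over the positions $\vec v \in \chi(\sigma) \subseteq S$ (at most $\vert S\vert$ of them) and testing $\theta[X] = \theta[X\cup\{\vec v\}]$, each such equality test costing $\Theta(\vert N\vert^{2}\vert S\vert)$ by Proposition \ref{prop:semantic-equivalence-complexity}. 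This yields $O(\vert N\vert^{2}\vert S\vert^{2})$ per terminal game and hence $O\!\left(\vert N\vert^{2}\vert S\vert^{2}\,\vert\widetilde{\mathcal{T}}\vert\right)$ for the phase. Summing the two bounds and factoring out $\vert N\vert^{2}\vert S\vert^{2}$ reproduces exactly \eqref{eq:complexity-corectness}.

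The main obstacle is the per-node accounting in the second phase: the clean $O(\vert N\vert^{2}\vert S\vert^{2})$ bound per terminal game implicitly absorbs the number of $nme$s each terminal game contributes, so I would make the bookkeeping explicit by observing that the positions to be checked are exactly the (already enumerated) supports $\chi(\sigma)$, and that the set-based representation lets us replace most full equality tests $\theta[X]=\theta[X\cup\{\vec v\}]$ with the $O(\log\vert S\vert)$ membership check of Proposition \ref{prop:child-syntactic-equivalence-complexity}, the $\Theta(\vert N\vert^{2}\vert S\vert)$ test being needed only on a failed set-equivalence; thus the stated bound is a safe worst-case overestimate. A secondary point to verify is that the traversal cost genuinely dominates, i.e.\ that the exponential term $2^{\vert S\vert}\vert N\vert\,\mathtt{t}(\textbf{NASH})$ inside the bracket reflects the phase-one vertex count of $\widetilde{\Gamma}$ rather than any phase-two contribution, which is immediate once the two summands are identified with the two propositions above.
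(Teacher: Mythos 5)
Your proposal is correct and follows essentially the same route as the paper: correctness is reduced to Proposition \ref{prop:algo-3-correctness} for the traversal phase plus Proposition \ref{prop:algo-terminal-set-sme} and Definition \ref{def:sme} for the $sme$ extraction, and the complexity is the sum of the Proposition \ref{prop:graph-traversal-complexity} bound with an $O(\vert\widetilde{\mathcal{T}}\vert\cdot\vert S\vert\cdot\vert N\vert^{2}\vert S\vert)$ accounting of the equivalence checks in phase two. Your version is in fact more explicit than the paper's about the per-terminal-game bookkeeping, but the decomposition and the cited results are the same.
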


In the above theorem, we provide an upper bound for the computation of the \emph{all} the $sme$s. Since the input to the algorithm is a misinformation game, the size of the input is $\vert N \vert \cdot \vert S \vert$. Therefore, from equation \ref{eq:complexity-corectness} our algorithm is exponential to the size of the input. Moreover, Algorithm \ref{algo:algo-1} performs an exponential number of calls to an oracle computing the Nash equilibria of a $\vert N \vert$-player normal-form game. This computational hardness seems to be unavoidable since the Adaptation Procedure (see Definition \ref{def:AD(M)-adaptation procedure iterative process}) leads, in the worst case, to an exponential explosion of the number of misinformation games in $\AD{*}{mG}$. We will see in the next subsection that computing a single $sme$ appears to be a much easier problem.

\subsection{Computing 1-SME}
\label{subsec:comput-1-sme}

In this subsection, we will introduce a simple algorithm for computing a single $sme$ of the Adaptation Procedure on a misinformation game $mG$. First, the computation of a single $sme$ problem is articulated as follows

\problem{1-SME}{prob:1-smes}{
    $mG$, a canonical $\vert N \vert$-player misinformation game on $\vert S \vert$ joint strategies.
}{
    $sme$, a (possibly mixed) strategy profile that constitutes a stable misinformed equilibrium.
}

\noindent The rationale of our algorithm is stated in the following corollary.

\begin{corollary}
    \label{cor:maximal-path}
    Let $\Gamma = (\AD{*}{mG}, E)$ be the adaptation graph of the Adaptation Procedure on a misinformation game $mG$. Also, let $\Gamma^\prime$ be the resulting graph after omitting the self-loops from $\Gamma$. Let $mG^\prime$ be the last node on a \emph{maximal} path in $\Gamma^\prime$, starting from $mG$. Then, all the $nme$s of $mG^\prime$ are $sme$s.
\end{corollary}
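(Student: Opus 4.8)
The plan is to reduce the statement to Proposition~\ref{prop:adapt-graph-sinks}, which already establishes that every $nme$ of a sink node of $\Gamma'$ is an $sme$. The only genuine work is to argue that the last node $mG'$ of a maximal path is necessarily a sink node of $\Gamma'$, i.e.\ that $d^+_{\Gamma'}(mG') = 0$.

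First I would recall the structural facts we may assume: by Corollary~\ref{cor:ad-infty-bound} the vertex set $\AD{*}{mG}$ is finite, and by Proposition~\ref{prop:adapt-graph-dag} the loopless graph $\Gamma'$ is a directed acyclic graph. Now consider a maximal path $mG = mG_0 \to mG_1 \to \dots \to mG_k = mG'$ in $\Gamma'$ starting from $mG$. Here \emph{maximal} means the path cannot be extended by appending a further edge out of its endpoint $mG'$. I claim this forces $d^+_{\Gamma'}(mG') = 0$. Suppose, for contradiction, that $mG'$ had an outgoing edge $(mG', \widehat{mG}) \in E'$. Because $\Gamma'$ is acyclic, $\widehat{mG}$ cannot already appear among $mG_0, \dots, mG_k$ (otherwise we would close a directed cycle); hence appending this edge yields a strictly longer path, contradicting maximality. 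Therefore $mG'$ is a sink node, i.e.\ $mG' \in K$ in the notation of Proposition~\ref{prop:adapt-graph-sinks}.

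With $mG'$ identified as a sink, I would simply invoke Proposition~\ref{prop:adapt-graph-sinks}: since $mG' \in K$, every $nme$ of $mG'$ is an $sme$ of the Adaptation Procedure on $mG$, which is exactly the assertion of the corollary. (Implicitly this also records that such a maximal path exists and is finite, which again follows from finiteness of $\AD{*}{mG}$ together with acyclicity, so the construction is not vacuous.)

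The main, though minor, obstacle is making the first step precise, namely the equivalence between ``the path cannot be extended'' and ``the endpoint has out-degree zero in $\Gamma'$.'' The essential ingredient is acyclicity: it rules out the possibility that a path fails to extend merely because every outgoing edge would revisit an earlier vertex, so the only remaining obstruction to extension is the genuine absence of outgoing edges. One subtlety worth flagging is that self-loops do occur in $\Gamma$ (every terminal node carries one, by Definition~\ref{def:terminal-set}), but they are deleted in passing to $\Gamma'$; consequently a sink of $\Gamma'$ is precisely a node whose only outgoing edge in $\Gamma$, if any, is a self-loop, which is exactly the terminal-set condition that Proposition~\ref{prop:adapt-graph-sinks} leverages.
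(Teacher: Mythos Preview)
Your proof is correct and follows essentially the same approach as the paper: show that the endpoint of a maximal path in $\Gamma'$ is a sink, then invoke Proposition~\ref{prop:adapt-graph-sinks}. The paper's proof is a terse two-line version of exactly this; you have simply filled in the justification (via acyclicity) for why maximality of the path forces the endpoint to have out-degree zero.
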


From the above corollary, it suffices to find a maximal path in $\Gamma^\prime$. Intuitively, we can do that by iteratively adapting a single misinformation game, beginning from the root $mG$. We present this simple method in Algorithm \ref{algo:find-1-sme}. In the following theorem, we discuss the computational complexity of Algorithm \ref{algo:find-1-sme}.

\begin{algorithm}[h]
    \caption{$\texttt{Find-1-SME}(mG)$}
    \label{algo:find-1-sme}
    \begin{algorithmic}[1]
        \Require A root misinformation game $mG$.
        \Statex
        
        \While {True}
            \State compute $NME(mG)$
            \State choose some $\sigma \in NME(mG)$
            \If{$\exists \vec{v} \in \chi(\sigma)$ such that $mG \neq (mG)_{\vec{v}}$}
                \State $mG \gets (mG)_{\vec{v}}$
            \Else
                \State \Return $\sigma$
            \EndIf
        \EndWhile
         
    \end{algorithmic}
\end{algorithm}

\begin{theorem}
\label{theo:complexity-1-sme-upper-bound}
Consider the canonical $\vert N \vert$-player misinformation game $mG$, on $\vert S \vert$ joint strategies. Then for the computational complexity of the Algorithm \ref{algo:find-1-sme} it holds
\begin{align*}
    O\left( \vert S \vert^{2} \cdot \vert N \vert^3  \cdot t(\textbf{NASH}) \right). \label{eq:comp-1-sme}
\end{align*}
\end{theorem}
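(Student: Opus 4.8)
The plan is to bound the running time as the product of two quantities: the number of iterations of the \textbf{while} loop, and the worst-case cost of a single iteration. So first I would show that the loop runs $O(\vert S \vert)$ times, then that each iteration costs $O(\vert N \vert \cdot t(\textbf{NASH}) + \vert S \vert^2 \vert N \vert^2)$, and finally multiply the two and simplify to the stated form.

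For the iteration count, the key observation is that each non-returning iteration performs exactly one update $mG \gets (mG)_{\vec{v}}$, and it is taken only when $(mG)_{\vec v} \neq mG$, i.e.\ when position $\vec v$ is not yet ``learnt'' (some subjective payoff still differs from $P^0_{\vec v}$). By Definition \ref{def:vec_def} the update sets $P^i_{\vec v} = P^0_{\vec v}$ for every player $i$, and since the Adaptation Procedure only ever overwrites subjective payoffs with the \emph{actual} ones, a position once learnt stays learnt forever. Hence each non-returning iteration strictly increases the number of learnt positions; as there are $\vert S \vert$ positions in total, there can be at most $\vert S \vert$ such iterations, followed by a single returning one. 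Equivalently, the games visited form a maximal path in $\Gamma'$, whose length is $\LAD{mG} \le \vert S \vert$ by Proposition \ref{prop:adapt-graph-longest-path} together with Proposition \ref{thm:bounded mG}; thus the loop executes $\LAD{mG}+1 = O(\vert S \vert)$ times (this also guarantees termination, and Corollary \ref{cor:maximal-path} ensures the returned $\sigma$ is a genuine $sme$).

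For the per-iteration cost there are three operations. Computing one $nme$ amounts to computing a single Nash equilibrium in each of the $\vert N \vert$ subjective games $G^1,\dots,G^{\vert N \vert}$ and agglomerating them, at cost $O(\vert N \vert \cdot t(\textbf{NASH}))$. Evaluating the stopping condition requires, for the chosen $\sigma$, ranging over $\vec v \in \chi(\sigma)$ and testing $mG = (mG)_{\vec v}$; since $\vert \chi(\sigma) \vert = \prod_i \vert \supp(\sigma_i) \vert \le \prod_i \vert S_i \vert = \vert S \vert$ and each test compares the payoff vector at a single position across the $\vert N \vert$ subjective games (at worst the full per-game equality cost $\Theta(\vert N \vert^2 \vert S \vert)$ of Proposition \ref{prop:semantic-equivalence-complexity}, though a per-position comparison at $O(\vert N \vert^2)$ already suffices), this phase is dominated by $O(\vert S \vert^2 \vert N \vert^2)$. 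The update itself touches one position in each of $\vert N \vert$ matrices and is cheaper. So a single iteration costs $O(\vert N \vert \cdot t(\textbf{NASH}) + \vert S \vert^2 \vert N \vert^2)$.

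Multiplying by the $O(\vert S \vert)$ iteration bound gives $O\!\left( \vert S \vert \vert N \vert\, t(\textbf{NASH}) + \vert S \vert^3 \vert N \vert^2 \right)$. To reach the stated bound I would finally absorb both terms using $t(\textbf{NASH}) = \Omega(\vert S \vert)$ (any Nash solver must at least read the payoff matrix, which has $\Omega(\vert S \vert)$ entries): then $\vert S \vert^3 \vert N \vert^2 \le \vert S \vert^2 \vert N \vert^3\, t(\textbf{NASH})$ and the first term is dominated likewise, so the whole running time is $O(\vert S \vert^2 \vert N \vert^3\, t(\textbf{NASH}))$. I expect the main obstacle to be the iteration-count argument rather than the arithmetic: one must justify rigorously that the loop never re-learns or ``un-learns'' a position and therefore halts within $\vert S \vert$ updates (the monotonicity/maximal-path reasoning above), whereas the per-iteration estimate is routine and the stated bound is in fact deliberately loose.
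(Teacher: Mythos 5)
Your proof is correct and follows the same overall decomposition as the paper's: bound the number of \textbf{while}-loop iterations by $O(\vert S\vert)$ via a monotone-learning argument, bound the per-iteration cost, and multiply. The one place you genuinely diverge is the count of position vectors per $nme$: the paper asserts $\vert\chi(\sigma)\vert \le \vert N\vert$, which makes the equality-check phase cost $O(\vert N\vert^3\vert S\vert)$ per iteration and lets the stated bound follow with nothing more than $t(\textbf{NASH}) \ge 1$; you instead use $\vert\chi(\sigma)\vert = \prod_i\vert\supp(\sigma_i)\vert \le \vert S\vert$, which is what Definition \ref{def:characteristic} actually supports (the Cartesian product of supports can have up to $\vert S\vert$ elements for a fully mixed profile), and you then need the mild extra hypothesis $t(\textbf{NASH}) = \Omega(\vert S\vert)$ to absorb the resulting $\vert S\vert^3\vert N\vert^2$ term into the stated bound. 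Your version is the more defensible one: the paper's $\vert\chi(\sigma)\vert\le\vert N\vert$ appears to conflate the dimension of a single position vector with the cardinality of $\chi(\sigma)$, and your justification of $t(\textbf{NASH}) = \Omega(\vert S\vert)$ (any solver must at least read the payoff matrix) is reasonable. Your iteration-count argument is also spelled out more carefully than the paper's, which simply points to the loop of the traversal algorithm and asserts $\LAD{mG}=\Theta(\vert S\vert)$, whereas you derive the $O(\vert S\vert)$ bound directly from the fact that each non-returning iteration permanently learns a fresh position.
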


Using Theorems \ref{theo:complexity-corectness}, \ref{theo:complexity-1-sme-upper-bound} we can directly compare the \textbf{ALL-SMEs} problem, which involves computing all the $sme$s of a misinformation game, and the \textbf{1-SME} problem, which involves computing a single $sme$. In both problems, we use an oracle to the \textbf{NASH} problem to compute all the Nash equilibria of a normal-form game. Since \textbf{NASH} is a \textbf{PPAD}-complete problem, both \textbf{ALL-SMEs} and \textbf{1-SME} are computationally intractable. However, there is a qualitative difference between these problems. In \textbf{ALL-SMEs}, we will be making an exponential number of calls to the \textbf{NASH} oracle (with respect to the size of the input $\vert S \vert$), whereas in \textbf{1-SME}, we make only a polynomial number of calls. This difference in the number of $\textbf{NASH}$ oracle calls comes from the fact that Algorithm \ref{algo:algo-2} explores the entire adaptation graph, while Algorithm \ref{algo:find-1-sme} explores only a small portion of it, corresponding to a single path in the adaptation graph. The latter is achieved by \emph{greedily} expanding a single misinformation game. Such a greedy approach cannot be applied to the \textbf{ALL-SMEs} problem.

\subsection{Implementation}
\label{subsec:Implementation}

In this section, we review the implementation details of the algorithms outlined in Subsections \ref{subsec:computing all sme} and \ref{subsec:comput-1-sme}. We will focus on three important parts of Algorithms \ref{algo:algo-2}, \ref{algo:algo-3}. Namely, the implementation of the $\mathcal{Q}, \mathcal{V}$ data structures, the computation of the update operation, and the computation of the Nash equilibria. We close this section by briefly discussing a parallel variant of Algorithms \ref{algo:algo-2}, \ref{algo:algo-3}.

Before addressing the details mentioned above, we make some remarks on the architecture of our program. The Algorithms \ref{algo:algo-1}, \ref{algo:algo-2}, \ref{algo:algo-3} are implemented in \textsc{Python}. On the other hand, the update operation and the computation of Nash equilibria form \textsc{Python} subprocesses that communicate with the main program via pipes. This allows for modularity and better maintainability of our code. The subprocess computing the update operation is a logic program, written in the Answer Set Programming language \textsc{Clingo}\footnote{\url{https://potassco.org/clingo/}}. Further, we use the \textsc{Gambit}~\cite{GAMBIT} package to compute Nash equilibria in $\vert N \vert$-player normal-form games.

\paragraph{The $\mathcal{Q}, \mathcal{V}$ data structures}

The implementation of the $\mathcal{Q}, \mathcal{V}$ data structures significantly influences both the behaviour and the efficiency of Algorithm \ref{algo:algo-2}. The set $\mathcal{Q}$ essentially denotes a dynamic linear structure of the sets of position vectors to be processed by the algorithm. The implementation of this structure impacts the way the algorithm traverses the graph of misinformation games. Implementing $\mathcal{Q}$ as a \emph{first in-first out} queue will induce a breadth-first search strategy, whereas implementing $\mathcal{Q}$ as a \emph{first in-last out} stack will induce a depth-first search strategy. Both choices are adequate since they offer constant-time insertions and extractions.

Perhaps the most significant concern for the implementation of $\mathcal{V}$ will be to support efficient membership queries (see Algorithm \ref{algo:algo-2}, line \ref{algo-step: append X}). Therefore, a suitable choice would be a \emph{dictionary}, with $O(\log \vert \mathcal{V} \vert)$ membership queries, or a hash table, with \emph{randomized} constant membership queries.

Taking into account the above remarks, we implemented $\mathcal{Q}$ as a simple queue, and $\mathcal{V}$ as a dictionary because it supports efficient worst-case membership queries.

\paragraph{Implementing the update operation}
\label{par:answer-set-programming}

We implemented the update operation of Definition \ref{def:update_operation} using the Answer Set Programming language \textsc{Clingo}. As the update operation captures the main logic of the program, it is suitable to encode the axioms of Definition \ref{def:update_operation} using a Logic Programming language. A notable feature of our algorithm is its ability, via the \textsc{Clingo} program, to ascertain whether a child misinformation game diverges from its parent. Therefore, the \emph{equality check} of lines \ref{algo-step:semantic-equivalence-1}--\ref{algo-step:semantic-equivalence-2} in Algorithm \ref{algo:algo-2} is realized within the logic program.

In order to encode Definition \ref{def:update_operation} as a logic program, some technicalities are involved. The payoff function is encoded using the predicate $\mathtt{u(G, P, SP, U)}$ of arity four. It signifies the payoff value ($\mathtt{U}$) awarded to player $\mathtt{P}$ in the context of player $\mathtt{G}$'s normal-form game when the strategy profile $\mathtt{SP}$ is played. Note that \textsc{Clingo} does not support lists natively\footnote{Unlike other Logic Programming languages, e.g., Prolog.}. Therefore, we use symbolic functions to encode finite lists. For example, we encode the strategy profile $(1, 2, 3)$, with $\mathtt{sp}(1, \mathtt{sp}(2, \mathtt{sp}(3, \mathtt{nul})))$, where $\mathtt{sp}/2$ is a symbolic function with two arguments and $\mathtt{nul}/0$ is a constant.

\begin{table}
\begin{lstlisting}[
                    backgroundcolor = \color{teal!10},
                    breaklines=true,
                    showstringspaces=false,
                    basicstyle=\footnotesize\ttfamily,
                    language=Prolog,
                    caption=The Update Operation's Logic program (Pseudocode).,
                    label=list:impl-1
                ]
  % the new payoff function if pos(SP).
v(G, P, SP, V) :- pos(SP), u(0, P, SP, U), V = U.

 % the new payoff function if not pos(SP).

v(G, P, SP, V) :- not pos(SP), u(G, P, SP, U), V = U.

 % check if child-mG != parent-mG

changed :- v(G, P, SP, V), u(G, P, SP, U), V != U.
unchanged :- not changed.
\end{lstlisting}
\end{table}

In Listing \ref{list:impl-1} we present the logic program for applying the update operation of Definition \ref{def:update_operation}. The logic program takes as input a list of payoff predicates $\mathtt{u}/4$ and a single predicate $\mathtt{pos(SP)}$ describing a position vector as a finite list. Then, it outputs (proves) a list of updated payoff predicates $\mathtt{v}/4$, which follow the syntax of $\mathtt{u}/4$. For each strategy profile $\mathtt{SP}$ we check whether the predicate $\mathtt{pos(SP)}$ holds. If it does, we force the new payoff entry $\mathtt{v}/4$ of this strategy profile to equal the \emph{actual} game $G^0$. If there is an instance where $\mathtt{v} \neq \mathtt{u}$, then the logic program proves the predicate $\mathtt{changed}/0$. Otherwise, the predicate $\mathtt{unchanged}/0$ holds. This way the \texttt{Python} program can tell whether $mG \neq mG^\prime$, where $mG^\prime$ is a child of $mG$.

\paragraph{Computation of Nash equilibria}
\label{par:nash-equilibria}

To compute the Nash equilibria of a normal-form game we utilized the \textsc{Gambit} package. In particular, we used the \texttt{gambit-pol} command which implements the Support Enumeration method for computing the Nash equilibria in ${\vert N \vert}$-player normal-form games \cite{Algorithmic_Game_Theory_book}.

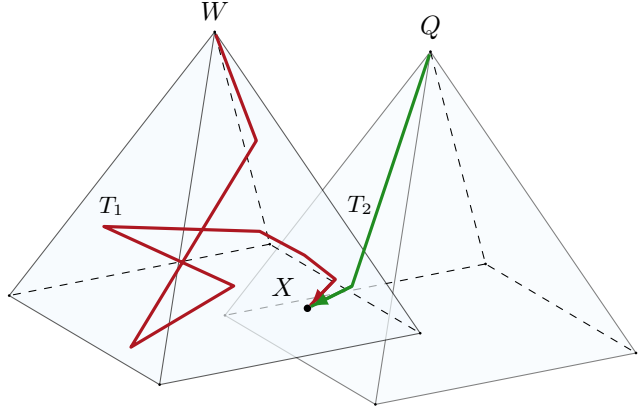
\begin{figure}[t]
    \centering
    \tdplotsetmaincoords{70}{150}
    \begin{tikzpicture}[tdplot_main_coords,line cap=butt,line join=round,c/.style={circle,fill,inner sep=0pt},
        declare function={a=4;h=4;},mycirc/.style={circle,fill=black, minimum size=0.1pt}]
        \path
        (0,0,0) coordinate (A)
        (a,0,0) coordinate (B)
        (a,a,0) coordinate (C)
        (0,a,0) coordinate (D)      
    (a/2,a/2,h)  coordinate (S);
\draw[fill=cyan!4,opacity=.5] (S) -- (D) -- (C) -- (B) -- cycle (S) -- (C);
\draw[dashed] (S) -- (A) --(D) (A) -- (B);
        \path foreach \p/\g in {A/-90,B/-90,C/-90,D/-90,S/90}{(\p)node[c]{}+(\g:2.5mm) node{}};
    \node[scale=0.3,label=above:{$Q$}] (int) at (S) {};

    \node[draw=none, fill=none,scale=0.3,label=below:{}] (sol1) at (.65,2.25,.4) {};

    \node[draw=none,fill=none] (s1) at (.3,2.35,.6) {};

   \node[draw=none,fill=none] (q1) at (1,2.85,1.4) {};

   \node[draw=none,fill=none] (w1) at (.4,2.75,.4) {};
   
      \path
        (2.1,-2.1,0) coordinate (A1)
        (a+2.1,-2.1,0) coordinate (B1)
        (a+2.1,a-2.1,0) coordinate (C1)
        (2.1,a-2.1,0) coordinate (D1)      
        (a/2+2.1,a/2-2.1,h)  coordinate (S1);
     \draw[fill=cyan!4,opacity=.65] (S1) -- (D1) -- (C1) -- (B1) -- cycle (S1) -- (C1);
    \draw[dashed] (S1) -- (A1) --(D1) (A1) -- (B1);
        \path foreach \p/\g in {A1/-90,B1/-90,C1/-90,D1/-90,S1/90}{(\p)node[c]{}+(\g:2.5mm) node{}};
    \node[scale=0.3,label=above:{$W$}] (int1) at (S1) {};    

    \node[] (inter) at (1,2.075,1.2) {};
    \node[mycirc,scale=0.3,label=above left:{$X$}] (sol) at (3.6,1.5,.5) {};
    
    \draw[upsdellred,very thick] (int1) -- (4.5,1.7,3.1) -- (6.8,2.35,.8) -- (5.8,3.35,1.8) -- (7.1,2.15,2.5) -- (5.4,3.35,2.5) -- (4.8,3.5,2.1) -- (4.5,3.8,1.8) -- (sol)  ;
    \draw[-latex,upsdellred,very thick] (4.5,3.8,1.8) to (sol);
    
    \node[draw=none,fill=none] (w1) at (6.75,1.75,2.6) {\small $T_1$};
    
    \draw[forestgreen(web),very thick] (int) -- (4.1,3.55,1.55) -- (sol)  ;
    \draw[-latex,forestgreen(web),very thick] (4.1,3.55,1.55) to (sol);
    
    \node[draw=none,fill=none] (w2) at (3.8,3.25,2.55) {\small $T_2$};
    
    \end{tikzpicture}
    \caption{Two threads $T_1, T_2$ simultaneously discovering the same node $X$. $T_1$ begins from $W$, while $T_2$ starts from $Q$.}
    \label{fig:parallelism}
\end{figure}

\paragraph{Parallelism}

In order to fully utilize the computational resources of a modern CPU, a parallel variant of our ALL-SMEs algorithm was developed. In details, Algorithm \ref{algo:algo-1} does not need any parallelization, whereas the parallelisation of Algorithm \ref{algo:algo-3} is straightforward, since Algorithm \ref{algo:algo-3} simply iterates through the terminal set $\mathcal{T}$. On the other hand, developing a parallel version of the traversal algorithm, i.e., Algorithm \ref{algo:algo-2}, involves some technicalities that need to be addressed (see Algorithm \ref{algo:algo-4}); these are described below.

Our main objective is to parallelise the expensive computations of Algorithm \ref{algo:algo-2}, i.e., the \emph{computation of the nmes}, which presupposes the computation of Nash equilibria, and the update operation. Recall that these computations are performed by the two subsystems, namely the package \textsc{Gambit} and a logic program in \textsc{Clingo}, respectively. In this direction, we assume $k$ threads. Each thread will execute a variation of Algorithm \ref{algo:algo-2}, thus \emph{independently} traversing the adaptation graph. Each thread will communicate with the others through the \emph{shared} data structures $\widetilde{\mathcal{Q}}, \widetilde{\mathcal{V}}, \widetilde{\mathcal{T}}$, $\theta[\cdot]$. Observe that for each new misinformation game $mG$ that appears in the adaptation graph we need to compute an $nme$, and compute the result of the update operation. Since each thread independently traverses the adaptation graph, it is possible to rediscover a node that was previously discovered by another thread (see Figure \ref{fig:parallelism}). This fact entails two possible inefficiencies of parallel traversal, namely redundant computations and long waiting times. In other words, if a new misinformation game is discovered (almost) simultaneously by two threads, i.e., before the one notifies the other about its discovery, it is possible that the same computations will be performed in both threads, thus nullifying the benefits of parallelization. On the other hand, if one thread waits the other to complete its computations before continuing its traversal, the benefit of parallelization is also nullified.

Our solution to the above problems is described in Algorithm \ref{algo:algo-4}. Each of the shared data structures $\widetilde{\mathcal{Q}}, \widetilde{\mathcal{V}}, \widetilde{\mathcal{T}}$, $\theta[\cdot]$ is protected by a lock. In order to gain access to the data structure the thread needs to successfully acquire the lock. Note that the independent traversal begins after line \ref{step:algo-4-begin-threads} and ends in line \ref{step:algo-4-join-threads}, where the threads are joined. We focus our attention in lines \ref{step:algo-4-lock-V}--\ref{step:algo-4-unlock-V}. Observe that the expensive computations we want to be performed in parallel are carried out in lines \ref{step:parallel-algo-4-semantic-loop}, \ref{step:parallel-algo-4-compute-nme}. Assume that two threads $T_1, T_2$ (almost) simultaneously discover $X$, with $T_1$ discovering it slightly earlier. Since $T_1$ arrives slightly earlier at $X$, it will achieve to acquire the lock for $\widetilde{\mathcal{V}}$. If $X \notin \widetilde{\mathcal{V}}$, $T_1$ will mark $X$ as visited (see line \ref{step:algo-4-mark-visited}) and release the lock. We claim that the expensive computations in lines \ref{step:parallel-algo-4-semantic-loop} and \ref{step:parallel-algo-4-compute-nme} will be executed only by $T_1$. Additionally, $T_2$ will only wait for the lines \ref{step:algo-4-lock-V}--\ref{step:algo-4-unlock-V}. These lines only involve a membership query, and (possibly) an insertion to the $\widetilde{\mathcal{V}}$ data structure. In order to see that, we follow the execution of $T_2$. Since $T_2$ acquires the lock for $\widetilde{\mathcal{V}}$ after $T_1$, the membership query in line \ref{step:algo-4-membership-V} will be positive. Thus, $T_1$ will omit all computations in lines \ref{step:algo-4-mark-visited} to \ref{step:algo-4-end-for}, and \emph{continue} to process the next position vector.

\begin{algorithm}[!htbp]
\caption{$\texttt{Parallel-TraverseAdaptationGraph}(mG, k)$}
\label{algo:algo-4}
\small{
\begin{algorithmic}[1]
        \Require A root misinformation game $mG$ and the number of threads $k$.
        \Statex
        \State $\widetilde{\mathcal{T}} \gets \emptyset$,
        $\widetilde{\mathcal{Q}} \gets \{ \emptyset \}$,
        $\widetilde{\mathcal{V}} \gets \{\emptyset\}$
        \State $\theta[\emptyset] \gets mG$
        \State compute $NME(mG)$
        \State \textbf{threads--begin}($k$) \label{step:algo-4-begin-threads}
        \While{$\widetilde{\mathcal{Q}} \neq \emptyset$}
            \State \textbf{lock}($\widetilde{\mathcal{Q}}$)
            \State Pick some $W \in \widetilde{\mathcal{Q}}$
            \State $\widetilde{\mathcal{Q}} \gets \widetilde{\mathcal{Q}} \setminus \{W\}$
            \State \textbf{unlock}($\widetilde{\mathcal{Q}}$)
            \If{$\exists \vec{v} \in \chi(NME(\theta[W])) \cap W$}
                \State \textbf{lock}($\widetilde{\mathcal{T}}$)
                \State $\widetilde{\mathcal{T}} \gets \widetilde{\mathcal{T}} \cup \{W\}$ \Comment{Self-loop detected}
                \State \textbf{unlock}($\widetilde{\mathcal{T}}$)
            \EndIf
            \ForEach{position vector $\vec{v} \in \chi(NME(\theta[W])) \setminus W$}
                \State $X \gets W \cup \{\vec{v}\}$
                \State \textbf{lock}($\widetilde{\mathcal{V}}$) \label{step:algo-4-lock-V}
                \If{$X \in \widetilde{\mathcal{V}}$} \Comment{Check if $X$ unvisited} \label{step:algo-4-membership-V}
                    \State \textbf{unlock}($\widetilde{\mathcal{V}}$)
                    \State \textbf{continue}
                \EndIf
                \State $\widetilde{\mathcal{V}} \gets \widetilde{\mathcal{V}} \cup \{X\}$
                \label{step:algo-4-mark-visited}
                \State \textbf{unlock}($\mathcal{V}$) \label{step:algo-4-unlock-V}
                \State $\#$ Compute the update operation resulting in $mG_X$.
                \If{$mG_{X} = mG_W$} \label{step:parallel-algo-4-semantic-loop}
                    \State \textbf{lock}($\theta$)
                    \State $\theta[X] \gets \theta[W]$
                    \Comment{New set-based representation for}
                    \Statex \Comment{same misinformation games}
                    \State \textbf{unlock}($\theta$)
                    \State \textbf{lock}($\widetilde{\mathcal{T}}$)
                    \State $\widetilde{\mathcal{T}} \gets \widetilde{\mathcal{T}} \cup \{mG_X\}$ \Comment{Self-loop detected}
                    \State \textbf{unlock}($\widetilde{\mathcal{T}}$)
                \Else
                    \State \textbf{lock}($\theta$)
                    \State $\theta[X] \gets mG_X$
                    \Comment{New misinformation game}
                    \State \textbf{unlock}($\theta$)
                    \State compute $NME(mG_X)$ \label{step:parallel-algo-4-compute-nme}
                    \State \textbf{lock}($\widetilde{\mathcal{Q}}$)
                    \State $\widetilde{\mathcal{Q}} \gets \widetilde{\mathcal{Q}} \cup \{X\}$
                    \State \textbf{unlock}($\widetilde{\mathcal{Q}}$)
                \EndIf
            \EndFor \label{step:algo-4-end-for}
        \EndWhile
        \State \textbf{threads--join}() \label{step:algo-4-join-threads}
        \State \Return $\widetilde{\mathcal{T}}, \theta[\cdot]$
\end{algorithmic}
}
\end{algorithm}

\subsection{Experiments}
\label{subsec:experiments}

In this section, we detail our experimental findings on the Adaptation Procedure in misinformation games, exploring its performance across a spectrum of multiplayer misinformation games. The methodology employed involved Monte Carlo simulations, following a structured approach. 
In particular, we first considered various configurations regarding the number of players and pure strategies per player; these are shown in the first column of Tables \ref{tbl:efficiency monte-carlo} and \ref{tbl:characteristics monte-carlo} (under ``Setting''). For instance, the notation $3 \times 2 \times 2$ denotes a misinformation game with $3$ players, where the first player has $3$ pure strategies and the other two have $2$ pure strategies, across all games in the misinformation game. Then, we constructed the payoff matrices of the misinformation game by using a random process to generate the payoff matrices of the actual game and of each subjective game. In particular, we populated the payoff matrices with integers drawn randomly from the interval $[-10, 10]$. 
We repeated this process $100$ times and executed the Adaptation Procedure for each generated misinformation game.

For each randomly generated misinformation game (and subsequent run of the Adaptation Procedure), we collected various data regarding the efficiency and characteristics of the process.
We present those results below in two separate tables, one concerning the time efficiency of the implementation (Table~\ref{tbl:efficiency monte-carlo}), and one concerning the characteristics of the adaptation graph (Table~\ref{tbl:characteristics monte-carlo}). 

All experiments were conducted on a macOS Ventura version 13.5.2 (22G91) machine, utilizing \textsc{Python} 3.11.4, \textsc{Gambit} 15.1.1, and \textsc{Clingo} 5.4.0. Notably, all experiments were executed using a multi-threaded implementation with eight hardware threads\footnote{The implementation of the Adaptation Procedure is in \url{https://github.com/merkouris148/adaptation-procedure-misinformation-games}.}.

\begin{table}[htp] 
\centering
 \caption[Efficiency of the Adaptation Procedure for various realizations.]{Efficiency of the Adaptation Procedure for various realizations.
 }\label{tbl:efficiency monte-carlo} 
    \begin{tblr}{
      colspec={ccc},
      row{even}={bg=gray!25},  
      row{1}={bg=white},
    }
      \hline
      Setting & Total time (sec) & CPU time (sec)   \\
      \hline\hline
      $2 \times 2$ & [0.11, 0.18] & [0.04, 0.1]  \\
      \hline
      $3 \times 2$ & [0.23, 0.29] & [0.17, 0.22]  \\
      \hline
      $3 \times 3$ & [0.17, 0.2] & [0.11, 0.15]  \\
      \hline
      $4 \times 3$ & [1.1, 1.29] & [1.06, 1.28]   \\
      \hline
      $4 \times 4$ & [1424.42, 1476.48] & [163.59, 188.12] \\
      \hline
      $2 \times 2 \times 2$ & [1.83, 2.19] & [1.66, 1.91] \\
      \hline
      $3 \times 2 \times 2$ & [22.76, 30.33] & [18.94, 22.45] \\
      \hline
      $2 \times 2 \times 2 \times 2$ & [3793.66, 4092.59] & [2172.43, 2386.21] \\
      \hline
    \end{tblr} 
\end{table}

For assessing the efficiency of the process, we used the metrics of Total run time (sec) and CPU run time (sec), as shown in Table~\ref{tbl:efficiency monte-carlo}. CPU time refers to the time consumed only by the \textsc{Python} program, whereas Total time refers to the total elapsed time, including the time consumed by the subprocesses, \textsc{Clingo} and \textsc{Gambit}. For both attributes we provide the interval between the minimum and maximum time as provided by the simulation, in the form $[min,max]$. 

We observe that when the misinformation game has two players with three pure strategies each ($3 \times 3$) the calls to \textsc{Python} constitute the main bottleneck of our pipeline, as shown by the fact that CPU time is close to Total time. On the other hand, in $2 \times 2$ and $4 \times 4$ settings the elapsed time is dominated by the subprocesses (i.e., \textsc{Clingo} and \textsc{Gambit}), and the CPU time consumes only $\sim \! \! 20\%$ and $\sim \! \! 12\%$ respectively of the elapsed time on average. An explanation here is that for the $2 \times 2$ case the possible misinformation games spawned from the Adaptation Procedure are few. On the contrary, in the $4 \times 4$ case, the adaptation graph may have more extensive branches, so \textsc{Clingo} and \textsc{Gambit} take their toll.

Moreover, as we increase the number of players ($2 \times 2 \times 2 \times 2$) we witness that both Total time and CPU time increase rapidly. To comprehend the reason behind this performance degradation, it's crucial to consider the exponential growth of the search space with the increase in available players and/or strategies.
Recall that the computational complexity for computing all Nash equilibria growths exponentially as proved in Theorem \ref{theo:complexity-corectness}.

\begin{table}[htp] 
\centering
 \caption[Characteristics of the adaptation graph for various realizations.]{Characteristics of the adaptation graph for various realizations.}\label{tbl:characteristics monte-carlo} 
    \begin{tblr}{
      colspec={cccccc},
      row{even}={bg=gray!25},
      row{1}={bg=white},
    }
      \hline
      Setting & $\#$nodes & unique $mG$s & $\#$leaves & $\# sme$s  & SP \\
      \hline\hline
      $2 \times 2$ & 13 & 7 & 6 &  3  & $2^4$\\
      \hline
      $3 \times 2$ & 50 & 17 & 18 &  3 & $2^6$ \\
      \hline
      $3 \times 3$ & 49 & 16 & 20 &  10 & $2^9$ \\
      \hline
      $4 \times 3$ & 451 & 132 & 205 &  9 & $2^{12}$ \\
      \hline
      $4 \times 4$ & 54122 & 10653 & 32779 & 377 & $2^{16}$ \\
      \hline
      $2 \times 2 \times 2$ & 832 & 148 & 489 & 373 & $2^8$ \\
      \hline
      $3 \times 2 \times 2$ & 8226 & 1287 & 4655 &  634 & $2^{12}$ \\
      \hline
      $2 \times 2 \times 2 \times 2$ & 632302 & 58617 & 24346 &  102338 & $2^{16}$ \\
      \hline
    \end{tblr} 
\end{table}

Regarding the characteristics of the Adaptation Procedure, we analyzed various aspects, as shown in Table \ref{tbl:characteristics monte-carlo}. Below, we explain the meaning of each of these quantities, and analyse the respective results.
Note that the number shown for each quantity in Table \ref{tbl:characteristics monte-carlo} is the average over all $100$ runs (rounded to the closest integer).

To study the characteristics of the adaptation graph generated by our algorithm, we also implemented the recursive method \eqref{eq:naive rec}. Using this implementation, we compared the number of nodes (i.e., misinformation games) generated by the implementation (as shown in column ``\#nodes'') against the number of nodes generated by the implementation of the set-based representation \eqref{eq:queue-rec} (shown in column ``unique $mG$s''). From this comparison it is obvious that the number of duplicate computations generated by \eqref{eq:naive rec} is huge. For example, in the $2 \times 2$ setting, the \eqref{eq:naive rec} computed $13$ misinformation games, while the \eqref{eq:queue-rec} one only $7$, meaning that $6$ (almost half) of the games computed by the \eqref{eq:naive rec} were actually duplicates, and thus caused redundant computations. Analogous observations can be made in the other settings (rows of the table).

Further, it is interesting to compare the numbers appearing in columns ``\#leaves'' and ``\#$sme$s'' in the table. 
The former (``\#leaves'') corresponds to the number of distinct sets of position vectors that lead to an $mG$ that belongs to the terminal set (Definition \ref{def:terminal-set}), or, in other words, it denotes the size of the terminal set (i.e., the number of distinct $mG$s that belong to the terminal set). 
The latter (``\#$sme$s'') corresponds to the actual number of $sme$s that this terminal set generates (according to Definition \ref{def:sme}).
Note that all $sme$s are generated by $mG$s in the terminal set. However, not every $mG$ in the terminal set is guaranteed to yield an $sme$; some may lead to none, while others may lead to one or more $sme$s. The fact that the numbers in the ``\#leaves'' column are significantly larger than the numbers in the ``\#$sme$s'' column, signifies that the majority of the $mG$s in the terminal set produced no $sme$. Upon reviewing Definitions \ref{def:sme} and \ref{def:terminal-set}, we conclude that there are a lot of $mG$s that ``self-replicate'' via adaptation (in the sense of Definition \ref{def:terminal-set} on the terminal set), yet they do not achieve ``stability'' (for their $nme$s to produce an $sme$, as described in Definition \ref{def:sme}).

Another interesting observation stems by comparing the numbers appearing in Table~\ref{tbl:characteristics monte-carlo} across different rows.
For example, one might anticipate that the settings $4 \times 3$ and $4 \times 4$ would yield comparable outcomes, but, in fact, the adaptation graph for $4\times 3$ generates only $451$ nodes ($132$ distinct), whereas $4 \times 4$ results in $54,122$ nodes ($10,653$ distinct). This counter-intuitive result is explained by looking at the last column of the table (``SP''), which denotes the size of the strategy space of the respective setting, i.e., the total number of sets of position vectors that the setting can generate; this in turn corresponds to the maximum possible number of distinct nodes (misinformation games) that the algorithm implementing the Adaptation Procedure could generate (i.e., the maximum possible value of column ``unique $mG$s''). 

Considering the strategy space (``SP'') column values, the $4 \times 3$ scenario indicates that the Adaptation Procedure could generate up to $2^{12}$ unique misinformed strategy profiles. In contrast, the $4 \times 4$ setting could yield $2^{16}$ distinct misinformed strategy profiles. It's noteworthy that the ``SP'' value by itself does not suffice to fully explain the size of the adaptation graph and the related quantities in all settings. It appears that incorporating more players contributes more to this aspect, compared to the addition of strategies, even when such additions (players or strategies) have a comparable effect on the ``SP'' value.

\section{Related work}\label{sec:related work}

Here we present the works that are close to the setting where agents with subjective views of the interaction interact in a turn-based procedure. As stated in \cite{VFBF} the misinformation games are close to the Hypergames (HG), the games with (un)awareness (GwU), and the Selten games (see \cite{Copic2006AwarenessAA,DBLP:journals/mss/Schipper14,DBLP:journals/anor/Sasaki17,Schipper2017SelfconfirmingGU,Feinberg2020,Harsanyi:1967:GII:3218759.3218761}). The first two of these classes are under the hood of games with misperception, often abbreviated as MP \cite{RaiffanLuce}, while the third is the a posteriori lottery model and defined in Harsanyi's seminal work \cite{Harsanyi:1967:GII:3218759.3218761}. However, since the setting in \cite{VFBF} focused only on one-shot interactions, the comparison with MPs (HGs and GwU) was also affected by this limitation.

However, MPs do not address only one-shot interactions but have also been proposed as a novel framework to analyze iterative interactions. Towards this direction, there is a significant stream of works that aims in the analysis of players' beliefs in MP (e.g., \cite{ArrowGreen,Esponda2016EquilibriumIM,Esponda2014BerkNashEA,Esponda2019AsymptoticBO,Spiegler2016,DBLP:conf/nips/RayKMD08,Heidhues2018ConvergenceIM,RePEc:ecm:emetrp:v:56:y:1988:i:5:p:1045-64}), where players update their beliefs using probabilistic formulas and techniques. Moreover, the agents have common knowledge about their views, as they update their knowledge in order to converge as close as possible to the outcome that they should have if they did not experience subjectivity. Contrary to these works, in our work the agents update their beliefs in a one-way manner, meaning that the agents update their subjective views plugging into the feedback that they receive from the environment.

In \cite{ArrowGreen}, a learning framework was plugged in with an MP, where the players are completely ignorant regarding opponents' decisions and any new information is automatically integrated. In \cite{lerer2019improving} the authors describe approaches for MP, where the objective views emerged from partially observable cooperative games, but they assume perfect knowledge of other players’ policies. Here, we drop the constraint of the limited observability and the cooperative setting.

Two other classes of games that are closely related to the misinformation games are the Selten games \cite{Harsanyi:1967:GII:3218759.3218761} and the Global games \cite{Globalgames}. In the first case, the agents learn only their own payoff matrix. So the agents have uncorrelated observations, but this is common knowledge. In the second case, the agents receive private and fuzzy observations of the game, where the latter is influenced by small random fluctuations. To cope with incompleteness, the agents take into account a whole family of a priori possible games; this is common knowledge. In other words, the agents know that they do not know, as opposed to the misinformation games where the agents do not have common knowledge and take into account only the agglomeration of their subjective views and the information that is broadcasted from the environment. 

Further, in \cite{JORDAN199160} a class of Bayesian processes for iterated normal form games is studied, where each player knows her own payoff function, but is uncertain about the opponents' payoffs. Another learning mechanism was introduced in  \cite{Esponda2014BerkNashEA} for games with misperception. In the same manner, in \cite{DBLP:conf/nips/RayKMD08}, a behavioral-based model is provided in order to model the ignorance of each player about adversaries. As opposed to the stream of works that rely on probabilistic techniques, in this work, we take no probabilistic considerations regarding the beliefs of the players. Here we focus on the effect of new information in the misinformed views of the players, i.e., the players ``do not know that they do not know'', as opposed to probabilistic approaches, where the players ``know that they do not know''. This leads to a conceptual and methodological difference between the Nash-Berk equilibrium concept, developed in \cite{Esponda2014BerkNashEA,Fudenberg2016ActiveLW}, and the stable misinformed equilibrium.

Now we steer our discussion to the solution concepts. In HGs, the outcome is not influenced by the actual/exact game. On the other hand, in GwUs the solution concept is influenced by the elimination of beliefs (rationalizability) and does not require correct conception and information, or mutual knowledge. Here, we both agglomerate the actual/exact game and we do not eliminate any belief towards the outcome. In MPs, the direction is towards the updates that restrict the subjectivity of the agents, while in this work we do not enforce such kind of condition.

Another equilibrium concept closely related to the solution concept in this work is that of self-confirming equilibrium \cite{10.2307/2951716,DEKEL1999165}, which captures the idea of ``stable'' joint decisions in MP. Although it shares the same intuition with the stable misinformed equilibrium, which is that players' choices stabilize when the choices of their opponents confirm their view, these two concepts have two differences. Firstly, the self-confirming equilibrium is applied in extensive-form games, while the stable misinformed equilibrium is in strategic-form games. Second, the former concept requires that the beliefs of the participants are correct along the path of plays, whereas we do not make such an assumption.

The limited behaviour of players while the values of their payoffs are non-stationary, has drawn significant attention. Authors in \cite{Romanyuk2017ActiveLW} provide a characterization of asymptotic behaviour, based on a continuous time model, where the player thinks that there are two possible payoff functions, yet none of them is true. Moreover, in \cite{Heidhues2018ConvergenceIM} authors establish convergence of beliefs, and actions, in a misspecified model with endogenous actions. 
Also, in \cite{Fudenberg2016ActiveLW} a complete characterization of the limit behavior of actions in cases with misspecified Bayesian players is provided. In our case, players do not have second thoughts regarding the received information.

In a weaker sense, a posteriori lotteries contain any interaction where agents depart from correct/complete information. In this respect, another concept where players have incomplete knowledge is that of Bayesian games \cite{Harsanyi:1967:GII:3218759.3218761}. It has attracted considerable attention (\cite{Szekely,Zamir2009,Gao2013,DBLP:journals/mst/GairingMT08}, etc.), where a key assumption is that of common priors. Nevertheless, Bayesian games are defined in \cite{Harsanyi:1967:GII:3218759.3218761} as a priori lotteries. 

Furthermore, a conceptual difference with misinformation games is that, in Bayesian games, although agents are unsure as to their actual payoff, they are well aware of that, and they do their best out of the uncertainty that they have, e.g. Figure \ref{fig:Bayesian games}. On the contrary, in misinformation games, the agents play according to their subjective game definition, without considering mitigation measures, e.g., Figure \ref{fig:mG graph}. 
In other words, in Bayesian games players ``know that they do not know'', whereas in misinformation games players ``do not know that they do not know''. Thus, the scenario depicted in Figure \ref{fig:mG graph} cannot be captured by Bayesian games, as the players cannot distinguish the actual situation from the one provided to them. 

\begin{table}[htp] 
\centering
 \caption[Basic families of methodologies regarding either one-round or multiple-round interactions with either incomplete or incorrect information.]{Basic families of methodologies regarding either one-round or multiple-round interactions with either incomplete or incorrect information.}\label{<table-label>} 
    \begin{tblr}{
      colspec={cccc},
      row{even}={bg=gray!25},  
      row{1}={bg=white},
      row{9}={bg=yellow!35},
    }
      \hline
      Class & Game setting & \makecell{iterative \\ process} & outcome \\
      \hline\hline
      \makecell{HG \\ \cite{Sasaki2012},\cite{Gharesifard11}} & \makecell{incorrect information, \\ erroneously aware, \\ they don’t know that \\ they don’t know, \\ no exact/actual game} & \textcolor{forestgreen(web)}{\cmark} & \makecell{ hyper Nash \\ equilibrium} \\
      \makecell{GwU \\ \cite{DBLP:journals/mss/Schipper14}} & \makecell{lack of conception, \\ not lack of information} & \textcolor{red}{\xmark} & \makecell{notions of \\ rationalizability} \\
      \makecell{MP \\ \cite{Esponda2014BerkNashEA}} & \makecell{players know that they \\ don't know the game setting} & \textcolor{forestgreen(web)}{\cmark} & \makecell{variants \\ of \\ equilibria} \\
      \makecell{Selten \\ games \\ \cite{Harsanyi:1967:GII:3218759.3218761}} & \makecell{players know the game setting, \\ know their own attribute vector \\ but don't know the exact/actual game, \\ know that they don’t know} & \textcolor{red}{\xmark} & \makecell{equilibrium \\ selection} \\
      \makecell{Global \\ games \\ \cite{Globalgames}} & \makecell{players know the game setting, \\ don't know the exact/actual game, \\ know that they don’t know}  & \textcolor{forestgreen(web)}{\cmark} & \makecell{equilibrium \\ selection} \\
      \makecell{Bayesian \\ games}  & \makecell{players know the game setting, \\ don’t know the exact/actual game, \\  know that they don’t know } & \textcolor{forestgreen(web)}{\cmark} & \makecell{Bayes-Nash \\ equilibrium} \\
      \makecell{Misinformation \\ games \cite{VFBF},\\ \small{\cite{VFFB}}} & \makecell{incorrect information, \\ erroneously aware, \\ they don’t know that \\ they don’t know} & \textcolor{red}{\xmark} & \makecell{natural \\ misinformed \\ equilibrium} \\
      \hline 
    \makecell{Misinformation \\ games \textbf{plus} \\ Adaptation \\ Procedure \cite{PVF}} & \makecell{incorrect information, \\ erroneously aware, \\ they don’t know that \\ they don’t know \\ \textbf{the information they learn} \\ \textbf{becomes commonly known}} & \textcolor{forestgreen(web)}{\cmark} & \makecell{stable \\ misinformed \\ equilibrium} \\
      \hline
    \end{tblr} 
\end{table}

\section{Conclusions}\label{sec:conclusions}

This work explores the pervasive nature of misinformation in multi-player interactions. By employing the framework of \emph{misinformation games}, we gain insights into various real-world dynamics in scenarios where the players' information on the interaction are erroneous, but the players have no clue that their specifications are false.

In this setting, we define an iterative process, the \emph{Adaptation Procedure}, which enables players to refine their strategies through continuous interaction and feedback. This process is rooted in the idea that players will adjust their strategies based on the payoffs that are publicly announced in each iteration, which may be different from what they know, thereby progressively reducing misinformation.

The Adaptation Procedure models the evolution of players' decisions as they assimilate new information, culminating in a stable misinformed equilibrium, which is the strategy profile adopted upon process stabilization. We establish that this procedure invariably concludes, guaranteeing the existence of such an equilibrium and study (and prove) various related properties. 
 
Lastly, we devise algorithmic tools to analyze this evolution, revealing the \textbf{PPAD}-complete complexity of the stable misinformed equilibrium ($sme$), and provide some experiments demonstrating the characteristics and computational properties of the Adaptation Procedure under various randomly-generated settings.


\bibliographystyle{plain}
\bibliography{main.bib}


\clearpage

\appendix

\section{Proofs of results}\label{appendix:omitted proofs}

\subsection{Normal-form misinformation games}

\begin{proof}[Proof of Proposition~{\upshape\ref{prop:inflate-cons-hi}}]
By equation \eqref{eq:payoff_function}, we have that:
\[
h_{i}(\sigma) = \sum\limits_{k \in S_1}\dots \sum\limits_{j\in S_{\vert N \vert}}P_i(k, \dots, j) \cdot \sigma_{1, k} \cdot \ldots \cdot \sigma_{\vert N \vert,j},
\]
Replacing the tuple $(k, \dots, j)$ with $s = (s_i) \in S$, we can rewrite this as follows
\[
h_{i}(\sigma) = \sum\limits_{s \in S}
P_i(s) \cdot \sigma_{1, s_1} \cdot \ldots \cdot \sigma_{\vert N \vert,s_{\vert N \vert}}
\]
Working analogously for $h_i'$, we get:
\[
h_{i}'(\sigma') = \sum\limits_{s' \in S'}
P_i'(s') \cdot \sigma_{1, s_1'}' \cdot \ldots \cdot \sigma_{\vert N \vert,s_{\vert N'\vert}'}'
\]
Now, let us write $s'$ as $(\hat s, \tilde s)$, where $\hat s$ represents the strategies in $s'$ for the players in $N$ and $\tilde s$ the strategies in $s'$ for the players in $N' \setminus N$. Let us also denote with $\tilde S$ the different possible values of the tuple $\tilde s$ (note that $\tilde S = \times_{i\in N' \setminus N} S_i'$).
Under this notation, we have that, for any $s'$, $P_i'(s') = P_i(\hat s)$, from the third bullet of Definition \ref{def:inflated-nfgame}, and thus, rearranging the sums and using associativity, the latter equation on $h_i'$ can be written as:
\[h_{i}'(\sigma') = \sum\limits_{\hat s \in S} P_i(\hat s) 
\cdot 
\left(
\sigma_{1, \hat s_1}' \cdot \ldots \cdot \sigma_{\vert N \vert,\hat s_{\vert N\vert}}' \right)
\cdot 
\left(
\sum\limits_{\tilde s \in \tilde S}
\sigma_{\vert N \vert + 1, \tilde s_{\vert N \vert + 1}}' \cdot \ldots \cdot \sigma_{\vert N' \vert,\tilde s_{\vert N'\vert}}' \right)
\]
The second big parenthesis sums to $1$. Indeed, using associativity:
\[
\sum\limits_{\tilde s \in \tilde S}
\sigma_{\vert N \vert + 1, \tilde s_{\vert N \vert + 1}}' \cdot \ldots \cdot \sigma_{\vert N' \vert,\tilde s_{\vert N'\vert}}' =
\prod_{i \in N'\setminus N}
\sum_{j \in S_i'}
\sigma_{ij}' = 
1 \cdot \ldots \cdot 1 = 1
\]
Since $\sigma \comp \sigma'$, it follows that $\sigma_{ij} = \sigma_{ij}'$ for all $i \in N$, $j \in S_i$, and thus, combining the above results, we get:
\[
h_{i}'(\sigma') = \sum\limits_{\hat s \in S} P_i(\hat s) 
\cdot 
\sigma_{1, \hat s_1} \cdot \ldots \cdot \sigma_{\vert N \vert,\hat s_{\vert N\vert}}
\]
From the latter equation, and the respective equation for $h_i$, it is clear that $h_i(\sigma) = h_i'(\sigma')$, which completes the proof.
\end{proof}

\begin{proof}[Proof of Proposition~{\upshape\ref{prop:infl-same-size-same-game}}]
Apparently, if $G=G'$ then $G \infl G'$ (by the definition), so let us focus on the reverse. Assume that $G \infl G'$. Then, since $N = N'$ and $S= S'$ by the hypothesis, take any $i \in N$ and $s=(s_i)_{j\in N} \in S$, $s'=(s_i')_{j\in N'} \in S'$ such that $s_j = s_j'$ for all $j\in N$. Then $s = s'$ (by the hypothesis that $N=N'$ and $S= S'$, and $P_i(s) = P_i'(s')$ (by the hypothesis that $G \infl G'$), and thus $P = P'$, i.e., $G = G'$.   
\end{proof}

\begin{proof}[Proof of Proposition~{\upshape\ref{prop:infl-all-NE}}]
By the fourth bullet of Definition \ref{def:inflated-nfgame}, there exists some strategy profile in $G'$, say $\hat \sigma$, such that $\hat \sigma \in NE(G')$ and $\sigma \comp \hat \sigma$. 
Now consider any other strategy profile $\tilde \sigma$ in $G'$, such that $\sigma \comp \tilde \sigma$. It suffices to show that $\tilde \sigma \in NE(G')$.
We observe, by Proposition \ref{prop:inflate-cons-hi}, that $h_i'(\hat \sigma) = h_i(\sigma) = h_i'(\tilde \sigma)$. Since $h_i'(\hat \sigma) = h_i'(\tilde \sigma)$, the result follows easily from the definition of Nash equilibrium.
\end{proof}

\begin{proof}[Proof of Proposition {\upshape\ref{prop:infl-partial order}}]
Any game $G$ is trivially an inflated version of itself, so reflexivity holds.\\
For antisymmetry, we observe that $G_1 \infl G_2$ and $G_2 \infl G_1$ implies that $G_1$ and $G_2$ have the same set of players and the same set of strategies per player. Antisymmetry now follows by applying Proposition \ref{prop:infl-same-size-same-game}.\\
For transitivity, let us denote, for $i=1,2,3$, $N^i, S^i, P^i$ the set of players, the set of strategies and the payoff matrix, respectively, of $G_i$. Let us also denote by $S_j^i$ the available strategies for player $j$ in game $G_i$.
Suppose that $G_1 \infl G_2$ and $G_2 \infl G_3$. We will show that $G_1 \infl G_3$.\\
Indeed, it is clear that $N^1 \subseteq N^2 \subseteq N^3$, so the first bullet of the definition holds.\\
Similarly, for any $i \in N^1$, $S_i^1 \subseteq S_i^2$. Also $i \in N^1$ implies $i \in N^2$ and thus $S_i^2 \subseteq S_i^3$. Thus $S_i^1 \subseteq S_i^3$ for all $i \in N$, so the second bullet holds as well.\\
For the third, pick any $i \in N^1$, and $s^1=(s_i^1)_{j\in N^1} \in S^1$, $s^3=(s_i^3)_{j\in N^3} \in S^3$ such that $s_j^1 = s_j^3$ for all $j\in N^1$.
Let us pick some tuple $s^2=(s_i^2)_{j\in N^2} \in S^2$ such that $s_j^2 = s_j^3$ for all $j\in N^2$. Now, by the fact that $G_1 \infl G_2$ and $G_2 \infl G_3$ and the previous results, we get that $P_i^1(s^1) = P_i^2(s^2)$ and $P_i^2(s^2) = P_i^3(s^3)$, and thus the third bullet holds as well.\\
For the fourth, suppose that $\sigma^1 \in NE(G_1)$. Then, since $G_1 \infl G_2$, there exists some $\sigma^2 \in NE(G_2)$ such that $\sigma^1 \comp \sigma^2$.
Moreover, since $G_2 \infl G_3$, there exists some $\sigma^3 \in NE(G_3)$ such that $\sigma^2 \comp \sigma^3$. Given that $N^1 \subseteq N^2 \subseteq N^3$ and $S^1 \subseteq S^2 \subseteq S^3$, it is easy to see that $\sigma^1 \comp \sigma^2$ and $\sigma^2 \comp \sigma^3$ imply that $\sigma^1 \comp \sigma^3$. Thus, there exists some $\sigma^3 \in NE(G_3)$ such that $\sigma^1 \comp \sigma^3$, which shows the fourth bullet.\\
For the fifth bullet, we proceed analogously.
\end{proof}

\begin{proof}[Proof of Proposition {\upshape\ref{prop:infl-algo-add-player}}]
Observe that since the payoffs of the new player are zero (line 3), then this player will be indifferent in any choice of the other players. Thus does not affect the outcome of the $G'$. So, we have $G \infl G'$.
\end{proof}

\begin{proof}[Proof of Proposition {\upshape\ref{prop:infl-algo-add-strategy}}]
In Algorithm~\ref{algo:addstrategy_inflate_game} we increase the pure strategies of player $i$ by adding the $j$ pure strategy. From lines 6-7 of the algorithm the new entries of the payoff matrices will have values smaller than any other value in the payoff matrix without the $j$ strategy. Thus, this strategy is dominated and does not affect the strategic behavior of the players. So, we have $G \infl G'$.
\end{proof}

\begin{proof}[Proof of Proposition {\upshape\ref{prop:infl-algo-full}}]
We observe that Algorithm \ref{algo:inflate_game} consists of repetitive calls to Algorithms \ref{algo:addplayer_inflate_game} and \ref{algo:addstrategy_inflate_game}.
The result now follows from Propositions \ref{prop:infl-algo-add-player}, \ref{prop:infl-algo-add-strategy}, and the transitivity of the $\infl$ relation (Proposition \ref{prop:infl-partial order}).
\end{proof}

\subsection{Adaptation Procedure}

\begin{proof}[Proof of Proposition~{\upshape\ref{prop:additive}}]
By definition:
\begin{equation*}
\begin{split}
    \AD{}{M} & = \{ mG_{\vec{u}} \mid mG \in M, \vec{u} \in \chi(\sigma), \sigma \text{ is a } nme \text{ of } mG \} \\
    & = \bigcup_{mG \in M} \{ mG_{\vec{u}} \mid \vec{u} \in \chi(\sigma), \sigma \text{ is a } nme \text{ of } mG \}   = \bigcup_{mG\in M} \AD{}{\{mG\}}
\end{split}
\end{equation*}
\end{proof}

\begin{proof}[Proof of Proposition~{\upshape\ref{prop:mG eq mG'}}]
Observe that $mG_i \in \AD{}{\AD{}{\dots \AD{}{\{mG_i\} }}} = \AD{(n)}{\{mG_i\}}$ $\forall i \in [n]$. 
Suppose, for the sake of contradiction, that $mG_i \neq mG_j$ for some $i,j$, and assume, without loss of generality, that $i < j$. Then, it holds that $mG_j \in \AD{(j-i)}{\{ mG_i\} }$, i.e., $mG_j$ has resulted from $mG_i$ by updating some (at least $1$ and at most $j-i$) elements of the respective payoff matrices of $mG_i$. But then, we also have that $mG_i \in \AD{(n-j+i)}{\{ mG_i \}}$ (by the periodic pattern above), so again, $mG_i$ has resulted from $mG_j$ by updating some (at least $1$ and at most $n-j+i$) elements of the respective payoff matrices of $mG_j$. But this is an absurdity, because replacements are cumulative and cannot be ``undone'' by subsequent ones (see Definitions \ref{def:vec_def} and \ref{def:AD(M)-adaptation procedure iterative process}).
\end{proof}

\begin{proof}[Proof of Proposition~{\upshape\ref{prop:mG-in-terminal-set}}]
    Let the misinformation game $mG^{\prime}$, for which it holds $mG^\prime \in \AD{(1)}{\{mG^\prime\}}$. It is easy to see, through induction, that $mG^\prime \in \AD{(\tau)}{\{mG^\prime\}}$ for all $\tau \geq 1$. Since, $mG^{\prime} \in \AD{(t)}{\{mG^{(0)}\}}$, then $mG^\prime \in \AD{\infty}{\{\mGt{0}\}}$.
\end{proof}

\begin{proof}[Proof of Proposition~{\upshape\ref{prop:algo-terminal-set-sme}}]
The first bullet is a consequence of the Definition~\ref{def:endingcriterion}. For the second bullet we have that, since $\sigma \in SME(\mGt{0})$ applying the Definition~\ref{def:sme} there must be a misinformation game $mG$ in the terminal set, such that $\sigma$ is a $nme$ in $mG$. 
\end{proof}

\begin{proof}[Proof of Proposition~{\upshape\ref{thm:bounded mG}}]
    Observe that in any given branch of the Adaptation Procedure, a certain position can be updated at most once. Given that the number of positions is $\vert S \vert$, we conclude.
\end{proof}

\begin{proof}[Proof of Proposition~{\upshape\ref{prop:finite-mgs}}]
    We use two ways to upper bound the number of misinformation games $\vert \AD{*}{\{mG\}} \vert$. Firstly, note that $\vert \AD{*}{\{mG\}} \vert \leq 2^{\vert S \vert}$, since there cannot be generated more misinformation games, throughout the adaptation procedure, than the powerset of position vectors $S$.
    
    For the other lower bound, notice that, in the worst case the Adaptation Procedure will produce a tree with $\LAD{mG}$ height. Each node in this tree will have at most $\vert S \vert$ children. We have,
    \begin{align*}
        \vert \AD{*}{\{mG\}} \vert  & \leq \sum_{i=1}^{\LAD{mG}} \vert S \vert^i \\
        & = \frac{\vert S \vert\left( \vert S \vert^{\LAD{mG}} - 1\right)}{\vert S \vert - 1} \\
        &\leq \vert S \vert^{\LAD{mG} + 1}
    \end{align*}
    Note that in the last inequality we assumed $\vert S \vert \geq 1$, without loss of generality. Thus, we showed the desideratum.
\end{proof}

\begin{proof}[Proof of Corollary~{\upshape\ref{cor:ad-infty-bound}}]
    Since $\AD{\infty}{mG} \subseteq \AD{*}{mG}$, the desideratum holds from Proposition \ref{prop:finite-mgs}.
\end{proof}

\begin{proof}[Proof of Theorem~{\upshape\ref{theo:stable-set-characterisation}}]
Let $\mathcal{S} = \AD{\infty}{mG}$ to be the Stable Set of the Adaptation Procedure on $mG^0$. It holds $\AD{}{\mathcal{S}} = \mathcal{S}$. We prove that $\AD{*}{\mathcal{T}} \subseteq \mathcal{S}$ and $\mathcal{S} \subseteq \AD{*}{\mathcal{T}}$. The direction $\AD{*}{\mathcal{T}} \subseteq \mathcal{S}$ holds trivially. Indeed, from Proposition \ref{prop:algo-terminal-set-sme} we have $\mathcal{T} \subseteq \mathcal{S}$. On the other hand, since $\mathcal{S}$ is the Stable Set, we have $\AD{}{\mathcal{S}} = \mathcal{S}$. From Proposition \ref{prop:additive}, we take $\AD{}{\mathcal{S}} = \AD{}{\mathcal{S} \setminus \mathcal{T}} \cup \AD{}{\mathcal{T}}$. Thus, $\AD{*}{\mathcal{T}} \subseteq \AD{}{\mathcal{S}}$.

For the direction $\mathcal{S} \subseteq \AD{*}{\mathcal{T}}$, we work as follows. For the sake of contradiction, assume $mG \in \mathcal{S} \setminus \AD{*}{\mathcal{T}}$. We consider the set $C$ of all ancestors of $mG$ that belong to $\mathcal{S}$.
\begin{equation}
    \label{eq:ancestor-set}
    C = \left\{mG^\prime \in \mathcal{S} \mid mG \in \AD{k}{mG^\prime}, k \in \mathbb{N} \cup \{0\}\right\}.
\end{equation}
                
Note that $C \neq \emptyset$, since $mG \in \AD{0}{mG}$, hence $mG \in C$. We choose some $mG^{\prime\prime} \in C$, for which there is not an $mG^\prime \in C$, such that $mG^{\prime\prime} \in \AD{}{mG^\prime}$. Observe that $C \cap \mathcal{T} = \emptyset$, since $mG \notin \AD{*}{\mathcal{T}}$. Therefore, such an $mG^{\prime\prime}$ exists. Intuitively, $mG^{\prime\prime}$ is the oldest ancestor%
\footnote{%
    Formally, the $\AD{}{\cdot}$ operator defines a transitive, anti-symmetric (but not reflexive) relation $\preceq_{\mathcal{AD}}$ on the sets of all misinformation games $\AD{*}{mG^{(0)}}$ generated from $mG^{(0)}$. From Proposition \ref{prop:finite-mgs} we have that $\AD{*}{mG^{(0)}}$ is finite. Thus, there are maximal elements in $\preceq_{\mathcal{AD}}$. Moreover, we have maximal elements in $C$, since $C \subseteq \AD{*}{mG^{(0)}}$. The \emph{"oldest ancestors"}, we discuss here, are these maximal elements of $\preceq_{\mathcal{AD}}$, restricted on $C$.
}
of $mG$, that belongs to the Stable Set. This oldest ancestor exists, since non of the elements of $C$ generates itself. Moreover, $mG^{\prime\prime}$ has no ancestor in $\AD{*}{mG^{(0)}}$. Otherwise, it would also be an ancestor of $mG$, thus belonging to $C$, and $mG^{\prime\prime}$ would not be the oldest ancestor. From the above, there is no element in $\mathcal{S}$ to generate $mG^{\prime\prime}$. Hence, $mG^{\prime\prime} \in \mathcal{S} \setminus \AD{}{\mathcal{S}}$, which is a contradiction.
\end{proof}

\begin{proof}[Proof of Proposition~{\upshape\ref{prop:mG leq mG'}}]
Suppose that $mG' \in \AD{(t_0)}{\{mG\}}$ for some $t_0 \geq 0$. 
Since $mG' \in \AD{}{\{ mG' \}}$, it follows that $mG' \in \AD{(t)}{\{mG\}}$ for all $t \geq t_0$, thus, $mG' \in \AD{\infty}{\{mG\}}$. 
Since $\AD{}{\{ mG' \}} = \{mG'\}$, it is clear that for all 
$\sigma \in NME(mG')$ and for all $\vec{v} \in \chi(\sigma)$, it holds that $mG'_{\vec{v}} = mG'$. Now the result is direct from Definition \ref{def:sme}.
\end{proof}

\begin{proof}[Proof of Proposition~{\upshape\ref{prop:existence}}]
Set $S = \AD{\infty}{\{mG\}}$.
For any given $mG_1, mG_2 \in S$, we define the relation $\mapsto$, such that $mG_1 \mapsto mG_2$ iff $mG_1 \neq mG_2$ and $mG_2 \in \AD{}{\{ mG_1 \}}$.
Now let us suppose, for the sake of contradiction, that $mG$ has no $sme$.
By Proposition \ref{prop:mG leq mG'}, it follows that for any $mG' \in S$ there exists some $mG'' \in S$ such that $mG' \mapsto mG''$ (otherwise $SME(mG) \neq \emptyset$ by Proposition \ref{prop:mG leq mG'}, which contradicts our hypothesis).
Since $S$ is finite (see Theorem \ref{thm:bounded mG}), there must exist a sequence of $mG_1, \dots, mG_n \in S$, such that $mG_i \mapsto mG_{i+1}$ (for $i \in [n-1]$) and $mG_n \mapsto mG_1$. Which is an absurdity by the definition of $\mapsto$ and Proposition \ref{prop:mG eq mG'}.
\end{proof}

\begin{proof}[Proof of Proposition~{\upshape\ref{theo:sme-complexity}}]
Assume a $N$-player misinformation game $mG = \langle G^0, G^1, \dots, G^N \rangle$, where $S^1 = S^2 = \cdots = S^N$. 
Further, let $P^0$ be the payoff matrix for the actual game $G^0$, and $P$ the payoff matrices for the $G^i$, for all $i$. 
For $P$ we use the dovetail principle to assign distinct integers, to each cell, form $\sum_{k=1}^{\vert N \vert} \vert S^k \vert$ to $1$, starting from $P[\vert S^1 \vert, \vert S^2 \vert, \dots, \vert S^{\vert N \vert}\vert]$. 
On the other hand, we take $P^0 = -P$.
It is easy to see that initially each player $i$ picks the $\vert S^i \vert$-th pure strategy, so the strategy profile $(\vert S^1 \vert, \vert S^2 \vert, \dots, \vert S^{\vert N \vert} \vert)$ is a pure Nash equilibrium in dominated strategies, for each player, constituting an $nme$. After a single step of the Adaptation Procedure, the players will be informed that $P^0[i_1, i_2, \dots, i_n] = -P[i_1, i_2, \dots, i_n]$. Using the same rationale as before, the next $nme$ chosen by the players $(\vert S^1 \vert -1, \vert S^2 \vert -1 , \dots, \vert S^{\vert N \vert}\vert  -1)$, which again will be updated to opposite value. Inductively, we can prove that the Adaptation Procedure will take $\vert S \vert$ steps.
\end{proof}

\subsection{Computing the Adaptation Procedure}

\begin{proof}[Proof of Proposition~{\upshape\ref{prop:adapt-graph-dag}}]
    It suffices to show that $\Gamma^\prime$ does not contain directed circles. This is an immediate consequence of Proposition \ref{prop:mG eq mG'}.
\end{proof}

\begin{proof}[Proof of Proposition~{\upshape\ref{prop:adapt-graph-source}}]
    We first show that there can be no other source $mG^\prime \in \AD{*}{mG} \setminus \{ mG \}$. For the sake of contradiction, let the misinformation game $mG^\prime \in \AD{*}{mG}$, with $mG^\prime \neq mG$, and $d^{-}_{\Gamma^\prime}(mG^\prime) = 0$. Let $k \in \mathbb{N}$ be the smallest integer such that $mG^\prime \in \AD{k}{mG}$. Therefore, $mG^\prime \notin \AD{k-1}{mG}$. Nevertheless, from Definition \ref{def:AD(M)-adaptation procedure iterative process}, $\AD{k}{mG} = \AD{}{\AD{k-1}{mG}}$. Thus, there is some $mG^{\prime\prime} \in \AD{k-1}{mG}$, with $mG^\prime \in \AD{}{mG^{\prime\prime}}$. We reached a contradiction, since we assumed that $mG^\prime$ is a source. 
    
    It remains to show that $d^{-}_{\Gamma^\prime}(mG) = 0$. Suppose, for the sake of contradiction, that $d^{-}_{\Gamma^\prime}(mG) > 0$, so there is some $mG^\prime \in \AD{*}{mG}$, $mG^\prime \neq mG$ such that $mG \in \AD{}{mG^\prime}$. Repeating the above argument, we conclude that there is some sequence of position vectors $X$ such that $mG = mG_X$, which is impossible by Proposition \ref{prop:mG eq mG'}.
\end{proof}

\begin{proof}[Proof of Proposition~{\upshape\ref{prop:adapt-graph-sinks}}]
    Assume some sink node $mG^\prime$, i.e., $mG^\prime \in K$. Since $mG^\prime$ is a sink in $\Gamma^\prime$, any outgoing edges from $mG^\prime$ in $\Gamma$ are self loops. This shows that $\AD{}{mG^\prime} = \{mG^\prime\}$, and, thus, $mG^\prime \in \mathcal T$, which proves the first claim of the proposition.

    For the second claim, we first observe that, as shown above, $\AD{}{mG^\prime} = \{mG^\prime\}$ for all $mG^\prime \in K$, and thus $K \subseteq \AD{\infty}{mG}$. Now take some $\sigma \in NME(mG^\prime)$ for some $mG^\prime \in K \subseteq \AD{\infty}{mG}$. 
    Given that $\AD{}{mG^\prime} = \{mG^\prime\}$, it is easy to confirm that all the conditions of Definition \ref{def:sme} hold, and thus $\sigma$ is an $sme$.
\end{proof}

\begin{proof}[Proof of Proposition~{\upshape\ref{prop:adapt-graph-longest-path}}]
    An immediate consequence of Proposition \ref{thm:bounded mG}.
\end{proof}

\begin{proof}[Proof of Proposition~{\upshape\ref{prop:semantic-equivalence-complexity}}]
Since $mG^1, mG^2$ are canonical misinformation games on the same number of players $\vert N \vert$ and the same number of strategies $\vert S \vert$, they can be represented by $\vert N \vert + 1$ matrices each with 
$\vert S \vert$ cells. Each cell of the matrices will contain a $\vert N \vert$-dimensional vector containing the payoffs for each player.
To show equality, we have to compare each component of each of those vectors from these misinformation games, which leads us to the result.
\end{proof}

\begin{proof}[Proof of Proposition~{\upshape\ref{prop:child-syntactic-equivalence-complexity}}]
Since $\theta[Y] = \theta[X]_{\vec{v}}$, in order to determine if $Y = X$, it suffices to show that $\vec{v} \in X$. Since, we keep the position vectors sets as ordered list, the membership query $\vec{v} \in X$ can be done in $O(\log \vert S \vert)$ time.
\end{proof}

\begin{proof}[Proof of Lemma~{\upshape\ref{lem:correctness}}]
It suffices to show that for each $mG \in \mathcal{T}$ there exists some $i \in \mathbb{N}$ such that $mG \in \mathcal{Q}_i$. For the sake of contradiction assume that there is no such $\mathcal{Q}_i$ for $mG$. Let $mG^\prime$ be the most recent ancestor of $mG$ that belongs to some $\mathcal{Q}_j$, $j \geq 0$. There is such $mG^\prime$, since $mG^0$ is an ancestor of $mG$ and $mG^0 \in \mathcal{Q}_0$. Observe that all the \emph{proper} descendants $\AD{}{mG^\prime} \setminus \{mG^\prime\}$ of $mG^\prime$ belong to $\mathcal{Q}_{j+1}$. This is a contradiction from the choice of $mG^\prime$; thus, the result follows.
\end{proof}

\begin{proof}[Proof of Proposition~{\upshape\ref{prop:algo-3-correctness}}]
From equation \eqref{eq:queue-rec}, if $\mathcal{Q}_i = \emptyset$, then $\mathcal{Q}_{i+1} = \emptyset$, and $\mathcal{T}_{i+1} = \emptyset$. Thus, $\cup_{k=1}^{i+1} \mathcal{T}_k = \cup_{k=1}^{\infty} \mathcal{T}_k$. From Lemma \ref{lem:correctness}, $\cup_{k=1}^{i+1} \mathcal{T}_k = \mathcal{T}$. Then, from equation \eqref{eq:stable-rec}, we have $M_{i+1} = \AD{i+1}{\cup_{k=1}^{i+1} \mathcal{T}_k}$, since $\mathcal{Q}_{i+1} = \emptyset$. Moreover, it holds $\AD{i+1}{\cup_{k=1}^{i+1} \mathcal{T}_k} = \AD{*}{\cup_{k=1}^{i+1} \mathcal{T}_k}$. Observe that after the $(i+1)$-th step, no additional misinformation games will be generated, otherwise, $Q_{i+1} \neq \emptyset$. Lastly, $\AD{*}{\cup_{k=1}^{i+1} \mathcal{T}_k} = \AD{*}{\mathcal{T}}$, from the above. Therefore $M_{i+1} = \AD{*}{\mathcal{T}}$. From Theorem \ref{theo:stable-set-characterisation}, $M_{i+1}$ is a Stable Set, and the desideratum holds.
\end{proof}

\begin{proof}[Proof of Proposition~{\upshape\ref{prop:graph-traversal-complexity}}]
As we noted in the previous subsection, for the set-based representation approximation $\widetilde{\Gamma} = (2^S, \widetilde{E})$ of the adaptation graph. For each of these nodes, we compute its $nme$s in $ \vert N \vert \mathtt{t}(\textbf{NASH})$ time. These $nme$s may result in at most $\vert S \vert$ position vectors. Lastly, in the worst case, for each of these position vectors, we will perform an equality check of $O(\vert N \vert^2 \vert S \vert)$ time steps (see Proposition \ref{prop:semantic-equivalence-complexity}).
\end{proof}

\begin{proof}[Proof of Theorem~{\upshape\ref{theo:complexity-corectness}}]
The complexity of Algorithm \ref{algo:algo-2} is established in Proposition \ref{prop:graph-traversal-complexity}. We argue about the complexity of Algorithm \ref{algo:algo-3}. We note that for each misinformation game $mG_X$, such that $X$ belongs to the set-based representation $\widetilde{\mathcal{T}}$ of the terminal set, and for each position vector resulting from an $nme$, we perform an equivalence check. The number of position vectors is at most $\vert S \vert$.
\end{proof}

\begin{proof}[Proof of Corollary~{\upshape\ref{cor:maximal-path}}]
    Since $mG^\prime$ is the last node on a maximal path from $mG$ in $\Gamma^\prime$, then it is a sink in $\Gamma^\prime$. From Proposition \ref{prop:adapt-graph-sinks} we obtain the desideratum.
\end{proof}

\begin{proof}[Proof of Theorem~{\upshape\ref{theo:complexity-1-sme-upper-bound}}]
Assume a misinformation game $mG$. The \textbf{while} loop  of steps \ref{algo-Traverse-Adaptation-Graph-start-while}--\ref{algo-Traverse-Adaptation-Graph-end-while} in Algorithm \ref{algo:algo-2} will make $\LAD{mG}$ steps. On the other hand, $\LAD{mG} = \Theta(\vert S \vert)$. In each step, we compute the $nme$s, consuming $O(\vert N \vert  \cdot t(\textbf{NASH})$ time.  A single $nme$ $\sigma$ may have at most $\vert N \vert$ position vectors, i.e. $\vert \chi(\sigma) \vert \leq \vert N \vert$. For each position vector we perform a equality check, consuming time $O(\vert N \vert^2 \cdot \vert S \vert)$.
\end{proof}


\end{document}